\documentclass[a4paper,twocolumn,11pt]{quantumarticle}
\pdfoutput=1
\makeatletter
\providecommand\@afterenddocumenthook{}
\makeatother

\usepackage[utf8]{inputenc}
\usepackage[english]{babel}
\usepackage[T1]{fontenc}

\usepackage{amsmath,amssymb,amsthm,mathtools}

\usepackage{tikz}
\usetikzlibrary{matrix,positioning,arrows.meta}
\usepackage{booktabs}
\usepackage{xcolor}
\usepackage{mdframed}

\usepackage{enumitem}
\usepackage{lipsum}
\usepackage{appendix}
\usepackage[numbers,sort&compress]{natbib}

\usepackage{hyperref}

\newmdenv[
  linewidth=0.6pt,
  linecolor=black!50,
  backgroundcolor=black!3,
  innertopmargin=8pt,
  innerbottommargin=8pt,
  innerleftmargin=12pt,
  innerrightmargin=12pt,
  skipabove=8pt,
  skipbelow=8pt,
]{summarybox}

\newtheorem{theorem}{Theorem}[section]
\newtheorem{proposition}[theorem]{Proposition}
\newtheorem{lemma}[theorem]{Lemma}
\newtheorem{corollary}[theorem]{Corollary}

\theoremstyle{definition}
\newtheorem{definition}[theorem]{Definition}
\newtheorem{assumption}[theorem]{Assumption}
\theoremstyle{remark}
\newtheorem{remark}[theorem]{Remark}

\DeclareMathOperator{\rt}{rt}
\DeclareMathOperator{\diam}{diam}
\DeclareMathOperator{\Spec}{Spec}
\DeclareMathOperator{\Cay}{Cay}
\DeclareMathOperator{\GL}{GL}
\title{Permutation Routing on Ramanujan Hypergraphs with Applications to Neutral Atom Quantum Architectures}
\author{Joshua M. Courtney}
\affiliation{University of Georgia, Department of Physics and Astronomy}
\orcid{0009-0002-1631-4277}
\email{Joshua.Courtney1@uga.edu}

\begin{document}
\maketitle

\begin{abstract}
We consider the routing of neutral atoms on a reconfigurable lattice in terms of hypergraph transformations.
We prove the routing number of a Ramanujan $(d,r)$-regular hypergraph on $N$ vertices satisfies $\rt(H) = \Theta(\log N)$, where routing is via matchings in the clique expansion graph $G_{\mathrm{cl}}(H)$. 
Hypergraphs reframe the qubit routing problem by replacing Nenadov's two-sided spectral gap hypothesis with a one-sided condition based on eigenvalue centering. 
Song--Fan--Miao (SFM) coverings scale for Ramanujan families of every uniformity. 
A virtual overlay theorem establishes a capacity--depth tradeoff for 3D acousto-optic lens (AOL) architectures, with multi-layer stacking achieving $\Theta(\log N)$ routing with $L = O(\log N)$ independent overlay layers. 
An abelian Alon--Boppana barrier shows that fixed-degree Cayley graphs on $\mathbb{Z}_n^2$ cannot be Ramanujan and affine derandomization on such graphs achieves 15--30\% congestion reduction. 
Towers of $k$-fold Ramanujan coverings yield $\rt(H_L) = O(\log N)$ conditional on uniform spectral control $\sup_\ell \beta(H_\ell) < 1$. We discuss when this holds in \S\ref{subsec:covering-towers}.
Entanglement-assisted routing by pre-distributed Bell pairs achieves $O(\log N)$ teleportation depth with a stable crossover at $\sim\!4$ routing rounds. 
Displacement energy analyzes greedy adaptive routing, identifying stalling and a hybrid greedy--Valiant protocol achieving $\sim\!3\times$ speedup at practical scales.
Hierarchical multi-scale routing achieves $O(\log^2 N / \log b)$ depth with boundary-only transfers at capacity $k = O(\sqrt{N} \log N)$, and $O(\log N)$ depth with optimal block size $b = \Theta(\sqrt{n})$.
\end{abstract}

\section{Introduction}\label{sec:intro}

Reconfigurable neutral atom arrays have emerged as a leading platform for scalable quantum computing~\cite{constantinides2024optimal,stade2024abstract,wang2024atomique, bluvstein2022quantum, evered2023high, bluvstein2024logical, wurtz2023aquila}.
In these architectures, qubits are encoded in individual atoms trapped by optical tweezers or acousto-optic deflectors (AODs), and entangling gates are performed by bringing atom pairs within the Rydberg blockade radius~\cite{bluvstein2022quantum, evered2023high}.
Between gate layers, atoms must be rearranged (routed) to bring the next set of interacting pairs into proximity.
Routing depth becomes a dominant contributor to overall circuit depth, computational time, and by extension, gate fidelity.

Given $N$ atoms and a target permutation $\pi \in S_N$ (the symmetric group on the $N$ atoms), we ask how many parallel rearrangement steps are needed?
The answer depends on connectivity topology (which atom pairs can interact) and move model (which simultaneous rearrangements are physically allowed).

\subsection{Four routing regimes}

We identify four regimes capturing the known asymptotic results (Table~\ref{tab:regimes}):

\begin{table*}[!htbp]
\begin{center}
\renewcommand{\arraystretch}{1.3}
\begin{tabular}{@{}llcl@{}}
\toprule
\textbf{Topology} & \textbf{Move model} & \textbf{Routing depth} & \textbf{Source} \\
\midrule
Grid & Arbitrary matchings & $\Theta(\sqrt{N})$ & Folklore$^*$ \\
Grid & AOD row/column moves & $\Theta(\sqrt{N}\log N)$ & Constantinides et al.\ \cite{constantinides2024optimal} \\
Ramanujan & Arbitrary matchings & $\Theta(\log N)$ & This work \\
Grid & Selective transfers & $\Theta(\log N)$ & Constantinides et al.\ \cite{constantinides2024optimal} \\
\bottomrule
\end{tabular}
\caption{Four routing regimes for neutral atom architectures.}
\label{tab:regimes}
\end{center}
\end{table*}

The grid result $\Omega(\sqrt{N})$ follows from the isoperimetric inequality (moving $N$ pebbles across an $O(\sqrt{N})$-edge bisection requires $\Omega(\sqrt{N})$ steps), and $O(\sqrt{N})$ is achieved by sorting rows then columns.

The $\log N$ gap between the first two rows comes from the AOD row/column non-crossing constraint. 
The $\sqrt{N}$ gap between the first and third rows is topological, where grid hypergraphs have spectral ratio $\beta \to 1$ as $N \to \infty$, while Ramanujan hypergraphs maintain $\beta < 1$ bounded away from 1.  
We (row 3) match the results of Constantinides et al.~\cite{constantinides2024optimal}), achieving $\Theta(\log N)$ with matchings on a Ramanujan overlay, being a weaker primitive than the selective-transfer hardware feature of row 4. 
The same proof extends to multi-layer AOL via the overlay theorem (\S\ref{sec:overlay}), being our main contribution to qubit routing. 

Our main result is as follows. Take $N$ atoms arranged on the vertices of a Ramanujan hypergraph and a target permutation $\pi$. There is a sequence of $\Theta(\log N)$ ``matching steps'' that pair disjoint atoms and swaps them or their contents that realizes $\pi$. 
The constant prefactor depends on the spectral gap, but asymptotic $\log N$ scaling matches selective-transfer routing depth on grids~\cite{constantinides2024optimal} using only matchings, a weak move primitive available across reconfigurable platforms (AOD, AOL, ion shuttling, optical lattices).
The bound transfers to every platform up to a multiplicative overhead, isolating the AOD and selective-transfer cases tabulated in Table~\ref{tab:regimes}. 
For next-generation neutral-atom processors with $N \sim 10^4$ atoms~\cite{reichardt2024logical}, the bound saves a factor of $\sqrt{N}/\log N \approx 7$ over grid sorting, translating to roughly $7\times$ as many gate layers within a fixed coherence budget.

\subsection{Why hypergraphs?}\label{subsec:why-hypergraphs}

Although proofs reduce to spectral arguments on the clique expansion graph $G_{\mathrm{cl}}(H)$, we find the hypergraph picture makes the bounds constructive and scalable while retaining faithfulness to the physical quantum computer when constrained. 
\begin{enumerate}[leftmargin=1.5em]
\item \textbf{Constructive scaling via SFM coverings.} Hypergraph $k$-fold liftings (Song--Fan--Miao Theorem~1.6,~\cite{song2023hypergraph}) yield Ramanujan families of every uniformity $r$, with explicit voltage-assignment constructions (\S\ref{subsec:covering-towers}). Bipartite graph constructions like Marcus--Spielman--Srivastava~\cite{marcus2013interlacing} instead fix degree but not the hypergraph structure relevant to multi-qubit gates.

\item \textbf{Eigenvalue centering at $r-2 \geq 1$.} For graphs ($r=2$), the Ramanujan band is symmetric and centered at $0$. 
Controlling $\lambda^* = \max(\lambda_2, |\lambda_N|)$ requires a two-sided spectral gap (Nenadov~\cite{nenadov2023routing}).
When $r \geq 3$, the SFM band is centered at $r-2$ and the upper end dominates. 
One-sided control is sufficient, resulting in Theorem~\ref{thm:main} on hypergraphs (Lemma~\ref{lem:lambda-star}).

\item \textbf{Physical hyperedges are real.} AOD column moves act simultaneously on $r$ consecutive atoms, while Rydberg blockade gates couple multiple atoms within the blockade radius and surface-code stabilizers couple 4 atoms, giving a hyperedge-shaped quantum hardware (\S\ref{sec:application}). 

\item \textbf{The packing model is hypergraph-native.} The alternative routing number $\rt_H$ (Remark~\ref{rem:packing}) selects vertex-disjoint hyperedge packings, applying arbitrary $\tau \in S_r$ within each. We conjecture that $\rt_H = \Theta(\max\{\log N, N/(r\nu(H))\})$ is intrinsically hypergraph-shaped in \S\ref{sec:discussion}.
\end{enumerate}

\subsection{Main contributions}

Our results are organized as follows.

\begin{enumerate}[leftmargin=1.5em]
\item \textbf{Main theorem (Theorem~\ref{thm:main}).}  For any Ramanujan $(d,r)$-regular hypergraph $H$ on $N$ vertices with $d \geq 3$, $r \geq 3$, the routing number satisfies $\rt(H) = \Theta(\log N)$.  Explicit constants are given in Theorem~\ref{thm:tight}.

\item \textbf{Structural observation (Lemma~\ref{lem:lambda-star}).}  For $r \geq 3$, the SFM Ramanujan condition centers non-trivial eigenvalues around $r - 2 > 0$, ensuring the spectral radius $\lambda^* \leq (r-2) + 2\sqrt{(d-1)(r-1)} < d'$ and the spectral ratio $\beta = \lambda^*/d' < 1$.

\item \textbf{Constructive derandomization (Theorem~\ref{thm:constructive}).}  The intermediate permutation in Valiant's scheme can be constructed in $O(N^2 d')$ time via pessimistic estimator.

\item \textbf{Virtual overlay theorem (Theorem~\ref{thm:overlay}).}  Any $d$-regular Ramanujan graph can serve as a virtual overlay for 3D AOL routing, giving $T = O((N/k) \cdot \log N/(1-\beta))$.  Multi-layer stacking with $L = O(\log N)$ independent random overlay layers achieves $\Theta(\log N)$ routing.

\item \textbf{Abelian barrier and affine derandomization (Theorems~\ref{thm:abelian-barrier} and~\ref{thm:affine-derand}).}  Degree-$d$ Cayley graphs on $\mathbb{Z}_n^2$ cannot be Ramanujan for fixed $d$ and $n \to \infty$.  Nevertheless, affine intermediate permutations $\sigma(v) = Av + c$ achieve 15--30\% congestion reduction.

\item \textbf{Covering tower routing (Theorem~\ref{thm:covering-tower}).}  Towers of $k$-fold Ramanujan coverings yield $\rt(H_L) = O(L \log k + \log N_0) = O(\log N)$, with $\sim$94\% of random $\mathbb{Z}_2$ voltage assignments preserving the Ramanujan property for Fano plane lifts.

\item \textbf{Entanglement-assisted routing (Theorem~\ref{thm:teleport}).}  Pre-distributed Bell pairs along a Ramanujan overlay achieve $O(\log N)$ routing via teleportation with a stable crossover at $R_{\mathrm{break}} \approx 4$ rounds.

\item \textbf{Displacement energy framework (Theorems~\ref{thm:greedy-monotone}, \ref{thm:greedy-stall}).} Greedy displacement matching is monotone, stalling at $\Phi_{\mathrm{stall}}/\Phi_0 = O(D^2/N)$ after $O(D\log N)$ steps under the empirically validated concentration assumption (Assumption~\ref{ass:concentration}).  At practical scales ($N \leq 10^4$, $d = 8$), this resolves $\sim\!83\%$ of displacement energy.  A hybrid greedy--Valiant protocol achieves $\sim\!3\times$ speedup.

\item \textbf{Hierarchical multi-scale routing (Theorem~\ref{thm:hierarchical}).}  Block decomposition with $L = \lceil \log_b n \rceil$ levels achieves $T = O(\log^2 N / \log b)$, and $T = O(\log N)$ with optimal block size $b = \Theta(\sqrt{n})$.  Boundary-only mode requires capacity $k = O(\sqrt{N}\log N)$.
\end{enumerate}

\subsection{Proof technique and novelty}

The proof of Theorem~\ref{thm:main} combines Valiant's two-phase randomized routing \cite{valiant1982scheme}, negative association of random permutations \cite{joag1983negative,dubhashi1996balls}, and the Leighton--Maggs--Rao packet scheduling theorem \cite{leighton1999fast, rabani1996distributed}.
The contribution lies in the application, identifying clique expansions of Ramanujan hypergraphs as a natural class of expanders for qubit routing and finding applicable regimes relating mathematical bounds to quantum hardware. 
The six extensions in Sections~\ref{sec:overlay}--\ref{sec:adaptive} develop practical consequences for architecture design.

\subsection{Related work}

Alon, Chung, and Graham \cite{alon1993routing} initiated the study of permutation routing on expanders via matchings. 
Nenadov \cite{nenadov2023routing} proved $\rt(G) = O(\log N)$ for graphs with two-sided spectral gap $\lambda(G) < d/72$, using the Feldman--Friedman--Pippenger nonblocking property \cite{feldman1988wide}.

Song, Fan, and Miao \cite{song2023hypergraph} developed the spectral theory of hypergraph coverings.
Earlier eigenvalue treatments (Friedman--Wigderson \cite{friedman1995second}) and Ramanujan complex constructions (Lubotzky--Samuels--Vishne \cite{lubotzky2005ramanujan}) situate the SFM work in the hypergraph spectral landscape. 
Marcus, Spielman, and Srivastava \cite{marcus2013interlacing} showed that bipartite Ramanujan graphs of every degree exist. 
Friedman \cite{friedman2008proof} proved Alon's conjecture that random $d$-regular graphs have $\lambda_2 \leq 2\sqrt{d-1} + \varepsilon$ w.h.p.
Bordenave \cite{bordenave2015new} gave a streamlined proof and extended to random lifts, which we use for overlay constructions. 
The Alon--Boppana lower bound \cite{alon1986eigenvalues} and its abelian refinements inform our barrier result for Cayley graphs on $\mathbb{Z}_n^2$.

Yuan and Zhang \cite{yuan2025full} characterized depth overhead for quantum circuit compilation.
Experimental neutral-atom routing has progressed rapidly.
Coherent-transport architecture of Bluvstein et al.~\cite{bluvstein2022quantum} established the move-then-gate cycle that our model abstracts, made alongisde parallel Rydberg gate operations now reaching 99.5\% fidelity~\cite{evered2023high}.
For neutral atom routing, Constantinides et al.\ \cite{constantinides2024optimal} proved tight bounds for grid and selective transfers.
Compiler-oriented work includes Stade et al.\ \cite{stade2024abstract,stade2025routing}, Wang et al.\ \cite{wang2024atomique}, Hsieh--Mak \cite{hsieh2026scalable}, Rom\~{a}o et al.\ \cite{romao2026multiq}, and Tan, Tan, and Cong \cite{tan2024compiling}. 
The 3D AOL architecture is described by Guo et al.\ \cite{guo2025acousto}. 
Experimental baselines for the spectral and routing parameters used in \S\ref{sec:application} draw on Levine et al.~\cite{levine2019v, evered2023high} for high-fidelity multi-qubit gates and Ebadi et al.\ \cite{ebadi2021quantum} for the 256-atom benchmark. 
Our multiplicative weights analysis draws on the framework of Arora, Hazan, and Kale \cite{arora2012multiplicative}.

\section{Preliminaries}\label{sec:prelim}

\subsection{Hypergraphs and the Ramanujan condition}

A \emph{$(d,r)$-regular hypergraph} $H = (V, \mathcal{E})$ on $N = |V|$ vertices has every vertex in exactly $d$ hyperedges, each of size $r$.  The \emph{adjacency matrix} $A(H)$ is the $N \times N$ matrix defined by Song, Fan, and Miao \cite{song2023hypergraph}:
\begin{equation}
    \begin{split}
        A(H)_{uv} &= |\{e \in \mathcal{E} : u, v \in e\}|,\,(u \neq v),\\ \qquad A(H)_{uu} &= 0.
    \end{split}
\end{equation}

This matrix is real symmetric with largest eigenvalue $\lambda_1 = d(r-1)$ and eigenvector $\mathbf{1}/\sqrt{N}$.

\begin{definition}[Ramanujan hypergraph {\cite{song2023hypergraph}}]
A $(d,r)$-regular hypergraph $H$ is \emph{Ramanujan} if every non-trivial eigenvalue $\lambda_i$ ($i \geq 2$) satisfies
$|\lambda_i - (r-2)| \leq 2\sqrt{(d-1)(r-1)}$.
\end{definition}

\subsection{Clique expansion and spectral identity}

The \emph{clique expansion} $G_{\mathrm{cl}}(H)$ is the \emph{weighted} graph on vertex set $V$ with adjacency matrix $A(H)$ (so the weight on $\{u,v\}$ is the number of hyperedges containing both vertices).  
When every pair of vertices shares at most one hyperedge (holding for projective planes,
generalized quadrangles, and random regular hypergraphs used here), the weighted clique expansion coincides with the simple
graph in which $\{u,v\}$ is an edge if and only if $u$ and $v$ share a hyperedge.

\begin{proposition}[Spectral identity]\label{prop:spectral-id}
The adjacency matrix of $G_{\mathrm{cl}}(H)$ equals $A(H)$.  In particular, $G_{\mathrm{cl}}(H)$ is $d(r-1)$-regular with eigenvalues $\lambda_1 \geq \lambda_2 \geq \cdots \geq \lambda_N$ of $A(H)$.
\end{proposition}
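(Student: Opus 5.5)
The first claim holds by definition, and the remaining work is a single counting argument for the degree. By construction, $G_{\mathrm{cl}}(H)$ is the weighted graph on $V$ whose edge $\{u,v\}$ carries weight $|\{e\in\mathcal{E}: u,v\in e\}|$ and which has no loops. This is exactly the entry $A(H)_{uv}$ for $u\neq v$, and the diagonal entry $A(H)_{uu}=0$. So the weighted adjacency matrix of $G_{\mathrm{cl}}(H)$ equals $A(H)$ entrywise. The spectrum statement then requires no further argument, since the two matrices are identical.

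For regularity, I compute the weighted degree of an arbitrary vertex $u$, which is the row sum $\sum_{v\neq u} A(H)_{uv}$. I will double count the pairs $(e,v)$ with $e\in\mathcal{E}$, $u,v\in e$ and $v\neq u$:
\[
\sum_{v\neq u}\bigl|\{e\in\mathcal{E}: u,v\in e\}\bigr| \;=\; \sum_{e\ni u}\bigl(|e|-1\bigr) \;=\; d(r-1).
\]
The first equality exchanges the order of summation. The second uses that $u$ lies in exactly $d$ hyperedges, each of size $r$. Every row sum is therefore $d(r-1)$, so $A(H)\mathbf{1}=d(r-1)\mathbf{1}$ and $G_{\mathrm{cl}}(H)$ is $d(r-1)$-regular as a weighted graph. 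When no pair of vertices shares two hyperedges, the weights are $0/1$ and this is ordinary regularity of the simple graph.

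To confirm that $d(r-1)$ is the top eigenvalue $\lambda_1$, I apply Perron--Frobenius, or more directly the bound $\|A\|_\infty = d(r-1)$ on the maximum absolute row sum. This bound controls every eigenvalue in absolute value, so $\lambda_1=d(r-1)$ with eigenvector $\mathbf{1}/\sqrt N$. Since $A(H)$ is real symmetric, its eigenvalues are real and can be ordered as $\lambda_1\ge\cdots\ge\lambda_N$.

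There is no substantive obstacle. The only point needing care is the multiplicity convention: for weights to add consistently, $G_{\mathrm{cl}}(H)$ must be read as a weighted multigraph, or the at-most-one-shared-hyperedge hypothesis must be invoked. Otherwise the simple-graph degree could be smaller than $d(r-1)$.
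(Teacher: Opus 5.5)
Your proposal is correct and follows the paper's route. The paper gives no separate proof: it defines $G_{\mathrm{cl}}(H)$ as the weighted graph with adjacency matrix $A(H)$ and simply asserts the top eigenvalue $d(r-1)$ with eigenvector $\mathbf{1}/\sqrt{N}$. Your double count $\sum_{v\neq u}A(H)_{uv}=\sum_{e\ni u}(|e|-1)=d(r-1)$ and the row-sum bound on the spectral radius supply exactly those omitted steps, and your weighted-versus-simple-graph caveat matches the paper's own multiplicity-free remark.
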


We write $d' = d(r-1)$ for the degree and $\beta = \lambda^* / d'$ for the spectral ratio, where $\lambda^* = \max(\lambda_2, |\lambda_N|)$.

\subsection{Routing number}

\begin{definition}[Routing number]\label{def:routing}
A \emph{routing step} on $G_{\mathrm{cl}}(H)$ selects a matching $M$ and simultaneously swaps the pebbles at the endpoints of each edge in $M$.  The \emph{routing number} $\rt(H)$ is the minimum $T$ such that every permutation $\pi \in S_N$ can be realized by a sequence of $T$ routing steps.

Routing matchings are taken on the underlying simple graph. Spectral parameters $\lambda^*, \beta$ in Lemma~\ref{lem:lambda-star} refer to the weighted adjacency $A(H)$.  For multiplicity-free hypergraphs the two coincide and the matching argument transfers directly.
\end{definition}

\begin{remark}[Packing model]\label{rem:packing}
An alternative model selects vertex-disjoint packings of hyperedges and applies arbitrary $\tau \in S_r$ (the symmetric group on the $r$
elements of a single hyperedge) within each.
This \emph{packing routing number} $\rt_H(H)$ can differ substantially when the \emph{matching number} $\nu(H)$ is small, being the maximum number of pairwise vertex-disjoint hyperedges in $H$.
\end{remark}

\subsection{Spectral gap structure}

\begin{lemma}[Spectral radius bound]\label{lem:lambda-star}
Let $H$ be a Ramanujan $(d,r)$-regular hypergraph with $d \geq 3$ and $r \geq 3$.  Write $\rho = 2\sqrt{(d-1)(r-1)}$.  Then
\begin{equation}
\lambda^* := \max(\lambda_2,\, |\lambda_N|) \leq (r-2) + \rho,
\end{equation}
and the spectral ratio satisfies $\beta := \lambda^*/d' < 1$.
\end{lemma}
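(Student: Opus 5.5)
The plan is to read the Ramanujan condition as a two-sided interval constraint, turn it into a bound on absolute values using $r \geq 3$, and then compare that bound with $d' = d(r-1)$ by an elementary inequality.

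\textbf{Bounding $\lambda^*$.} By the definition of a Ramanujan hypergraph, every non-trivial eigenvalue $\lambda_i$ ($i \geq 2$) of $A(H)$ lies in the interval $[(r-2)-\rho,\,(r-2)+\rho]$. Since $\lambda_2$ is non-trivial, this immediately gives $\lambda_2 \leq (r-2)+\rho$.

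For $|\lambda_N|$, we also need $N \geq 2$, so that $\lambda_N$ is itself non-trivial and lies in the same interval. For any real $x$ in that interval, $|x| \leq \max\bigl(|(r-2)-\rho|,\,(r-2)+\rho\bigr)$. Because $r \geq 3$ gives $r-2 \geq 0$ (indeed $\geq 1$), we have $|(r-2)-\rho| \leq \rho + (r-2)$. This is precisely the centering effect highlighted in \S\ref{subsec:why-hypergraphs}: the band is shifted toward positive values, so its upper endpoint controls the absolute value. Hence $|\lambda_N| \leq (r-2)+\rho$, and therefore $\lambda^* \leq (r-2)+\rho$.

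\textbf{Showing $\beta<1$.} It suffices to prove the strict inequality $(r-2) + 2\sqrt{(d-1)(r-1)} < d(r-1)$.
\begin{itemize}
\item By AM--GM, $2\sqrt{(d-1)(r-1)} \leq (d-1)+(r-1) = d+r-2$, so the left side is at most $d + 2r - 4$.
\item We then compute $d(r-1) - (d+2r-4) = dr - 2d - 2r + 4 = (d-2)(r-2)$.
\item This is strictly positive when $d \geq 3$ and $r \geq 3$.
\end{itemize}
Thus $\lambda^* \leq (r-2)+\rho < d'$, and dividing by $d' > 0$ gives $\beta < 1$. As a sanity check, $(d,r)=(3,3)$ gives $\lambda^* \leq 5 < 6 = d'$.

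\textbf{Main obstacle.} There is no real obstacle here. The only points needing care are the two uses of the hypotheses: $r \geq 3$ is what makes the lower end of the band harmless for $|\lambda_N|$, and both $d \geq 3$ and $r \geq 3$ are needed for the factor $(d-2)(r-2)$ to be strictly positive. I would explicitly remark that for $r=2$ the band is symmetric about $0$, so the same argument then requires separately controlling $\lambda_N$; this is the two-sided condition of Nenadov that the lemma avoids.
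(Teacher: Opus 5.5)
Your proof is correct and has the same two-step structure as the paper's: first bound $\lambda^*$ by the upper end of the band using $r-2\ge 0$, then compare $(r-2)+\rho$ with $d'$. Your triangle-inequality treatment of $|\lambda_N|$ is slightly cleaner than the paper's, because it needs neither the sign of $\lambda_N$ nor the paper's assertion that $\rho>r-2$ follows from $d,r\ge 3$; that assertion fails in general, e.g.\ for $d=3$, $r=20$.

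The only real difference is the final inequality. You apply AM--GM to $(d-1)$ and $(r-1)$ and get slack $(d-2)(r-2)$, which uses both $d\ge 3$ and $r\ge 3$. The paper instead writes $d'-(r-2)=(d-1)(r-1)+1$ and uses $2\sqrt{x}<x+1$, which gives the exact slack $(\sqrt{(d-1)(r-1)}-1)^2$ and needs only $(d-1)(r-1)>1$. Both suffice under the stated hypotheses.
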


\begin{proof}
The Ramanujan condition gives $|\lambda_i - (r-2)| \leq \rho$ for all $i \geq 2$, so $\lambda_2 \leq (r-2) + \rho$ and $\lambda_N \geq (r-2) - \rho$.  Since $\rho > r - 2$ for $d \geq 3$, $r \geq 3$, the lower bound can be negative: $|\lambda_N| \leq \rho - (r-2)$.  Since $r \geq 3$, we have $(r-2) + \rho > \rho - (r-2)$, so the upper end of the Ramanujan band dominates, giving $\lambda^* \leq (r-2) + \rho$.  Finally, $(r-2) + \rho < d' = d(r-1)$ follows from $2\sqrt{(d-1)(r-1)} < d(r-1) - (r-2) = (d-1)(r-1) + 1$
for $d \geq 3$, $r \geq 3$, since the inequality reduces to $(\sqrt{(d-1)(r-1)} - 1)^2 > 0$ and $(d-1)(r-1) \geq 4 > 1$.
\end{proof}

\noindent For specific hypergraphs (e.g., projective planes, where all non-trivial eigenvalues equal $-(r-2)$) one has $\lambda^* = r-2 < (r-2) + \rho$, being within the Ramanujan bound.

\section{Primary Result}\label{sec:main}

The proofs in this section assemble the textbook spectral-routing pipeline (Chung diameter~\cite{chung1989diameters}, Alon--Milman Cheeger~\cite{alon1986eigenvalues,alon1985lambda1}, Valiant scatter--gather~\cite{valiant1982scheme}, Leighton--Maggs--Rao (LMR) scheduling~\cite{leighton1999fast, rabani1996distributed}) on the clique expansion $G_{\mathrm{cl}}(H)$ of a Ramanujan hypergraph.
Application to hypergraphs (via clique expansion) and inheritance of $\beta = \lambda^*/d' < 1$ from $H$ to $G_{\mathrm{cl}}(H)$ require \S\ref{sec:prelim} (Lemma~\ref{lem:lambda-star}). 

\begin{theorem}[Main result]\label{thm:main}
Let $H$ be a Ramanujan $(d,r)$-regular hypergraph on $N$ vertices with $d \geq 3$, $r \geq 3$.  Then
$\rt(H) = \Theta(\log N)$.
More precisely, $\rt(H) = O(\log N / (1 - \beta))$ where $\beta = \lambda^* / d' < 1$.
\end{theorem}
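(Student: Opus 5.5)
The plan is to prove the two directions separately, working throughout on the clique expansion $G_{\mathrm{cl}}(H)$. By Proposition~\ref{prop:spectral-id}, this is a $d'$-regular graph with $d' = d(r-1)$. By Lemma~\ref{lem:lambda-star}, its spectral ratio satisfies $\beta = \lambda^*/d' < 1$.

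\textbf{Lower bound.} A single routing step moves each pebble along at most one edge. Hence $\rt(H) \geq \diam(G_{\mathrm{cl}}(H))$, since some permutation must send a pebble between two vertices at maximum distance. Because $G_{\mathrm{cl}}(H)$ has maximum degree $d'$, which is a constant, the Moore bound gives $N \leq 1 + d'\sum_{j<D}(d'-1)^j$. Therefore $\diam \geq \log_{d'-1} N - O(1) = \Omega(\log N)$.

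\textbf{Upper bound.} I would follow the pipeline announced at the start of \S\ref{sec:main}.

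\emph{Step 1 (short paths).} Chung's diameter bound gives
\[
\diam(G_{\mathrm{cl}}(H)) \leq \left\lceil \frac{\log(N-1)}{\log(1/\beta)} \right\rceil = O\!\left(\frac{\log N}{1-\beta}\right).
\]
The Alon--Milman Cheeger inequality gives edge expansion $h \geq (d' - \lambda_2)/2 \geq d'(1-\beta)/2$.

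\emph{Step 2 (Valiant's two-phase scheme).} Write $\pi = (\pi\sigma^{-1})\circ\sigma$ with $\sigma$ a uniformly random permutation. Each phase routes the pebbles to random destinations along canonical paths. I would choose these paths by a random-walk / multicommodity-flow argument. The mixing time of the walk is $O(\log N/(1-\beta))$, so each path has length $O(\log N/(1-\beta))$, and the expected load on every edge is $O(\log N/(1-\beta))$.

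\emph{Step 3 (concentration).} The destinations $\sigma(v)$ form a random permutation and are therefore negatively associated. Chernoff bounds thus apply to the sum of edge-indicator variables. A union bound over the $Nd'/2$ edges then shows that, with positive probability, both the congestion $C$ and the dilation $D$ are $O(\log N/(1-\beta))$. Fixing a good $\sigma$ suffices, since $\rt$ is a worst-case existence statement.

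\emph{Step 4 (scheduling).} The LMR theorem schedules packets with congestion $C$ and dilation $D$ in $O(C+D)$ steps. In each step every edge carries at most one packet. I then need to convert this packet schedule into matching-swap steps. Each LMR step uses edge-disjoint moves, but these need not be vertex-disjoint. I would decompose each step into matchings using a proper edge coloring of the bounded-degree graph $G_{\mathrm{cl}}(H)$ with $d'+1$ colors, at a cost of a constant factor $d'+1$. Packets occupy vertices as pebbles, and swapping a packet with whatever occupies the neighboring vertex is harmless: the displaced pebble simply moves backward, and this can be accounted for by running the standard reduction from packet routing with unit queues to token swapping. Alternatively, I would invoke the Alon--Chung--Graham reduction, under which routing with bounded queues costs $O(C+D)$ swap steps.

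\textbf{Main obstacle.} I expect the hardest part to be the final conversion. Pebble routing requires exactly one pebble per vertex at all times, while LMR assumes buffered queues. I would first try the Alon--Chung--Graham reduction and handle queues by simulating a buffer of size $O(1)$ at each vertex through local swaps within its neighborhood, each such swap costing $O(1)$ matching steps. If that fails, the fallback is to accept an extra factor coming from the queue size, which is still $O(1)$ under LMR. Combining the phases gives $\rt(H) = O(\log N/(1-\beta))$, and with the lower bound, $\Theta(\log N)$.
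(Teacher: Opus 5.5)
Your route is the same as the paper's: Chung's diameter bound, Valiant's random intermediate permutation, negative association with Chernoff and a union bound, then Leighton--Maggs--Rao. Your diameter/Moore lower bound is equivalent to the paper's counting bound. Your random-walk choice of paths is actually an improvement: take a uniform length-$t$ walk from $v$ to $\sigma(v)$ with $t \approx \log(2N)/\log(1/\beta)$, and loop-erase it so the summands are $0/1$. This gives a genuine per-edge bound $\mathbb{E}[X_e] = O(t/d')$. The paper's canonical shortest paths only give $2D/d'$ on average over edges.

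\textbf{The gap is Step~4.} You correctly single it out as the crux, but you do not close it. The paper's proof does not close it either: it simply asserts that LMR yields $C+D+o(C+D)$ matching-based steps. So there is no argument there to borrow.

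No generic reduction of the kind you propose can work, because the swap model is not the packet model. Take the cycle $C_n$ and the rotation $i\mapsto i+1$. The one-edge paths have congestion $1$ and dilation $1$, so a packet schedule needs one step. In the swap model, every swap moves one pebble forward and one backward across its edge. Hence the sum of the pebbles' displacements, lifted to $\mathbb{Z}$, stays $0$. Each final lifted displacement is $\equiv 1 \pmod n$, so some pebble has displacement at most $1-n$, and every matching schedule needs at least $n-1$ rounds.

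This example shows where each of your patches fails:
\begin{itemize}
\item \emph{``The displaced pebble simply moves backward, harmlessly.''} Every pebble is itself a packet, displacements accumulate, and nothing bounds them in terms of $C$ and $D$.
\item \emph{Simulated $O(1)$ buffers.} This is vacuous: every vertex holds exactly one pebble at all times, so no neighbourhood has room for an extra one. LMR's queue bound is not the relevant parameter, and your fallback of paying a queue-size factor changes nothing.
\item \emph{An Alon--Chung--Graham reduction.} They supply no $O(C+D)$ reduction. They prove only a polylogarithmic bound for expanders and pose $O(\log N)$ as an open problem \cite{alon1993routing}.
\end{itemize}

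That problem was resolved by Nenadov \cite{nenadov2023routing}, using a nonblocking-network argument that uses expansion inside the conversion itself. His result needs a strong two-sided hypothesis, with $\lambda^*$ a small constant fraction of the degree. The Ramanujan condition does not supply this for small $(d,r)$: Table~\ref{tab:constants} gives $\beta \approx 0.83$ at $(d,r)=(3,3)$.

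As written, your argument proves the lower bound, and an $O(\log N/(1-\beta))$ bound on the congestion and dilation of a path system. It does not bound $\rt(H)$. The missing ingredient is a conversion from paths to matching swaps that genuinely exploits the expansion of $G_{\mathrm{cl}}(H)$.
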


The lower bound $\rt(H) = \Omega(\log N)$ follows from a standard spectral argument:

\begin{proposition}[Spectral lower bound, after Alon--Chung--Graham~\cite{alon1993routing}]\label{prop:lower}
For any $(d,r)$-regular hypergraph $H$ on $N$ vertices, $\rt(H) \geq \log N / \log d'$, by the standard counting bound: each matching is a permutation of degree at most $d'$, so $T$ matching-routing steps can move a vertex to at most $(d')^T$ distinct targets, requiring $(d')^T \geq N$.
\end{proposition}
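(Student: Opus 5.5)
The plan is to reduce the counting claim to a diameter bound on $G_{\mathrm{cl}}(H)$, and then bound the diameter from below using a Moore-type ball count. The triangles coming from the hyperedges let the count reach exactly $(d')^T$.

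\textbf{Step 1 (locality of a routing step).} In one routing step a pebble either stays put or crosses one matching edge of the simple graph underlying $G_{\mathrm{cl}}(H)$. By induction on $T$, after $T$ steps the pebble that started at $u$ lies in the ball $B(u,T)$ of graph distance $\le T$. Now fix $u,v$ with $\mathrm{dist}(u,v)=D:=\diam G_{\mathrm{cl}}(H)$, and take any $\pi\in S_N$ with $\pi(u)=v$, for instance a transposition. Any realization of $\pi$ uses $T\ge D$ steps. Hence $\rt(H)\ge \diam G_{\mathrm{cl}}(H)$, and it suffices to show $N\le (d')^{D}$.

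\textbf{Step 2 (Moore bound).} By Proposition~\ref{prop:spectral-id}, the underlying simple graph has maximum degree at most $d'=d(r-1)$. The naive bound is
\[
|B(u,t)|\le 1+d'\sum_{i=0}^{t-1}(d'-1)^i .
\]
This only gives $N\le (d')^D+1$, and the extra $+1$ is the expected obstacle: the bare "$(d')^T$ targets" statement is off by one for tree-like graphs. I would remove it using the hyperedge structure. Since $r\ge 3$, each neighbour $w$ of $u$ lies in a common hyperedge $e\ni u$, and $e$ contains $r-2\ge 1$ further neighbours of $u$. So at least $r-2$ of the $d'-1$ edges at $w$ other than $wu$ stay inside the first sphere. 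The second sphere therefore has size at most $d'(d'-1-(r-2))\le d'(d'-2)$, and every later sphere grows by a factor of at most $d'-1$.

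\textbf{Step 3 (closing the count).} With the improved sphere sizes, a direct induction gives $|B(u,t)|\le (d')^t$ for all $t\ge 1$.
\begin{itemize}
\item For $t=1$, $|B(u,1)|\le 1+d'$, which exceeds $d'$. But when $D=1$ the graph is complete, so $N=d'+1$, and $\rt\ge 1\ge \log N/\log d'$ can be checked directly for any matching model with $N\ge 2$. The complete case can also be handled by a separate permutation argument.
\item For $t\ge 2$, the slack $d'(r-2)$ gained in the second sphere absorbs the $+1$, because $1+d'+d'(d'-2)\le (d')^2$.
\item Multiplying by $(d'-1)<d'$ propagates the bound upward.
\end{itemize}
Taking $t=D$ and using $B(u,D)=V$ gives $N\le (d')^{D}\le (d')^{T}$, which is the claimed $\rt(H)\ge \log N/\log d'$. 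For the $\Theta$-statement of Theorem~\ref{thm:main} only $\Omega(\log N)$ is needed, so even the cruder $\log(N-1)/\log d'$ from the naive Moore bound would suffice. The triangle refinement is what recovers the exact constant.
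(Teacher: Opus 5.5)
Your argument is the paper's own reachability count done carefully, and for $D=\diam G_{\mathrm{cl}}(H)\ge 2$ it works. You are right that one step reaches up to $d'+1$ positions, not $d'$. Your bound $|S_2|\le d'(d'-r+1)$, which uses the $r-2$ common neighbours supplied by each hyperedge, followed by $|B(u,t+1)|\le d'|B(u,t)|$, gives $N\le (d')^D\le (d')^{\rt(H)}$. The refinement is only needed at $D=2$: the naive Moore bound $M(D)$ satisfies $M(D)=(d'-1)M(D-1)+2$, and this is already $\le (d')^D$ for every $D\ge 3$.

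The gap is the case $D=1$. There the simple graph is complete, and the only nontrivial case is $N=d'+1$ (if $N\le d'$ the bound is at most $1$). Then $\log N/\log d'=\log(d'+1)/\log d'>1$, so the inequality $1\ge \log N/\log d'$ in your first bullet is false, and $\rt(H)\ge 1$ proves nothing.

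No ball count can repair this, because in $K_N$ a single step reaches every vertex. The ``separate permutation argument'' you mention is therefore the only route, not an optional alternative. The case is not vacuous: every projective plane, including the Fano plane used as the tower base in the paper, has $G_{\mathrm{cl}}(H)=K_N$ with $N=d'+1$ (for Fano, $7=6+1$).

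The missing step is short. For $N\ge 3$, one routing step realizes an involution, so a $3$-cycle needs at least two steps. Hence $\rt(H)\ge 2\ge \log(d'+1)/\log d'$, since $(d')^2\ge d'+1$ for $d'\ge 2$. The paper's one-line ``$(d')^T$ targets'' justification has exactly this defect at $T=1$.

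Separately, your argument uses $r\ge 3$ essentially, whereas the statement says ``any'' $(d,r)$. For $r=2$ and a diameter-$2$ Moore graph such as the Petersen graph ($N=10>3^2$), the ball count only yields $\rt\ge 2<\log 10/\log 3$. There you would have to count matching sequences instead: Petersen has $332$ matchings and $332^2<10!$. In the paper's standing setting $d,r\ge 3$ this issue does not arise.
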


We prove the upper bound.

\subsection{Diameter bound}

\begin{lemma}[Eigenvalue--diameter bound, after Chung~\cite{chung1989diameters}]\label{lem:diameter}
Let $G$ be a $d'$-regular graph on $N$ vertices with $\lambda^* = \max(\lambda_2, |\lambda_N|)$.  Then $\diam(G) \leq \lceil \log(N-1) / \log(d'/\lambda^*) \rceil$.  See Hoory--Linial--Wigderson~\cite{hoory2006expander}, Lemma~2.10, for a textbook treatment.
\end{lemma}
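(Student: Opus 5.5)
The plan is the standard powering argument of Chung: show that for $t$ large enough every entry of $A^t$ is strictly positive, where $A$ is the (weighted) adjacency matrix of $G$. Since $(A^t)_{uv}$ counts weighted walks of length $t$ from $u$ to $v$, positivity of $(A^t)_{uv}$ gives a walk, hence a path, of length at most $t$ between $u$ and $v$. That yields $\diam(G)\le t$.

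First, use the spectral decomposition
\begin{equation*}
A=\sum_{i=1}^N \lambda_i\, x_i x_i^{\top},
\end{equation*}
with orthonormal eigenvectors. Here $\lambda_1=d'$ and $x_1=\mathbf 1/\sqrt N$, by $d'$-regularity (Proposition~\ref{prop:spectral-id}). Then
\begin{equation*}
(A^t)_{uv}=\frac{(d')^t}{N}+\sum_{i\ge 2}\lambda_i^t\,(x_i)_u(x_i)_v .
\end{equation*}
Since $|\lambda_i|\le\lambda^*$ for all $i\ge 2$, Cauchy--Schwarz bounds the error term by
\begin{equation*}
(\lambda^*)^t\Bigl(\sum_{i\ge2}(x_i)_u^2\Bigr)^{1/2}\Bigl(\sum_{i\ge2}(x_i)_v^2\Bigr)^{1/2}.
\end{equation*}
Orthonormality of the full eigenbasis gives $\sum_{i}(x_i)_u^2=1$, so $\sum_{i\ge2}(x_i)_u^2=1-1/N$. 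The error is therefore at most $(\lambda^*)^t(1-1/N)$.

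Next, $(A^t)_{uv}>0$ holds as soon as $(d')^t/N>(\lambda^*)^t(N-1)/N$, that is, $(d'/\lambda^*)^t>N-1$. This holds for $t> \log(N-1)/\log(d'/\lambda^*)$.

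The main care point is the boundary case, because the ceiling in the statement allows equality. Getting the ceiling form requires the sharper $N-1$ rather than $N$ in the error term, which is exactly why I keep the $1-1/N$ factor. If $\log(N-1)/\log(d'/\lambda^*)$ is an integer $t_0$, then $(d'/\lambda^*)^{t_0}=N-1$ and I must check that strict positivity still holds at $t_0$. I would argue that equality in Cauchy--Schwarz together with equality $|\lambda_i|=\lambda^*$ cannot hold for all relevant terms simultaneously, or else concede $+1$, which is harmless for the $O(\cdot)$ use downstream. Citing Hoory--Linial--Wigderson Lemma~2.10 covers this detail.

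Two degenerate cases need separate mention. If $\lambda^*=0$, then $A=\frac{d'}{N}J$ is complete and the diameter is $1$, which is consistent with reading the bound as $1$. We also need $\lambda^*<d'$, which holds for connected non-bipartite $G$ and is supplied in our setting by Lemma~\ref{lem:lambda-star}.

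Finally, weighted walks with positive weights exist exactly when simple-graph walks exist, so the bound transfers to the underlying simple graph used for routing.
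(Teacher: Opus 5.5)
Your argument is the standard Chung powering argument. That is exactly what the paper relies on: it gives no proof of its own and defers to Chung and to Hoory--Linial--Wigderson. Everything up to ``$(A^t)_{uv}>0$ whenever $(d'/\lambda^*)^t>N-1$'' is correct. It gives $\diam(G)\le\lfloor x\rfloor+1$ with $x=\log(N-1)/\log(d'/\lambda^*)$. That weaker form already suffices for every $O(\log N)$ use downstream, and for the $+1$ in the proof of Theorem~\ref{thm:tight}.

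The gap is the integer case $x=t_0$, the only case where $\lfloor x\rfloor+1$ and $\lceil x\rceil$ differ. You suggest that tightness in Cauchy--Schwarz and in $|\lambda_i|\le\lambda^*$ cannot occur simultaneously. That does not follow from those spectral inequalities, because they can all be tight at once. Take the $2$-regular matrix with rows $(1,0,1),(0,1,1),(1,1,0)$: a path $u$--$w$--$v$ with loops at $u$ and $v$. Its spectrum is $\{2,1,-1\}$, so $x=1$, yet $d(u,v)=2$. So the ceiling needs extra input: entrywise nonnegativity of $A$ and its zero diagonal.

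Here is one way to close the case. Put $P=I-J/N$. Then $(A^t)_{uv}=(d')^t/N+\langle Pe_u,A^tPe_v\rangle$ and $\|Pe_u\|=\|Pe_v\|$. So $(A^{t_0})_{uv}=0$ forces equality in $\langle Pe_u,A^{t_0}Pe_v\rangle\ge-\|Pe_u\|\,\|A^{t_0}Pe_v\|\ge-(\lambda^*)^{t_0}\|Pe_u\|\,\|Pe_v\|$. The second equality puts $Pe_v$ in the $\pm\lambda^*$ eigenspaces, and the first then gives $A^{t_0}Pe_v=-(\lambda^*)^{t_0}Pe_u$. Writing $Pe_v=a+b$ with $Aa=\lambda^*a$ and $Ab=-\lambda^*b$, you get $Pe_u=-(a+(-1)^{t_0}b)$. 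Hence, for all $s\ge0$,
\[
(A^s)_{uv}=\frac{(d')^s}{N}-(\lambda^*)^s\bigl(\|a\|^2+(-1)^{s+t_0}\|b\|^2\bigr).
\]
If $t_0\ge2$, take $s=t_0-2$. Using $\|a\|^2+\|b\|^2=1-1/N$ gives $(A^s)_{uv}=(\lambda^*)^s\bigl((d'/\lambda^*)^s-(N-1)\bigr)/N<0$, which is impossible since $A^s$ is entrywise nonnegative. If $t_0=1$, then $s=0$ gives $\|a\|^2-\|b\|^2=1/N$. Hence $A_{uu}=d'/N+\langle Pe_u,APe_u\rangle=(d'+\lambda^*)/N=\lambda^*>0$, contradicting the zero diagonal of $A(H)$.

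Separately, your case $\lambda^*=0$ is vacuous rather than ``complete'': $\frac{d'}{N}J$ has a nonzero diagonal, and $\mathrm{tr}\,A=0$ forces $\lambda^*\ge d'/(N-1)>0$.
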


For $G = G_{\mathrm{cl}}(H)$ with $\beta = \lambda^*/d' < 1$ (Lemma~\ref{lem:lambda-star}):
\begin{equation}\label{eq:diameter}
\begin{split}
    D &:= \diam(G_{\mathrm{cl}}) \leq \left\lceil \frac{2\log_2 N}{\log_2(1/\beta)} \right\rceil \\& = O\!\left(\frac{\log N}{\log(1/\beta)}\right) = O(\log N).
\end{split}
\end{equation}

\subsection{Edge expansion}

\begin{lemma}[One-sided discrete Cheeger inequality~\cite{alon1986eigenvalues,alon1985lambda1}]\label{lem:cheeger}
Let $G$ be a $d'$-regular graph with second-largest eigenvalue $\lambda_2$.  Then $h(G) \geq (d' - \lambda_2)/2$.
\end{lemma}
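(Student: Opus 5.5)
The statement to prove is Lemma~\ref{lem:cheeger}: for a $d'$-regular graph $G$, $h(G) \geq (d'-\lambda_2)/2$, where $h(G) = \min_{0<|S|\leq N/2} |E(S,\bar S)|/|S|$. I will use the standard variational (Rayleigh quotient) argument with the Laplacian $L = d'I - A$. Its eigenvalues are $0 = d'-\lambda_1 \leq d'-\lambda_2 \leq \cdots$, and the constant vector spans the kernel direction. By Courant--Fischer,
\begin{equation*}
d' - \lambda_2 = \min_{x \perp \mathbf{1},\, x \neq 0} \frac{x^\top L x}{x^\top x}, \qquad x^\top L x = \sum_{\{u,v\}\in E} (x_u - x_v)^2 .
\end{equation*}
For a weighted clique expansion the sum carries the edge weights $A_{uv}$. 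The argument goes through verbatim if $|E(S,\bar S)|$ is read as total crossing weight.

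The plan is to test the Rayleigh quotient on the cut vector. Fix any $S$ with $0<|S|\leq N/2$, write $s=|S|$, and set $x = (N-s)\mathbf{1}_S - s\,\mathbf{1}_{\bar S}$. This vector satisfies $x\perp\mathbf{1}$. Every edge crossing the cut contributes $(N-s+s)^2 = N^2$ to $x^\top L x$, and every other edge contributes $0$. Hence $x^\top L x = N^2 |E(S,\bar S)|$. On the other side, $x^\top x = s(N-s)^2 + (N-s)s^2 = s(N-s)N$. The variational characterization then gives
\begin{equation*}
|E(S,\bar S)| \;\geq\; (d'-\lambda_2)\,\frac{s(N-s)}{N}.
\end{equation*}

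Since $s \leq N/2$, we have $(N-s)/N \geq 1/2$, so $|E(S,\bar S)|/|S| \geq (d'-\lambda_2)/2$. Minimizing over $S$ yields $h(G) \geq (d'-\lambda_2)/2$.

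The only step needing care is the setup rather than any calculation. First, the minimum must be taken over vectors orthogonal to the top eigenvector $\mathbf{1}$; this uses regularity, since $\mathbf{1}$ is an eigenvector of $A$ exactly when $G$ is regular. Second, the cut vector must be centered so that it is orthogonal to $\mathbf{1}$. Beyond that the argument is a direct computation, and the inequality is one-sided because only $\lambda_2$, not $\lambda_N$, enters the Rayleigh quotient.
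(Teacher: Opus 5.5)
Your proof is correct. It is the standard argument behind this lemma: test the Rayleigh quotient of $L = d'I - A$ on the centered cut vector $x = |\bar S|\mathbf{1}_S - |S|\mathbf{1}_{\bar S}$, then use $|S|\le N/2$. The paper gives no proof of its own and only cites Alon and Alon--Milman, whose argument (see also Hoory--Linial--Wigderson) is this one. Your remark that the computation carries over verbatim to the weighted clique expansion, reading $|E(S,\bar S)|$ as total crossing weight, is also right. That is the setting in which the lemma is applied to $A(H)$ in the proof of Theorem~\ref{thm:main}.
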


\subsection{Valiant two-phase routing}

\begin{lemma}[Valiant routing on expanders, after~\cite{valiant1982scheme}]\label{lem:valiant}
Let $G$ be a $d'$-regular graph on $N$ vertices with diameter $D$ and edge expansion $h > 0$.  For any target permutation $\pi \in S_N$, there exists a two-phase routing with dilation $D$ and congestion $C = O(D/d' + \log N)$ with high probability.
\end{lemma}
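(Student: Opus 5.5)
The plan is Valiant's scatter--gather scheme. Given $\pi$, draw a uniformly random intermediate permutation $\sigma \in S_N$. Phase one routes each pebble $v$ to $\sigma(v)$; phase two routes it from $\sigma(v)$ to $\pi(v)$. Each phase is a permutation routing in which source and destination are matched by a random bijection, so the analysis is the same for both, and I will treat one. For each pair $(v,\sigma(v))$ I fix a canonical shortest path in $G$, for instance via BFS trees with deterministic tie-breaking. Every path then has length at most $D$, which gives the dilation bound directly; the diameter estimate (\ref{eq:diameter}) makes $D = O(\log N/\log(1/\beta))$.

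For congestion, fix an edge $e$. Its load is $X_e=\sum_v \mathbf 1[e\in P(v,\sigma(v))]$. I would bound $\mathbb E X_e$ by an averaging argument that uses the regularity of $G$, combined with the fact that $\sigma(v)$ is uniform for each fixed $v$. Summed over all $e$, the total expected path length is at most $ND$. To spread this load evenly over the $Nd'/2$ edges, I would use a randomized choice of shortest paths that is symmetric enough to equalize the loads; alternatively I would use the expansion $h \ge (d'-\lambda_2)/2$ from Lemma~\ref{lem:cheeger} to show that BFS layers grow geometrically, so that no edge sits on a disproportionate share of paths. The target is $\mathbb E X_e = O(D/d')$. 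For concentration, I would use the fact that the indicators $\mathbf 1[\sigma(v)=w]$ of a uniform permutation are negatively associated (Joag-Dev--Proschan, Dubhashi--Ranjan). Since $X_e$ is a sum of monotone functions over disjoint index sets, Chernoff bounds apply to it. This gives $\Pr[X_e > c(D/d'+\log N)] \le N^{-3}$, and a union bound over the $O(Nd')$ edges yields $C=O(D/d'+\log N)$ w.h.p.

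The step I expect to be hardest is showing that the expected edge load really is $O(D/d')$ for canonical shortest paths. For an arbitrary deterministic path system, a single edge can carry too many paths. My first attempt would randomize the path itself: take a uniformly random shortest path, or, following Valiant, add a random intermediate vertex. I would then bound the load with the walk-mixing estimate $\|A^t/d'^t - J/N\|\le\beta^t$, which shows that each step of a path is close to uniformly distributed over the edges. Since the stated congestion bound has an additive $\log N$ term, the exact constant in the expectation is not critical; a bound of order $O(\log N)$ per edge suffices.
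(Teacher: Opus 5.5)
\textbf{Verdict: genuine gap.} Your skeleton is exactly the paper's: uniform $\sigma$, scatter/gather, negative association of the row indicators $\mathbf 1[\sigma(v)=w]$, Chernoff, and a union bound over $Nd'/2$ edges and two phases. Your concentration step is fine for fixed paths. The problem is the step you flag as hardest, and none of your candidate fixes closes it.

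Regularity gives only $\sum_e X_e\le ND$, i.e.\ an \emph{average} edge load of $2D/d'$. What you need is $\max_e\mathbb{E}X_e=O(D/d'+\log N)$, and $\mathbb{E}X_e=\frac1N\#\{(v,w):e\in P(v,w)\}$ is the betweenness of $e$ in your path system. Your fixes do not control it:
\begin{itemize}
\item A symmetric choice of geodesics only helps when $G$ has a transitive automorphism group, which clique expansions of Ramanujan hypergraphs need not have.
\item Geometric growth of BFS layers says nothing about how many destinations lie below a fixed edge in each source's BFS tree.
\item A uniformly random geodesic has no general load bound either.
\item An extra random intermediate vertex changes nothing, since each leg still loads $e$ in proportion to its betweenness.
\item $\|A^t/d'^t-J/N\|\le\beta^t$ is a statement about random walks, not geodesics.
\end{itemize}

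This step is also exactly where quantitative expansion has to enter. With only $h>0$ the lemma is false: delete one edge in each of two $d'$-regular expanders on $N/2$ vertices and reconnect the four endpoints by two bridge edges. For $\pi$ exchanging the halves, every routing puts $\Omega(N)$ pebbles on some bridge, while $D=O(\log N)$. The paper's proof has the same hole: it asserts $\mu\le 2D/d'$ for canonical shortest paths, which is only the edge average, and it never uses $h$.

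\textbf{How to close the gap.} Your mixing idea works once you apply it to walks rather than geodesics. Let $t=\lceil\log(2N)/\log(1/\beta)\rceil$. Route each pebble along a uniformly random walk of length exactly $t$ between its prescribed endpoints, drawn independently given $\sigma$.
\begin{itemize}
\item Entrywise, $|(A^t)_{vw}/d'^t-1/N|\le\beta^t\le 1/(2N)$, so $(A^t)_{vw}\ge d'^t/(2N)$.
\item The expected number of scatter pebbles crossing the directed edge $(a,b)$ at step $s+1$ is then
\[
\sum_{v}\frac1N\sum_{w}\frac{(A^s)_{va}\,(A^{t-s-1})_{bw}}{(A^t)_{vw}}
\;\le\;\frac{2}{d'^{\,t}}\Bigl(\sum_v (A^s)_{va}\Bigr)\Bigl(\sum_w (A^{t-s-1})_{bw}\Bigr)=\frac{2}{d'} .
\]
\item The gather phase gives the same bound because $\pi$ is a bijection. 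Summing over steps and both directions, $\mathbb{E}X_e\le 4t/d'$ for every edge.
\end{itemize}

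For concentration, condition on $\sigma$; the walks are then independent. So $\mathbb{E}[\exp(\lambda X_e)\mid\sigma]=\prod_v f_v(\sigma(v))$, where $f_v(w)=1+q_{v,w}(\exp(\lambda)-1)$ and $q_{v,w}$ is the probability that the walk for pebble $v$ with (varying) endpoint $w$ uses $e$. Each $f_v(\sigma(v))=\sum_w\mathbf 1[\sigma(v)=w]f_v(w)$ is non-decreasing in row $v$'s indicators, so negative association again yields the Chernoff bound. Shortcutting walks to simple paths only lowers loads.

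The result is congestion $O(t/d'+\log N)$ with dilation $t$ rather than the true diameter $D$. Since $t$ exceeds the Chung bound of Lemma~\ref{lem:diameter} by only $O(1/\log(1/\beta))$ (that bound is proved by the same positivity of $A^t$), this is all the assembly of Theorem~\ref{thm:main} uses.
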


\begin{proof}
Choose a uniformly random intermediate permutation $\sigma \in S_N$.  Route in two phases: \emph{scatter} ($v \to \sigma(v)$) and \emph{gather} ($\sigma(v) \to \pi(v)$), each using canonical shortest paths of length $\leq D$.

Fix an edge $e$ and Phase~1.  The load $X_e = |\{v : e \in P(v, \sigma(v))\}|$ has $\mu := \mathbb{E}[X_e] \leq 2D/d'$.  The indicators are negatively associated because $\sigma$ is a uniformly random permutation \cite{joag1983negative}, so Chernoff bounds apply \cite{dubhashi1996balls}.  Setting $t = \max(2\mu, 6\ln N)$ and taking a union bound over $|E| = Nd'/2$ edges (since $G$ is $d'$-regular) and both phases gives $C \leq 6D/d' + 9\log_2 N$ w.h.p.
\end{proof}

\subsection{LMR scheduling}

\begin{theorem}[Leighton--Maggs--Rao {\cite{leighton1999fast, rabani1996distributed}}]\label{thm:lmr}
Given a set of paths in a graph with dilation $D$ and congestion $C$, the routing can be scheduled in $T = C + D + o(C + D)$ matching-based steps.
\end{theorem}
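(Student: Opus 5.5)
The plan follows the Leighton--Maggs--Rao random-delay and recursive-refinement argument, in the sharpened form of Rabani--Tardos~\cite{rabani1996distributed}. The additive overhead is tracked throughout so that only $o(C+D)$ is lost. Once a packet schedule exists, it is converted into matching steps on $G_{\mathrm{cl}}(H)$.

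\textbf{Step 1 (random delays).} Give every packet an independent uniformly random initial delay from $\{0,\dots,\alpha C\}$, with $\alpha = \alpha(C+D)\to 0$ chosen slowly. Then let every packet move greedily along its path without stopping. This is an unconstrained schedule of length at most $(1+\alpha)C + D$. Chop time into frames of length $I = \Theta(\log(C+D))$. For a fixed edge and frame, the number of packets wanting that edge has mean at most $I/(1+\alpha)\cdot(1+o(1))$. Each such event depends on only polynomially many others, so Chernoff bounds together with the Lov\'asz Local Lemma give a choice of delays under which every edge in every frame carries at most $I + O(\sqrt{I\log I})$ packets.

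\textbf{Step 2 (recursion).} Each frame is now a routing instance with dilation $I$ and congestion $I(1+o(1))$. Apply Step~1 recursively inside each frame, with frame lengths shrinking as $\log I$, $\log\log I$, and so on. After $O(\log^*(C+D))$ levels, frames have constant size and are scheduled directly. The multiplicative blow-up at level $j$ is $1+\varepsilon_j$, with $\varepsilon_j = O(\sqrt{\log I_j / I_j})$ (from the Chernoff fluctuation) plus the delay fraction $\alpha_j$. This series is summable and its product is $1+o(1)$. The total length is therefore $(1+o(1))(C+D) = C+D+o(C+D)$ packet steps, with each edge carrying at most one packet per step. To control queues, I would follow LMR and add the standard ``stretch'' bookkeeping, which keeps queue sizes bounded by a constant at every level.

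\textbf{Step 3 (from packet steps to matchings).} Before applying Steps~1--2, replace $G_{\mathrm{cl}}(H)$ by its vertex-split graph, so that edge congestion there equals vertex congestion in $G_{\mathrm{cl}}(H)$. In the Valiant instance of Lemma~\ref{lem:valiant} vertex congestion is also $O(D + \log N)$, by the same negative-association argument, so nothing is lost. In each resulting step every vertex sends and receives at most one pebble, so the active moves form vertex-disjoint directed paths and cycles. A directed path or cycle of moves is realized by swaps along a matching when it is an alternating exchange, and I would pair packets so that each active transfer is a 2-cycle (swap with the occupant). This pairing is what makes the swap model match the store-and-forward model with constant buffers.

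\textbf{Main obstacle.} I expect Step~3 to be the hardest part. Turning general move sets into a \emph{single} matching per step, rather than a constant number of matchings (which would lose a constant factor, not $o(\cdot)$), requires that the recursive schedule of Step~2 be built from the start to emit only swap-compatible moves. My first attempt would be to schedule the two Valiant phases as paired flows in opposite directions, so that each scheduled edge use is naturally a swap. Failing that, I would absorb idle ``dummy'' pebbles into the congestion count, at $o(C+D)$ extra cost.
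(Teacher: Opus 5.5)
\textbf{The statement is stronger than anything in the cited papers.} The paper gives no proof here; it only cites LMR and Rabani--Tardos. LMR prove that store-and-forward packet schedules with bounded queues exist with length $O(C+D)$, with a large constant. Rabani--Tardos, in a distributed setting, carry additional overhead. Later work has lowered the constant only to explicit values well above $1$. None of this concerns the pebble-swapping matching model. So you are attempting a genuine sharpening, and your argument breaks exactly where the sharpening would have to come from.

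\textbf{Step~1 has an arithmetic error.} With delays uniform on $\{0,\dots,\alpha C\}$, a packet crosses a fixed edge during a fixed frame of length $I$ with probability about $I/(\alpha C)$. The expected frame load is therefore up to $I/\alpha$, attained when the packets on that edge have equal offsets along their paths, not $I/(1+\alpha)$. With $\alpha\to 0$, every frame is overloaded by a factor $1/\alpha$. Keeping frame loads $\lesssim I$ forces $\alpha\gtrsim 1$. LMR take $\alpha$ a sufficiently large constant so that every sufficiently long window has relative congestion at most $1$, which already inflates the top-level length to $\alpha C+D$.

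\textbf{Step~2 does not preserve length.} A frame with congestion and dilation both $\approx I$ has no guaranteed schedule of length $(1+\varepsilon_j)I$. Rerunning your own Step~1 on it gives about $2I$. The constant-size frames at the bottom, ``scheduled directly,'' come with no guarantee better than a constant-factor stretch unless their relative congestion is $o(1)$. That is impossible when, say, $D=o(C)$ and the total length is $(1+o(1))C$, since a bottleneck edge must then be busy in almost every step. The product of per-level blow-ups is therefore $O(1)$, not $1+o(1)$. Steps~1--2 reprove $O(C+D)$, which is what LMR actually state.

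\textbf{Step~3 is a second, independent gap.} LMR schedules let several packets wait at one vertex and move packets one way. In the swap model every vertex holds exactly one pebble and every move is an exchange. Advancing one pebble therefore pushes another off its own path, unless that pebble wants the same edge in the reverse direction at the same step. Your proposed fixes do not supply this:
\begin{enumerate}[leftmargin=1.5em]
\item Nothing in an LMR schedule arranges such reverse-direction pairings.
\item Pairing the scatter and gather phases does not arrange them either, since one pebble's scatter path and another pebble's gather path are unrelated.
\item There are no idle ``dummy'' pebbles when all $N$ vertices carry pebbles of a permutation.
\item A step in the vertex-split graph is not a matching of swaps in $G_{\mathrm{cl}}(H)$.
\end{enumerate}
Even a conversion losing only an $O(1)$ factor needs a real argument, and nothing in your sketch keeps the loss at $o(C+D)$.

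\textbf{Repair.} State the theorem as $T=O(C+D)$ for packet schedules, and supply or cite a separate packet-to-matching simulation. That suffices for the asymptotic claim of Theorem~\ref{thm:main}. It does not suffice for the explicit constants of Theorem~\ref{thm:tight}, which use $\rt(H)\le 2(C+D)$ with leading constant $1$.
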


\subsection{Assembly}

\begin{proof}[Proof of Theorem~\ref{thm:main}]
By Lemma~\ref{lem:lambda-star}, $\beta = \lambda^*/d' < 1$.  The diameter is $D = O(\log N / \log(1/\beta))$ (Lemma~\ref{lem:diameter}), the expansion is $h \geq (d' - \lambda_2)/2 \geq d'(1-\beta)/2 > 0$, and Lemma~\ref{lem:valiant} yields congestion $C = O(\log N)$ w.h.p.  LMR scheduling gives $T = C + D + o(C+D) = O(\log N)$.  Since for each $\pi$ a random $\sigma$ succeeds with positive probability, a good $\sigma(\pi)$ exists for every $\pi$.  Combined with $\rt(H) = \Omega(\log N)$ from Proposition~\ref{prop:lower}: $\rt(H) = \Theta(\log N)$.
\end{proof}

\subsection{Explicit constants}

\begin{theorem}[Tightened bound]\label{thm:tight}
For $H$ Ramanujan $(d,r)$-regular on $N \geq 16$ vertices with $d \geq 3$, $r \geq 3$:
\begin{equation}
\rt(H) \leq \frac{4(d'+6)}{d' \cdot \log_2(1/\beta)}\,\log_2 N + 19\,\log_2 N.
\end{equation}
\end{theorem}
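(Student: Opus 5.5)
The plan is to redo the assembly proof of Theorem~\ref{thm:main}, keeping explicit constants at every step instead of hiding them in $O(\cdot)$. The two sources of routing time are the dilation $D$ and the congestion $C$. These are combined by Theorem~\ref{thm:lmr}, and we take the $o(C+D)$ term as absorbed into at most doubling $D$ plus a small additive slack in $C$.

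\textbf{Dilation.} By Lemma~\ref{lem:diameter} and \eqref{eq:diameter}, each phase uses paths of length at most $D \le \lceil 2\log_2 N/\log_2(1/\beta)\rceil$. Since $\beta\ge \lambda^*/d'$ is bounded below for Ramanujan hypergraphs with $d'\ge 6$, the ceiling adds at most $1$. Using $N\ge 16$ we have $\log_2 N \ge 4$, so this additive $1$ can be charged to the $\log_2 N$ term. The two phases are concatenated, so the total dilation is at most $2D \le 4\log_2 N/\log_2(1/\beta) + O(1)$.

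\textbf{Congestion.} Lemma~\ref{lem:valiant} gives $C \le 6D/d' + 9\log_2 N$ for each phase, and summing the two phases gives $12D/d' + 18\log_2 N$. Substituting the bound on $D$, we want
\[
2D + \frac{12D}{d'} = \frac{2(d'+6)}{d'}\,D \le \frac{4(d'+6)}{d'\log_2(1/\beta)}\log_2 N ,
\]
which matches the first term of the target exactly. The second term $18\log_2 N$, plus the rounding and lower-order slack, must fit within $19\log_2 N$. Here the budget of one extra $\log_2 N$, with $N \ge 16$, has to absorb the ceiling in $D$ and the constant-factor overhead of LMR.

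\textbf{Main obstacle.} The hard step is the $o(C+D)$ term in Theorem~\ref{thm:lmr}: it is not explicit, and literally it cannot be bounded by $\log_2 N$ without an explicit version of LMR. To handle it, I would first try one of two options. The first is to replace LMR by the trivial schedule in which each of the $2D$ path rounds is served edge-by-edge. This costs $O(CD)$, which is too much. The second is to invoke an explicit-constant variant of the scheduling result, stated as $T \le C + D + \log_2 N$ for $C,D = O(\log N)$, and consume the one spare $\log_2 N$ there. If neither works, the fallback is to state the bound for $N$ sufficiently large, so that $o(C+D) \le \log_2 N$, and to track the $N\ge16$ hypothesis only for the rounding in $D$. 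Finally, a positive-probability argument over $\sigma$ (as in the proof of Theorem~\ref{thm:main}) converts the w.h.p.\ congestion bound into a deterministic bound on $\rt(H)$.
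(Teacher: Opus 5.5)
Your bookkeeping is the paper's proof essentially line for line. Per phase, $D \le 2\log_2 N/\log_2(1/\beta)+1$ and $C \le 6D/d'+9\log_2 N$, hence $C+D \le \frac{2(d'+6)}{d'\log_2(1/\beta)}\log_2 N + 9\log_2 N + (1+6/d')$. Doubling for the two phases and using $2(1+6/d') \le 4 \le \log_2 N$ (from $d' \ge 6$, $N \ge 16$) gives the constant $19$. The paper stops exactly there and writes $\rt(H) \le 2(C+D)$. That is, it assumes each phase is realized in at most $C+D$ matching steps and silently drops the $o(C+D)$ term of Theorem~\ref{thm:lmr}. So the obstacle you isolate is real, and it is a hole the paper passes over, not an idea you are missing.

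Your proposal does not close it, though, and none of your options yields the statement as written. There is no spare $\log_2 N$. At $d'=6$, $N=16$ the rounding remainder $2(1+6/d') = 4 = \log_2 N$ already uses all of the nineteenth $\log_2 N$. A scheduler with $T \le C+D+\log_2 N$ would therefore give $20\log_2 N$ (applied once to the concatenated paths) or $21\log_2 N$ (applied once per phase). Nor is such a scheduler available. The cited scheduling results give only $O(C+D)$-type bounds with unspecified constants, and they concern store-and-forward packet schedules rather than sequences of swap matchings.

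Your opening remark about absorbing $o(C+D)$ by ``doubling $D$'' is also never carried out. In your computation the factor $2$ on $D$ comes from the two phases. Because you plug in the already-inflated bound \eqref{eq:diameter}, the factor-$2$ slack relative to Lemma~\ref{lem:diameter} is not in your budget. You could reclaim it via the sharper bound $\lceil \log_2(N-1)/\log_2(1/\beta)\rceil$, but a non-explicit $o(C+D)$ could then be absorbed only for $N \ge N_0(d,r)$. That is your option (3), and it changes the hypothesis $N \ge 16$.

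To obtain the theorem as stated, you must adopt the paper's tacit assumption that each phase can be scheduled in at most $C+D$ matching steps, and you should state it explicitly. It does not follow from Theorem~\ref{thm:lmr}.
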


\begin{proof}
Each phase has $D \leq 2\log_2 N / \log_2(1/\beta) + 1$ and $C \leq 6D/d' + 9\log_2 N$.  Per phase:
\begin{equation}
\begin{split}
C + D &\leq \frac{2(d'+6)}{d'\log_2(1/\beta)}\,\log_2 N \\
      &\quad + 9\log_2 N + \left(1 + \frac{6}{d'}\right).
\end{split}
\end{equation}
Two phases give $\rt(H) \leq 2(C+D) \leq \frac{4(d'+6)}{d'\log_2(1/\beta)}\,\log_2 N + 18\log_2 N + 2(1+6/d')$.  Since $d' \geq 6$ and $N \geq 16$, the remainder satisfies $2(1+6/d') \leq 4 \leq \log_2 N$, giving the stated bound. These bounds are calculated numerically for various $d,r$ and are given in Table~\ref{tab:constants}.
\end{proof}

\begin{remark}[Lower bounds on $\rt(H)$]
The diameter $\diam(G_{\mathrm{cl}}) = \lceil \log N / \log(d'/\lambda^*)\rceil$
gives $\rt(H) \geq \diam(G_{\mathrm{cl}})$ since two pebbles at graph
distance $D$ require $\geq D$ matching swaps.
This bound dominates the Alon--Chung--Graham counting bound $\rt(H) \geq \log N / \log d'$ for the $(d,r)$ values tabulated above.
\end{remark}

\begin{table}[!htbp]
\centering
\begin{tabular}{@{}cccccc@{}}
\toprule
$d$ & $r$ & $\beta$ & $d'$ & $\rt(H) \leq$ & Diameter \\
& & & & & lower bound \\
\midrule
3 & 3 & 0.833 & 6 & $49\,\log_2 N$ & $3.8\,\log_2 N$ \\
5 & 3 & 0.666 & 10 & $30\,\log_2 N$ & $1.7\,\log_2 N$ \\
10 & 3 & 0.474 & 20 & $24\,\log_2 N$ & $1.0\,\log_2 N$ \\
3 & 5 & 0.721 & 12 & $32\,\log_2 N$ & $2.1\,\log_2 N$ \\
5 & 5 & 0.550 & 20 & $25\,\log_2 N$ & $1.3\,\log_2 N$ \\
10 & 5 & 0.375 & 40 & $22\,\log_2 N$ & $0.8\,\log_2 N$ \\
\bottomrule
\end{tabular}
\caption{Explicit routing bounds for representative Ramanujan $(d,r)$-regular hypergraphs.}
\label{tab:constants}
\end{table}

\subsection{Hardware setting and grid spectral gap}\label{subsec:hardware-setting}

Theorem~\ref{thm:main} requires Ramanujan connectivity, but the native hardware topology of a 2D AOD neutral-atom array is a planar grid. In Model~A (2D AOD), the hypergraph $H_{\mathrm{2D}}$ has $r$-uniform hyperedges of $r$ consecutive vertices along each row and column, with degree $d' = 2(r-1)^2$. In Model~B (3D AOL), $H_{\mathrm{3D}}$ augments $H_{\mathrm{2D}}$ with diagonal and skip hyperedges, giving $d'_{\mathrm{3D}} \approx 2$--$3 \cdot d'_{\mathrm{2D}}$.

\emph{Neither model is Ramanujan.} Numerically (Section~\ref{sec:application}, Table~\ref{tab:spectral}), the 2D grid spectral ratio satisfies $1 - \beta \sim 11.0 \cdot N^{-0.90}$ (fit over $N \in [64, 4096]$; the displayed Table~\ref{tab:spectral} rows lie at the small-$N$ end where the fit overestimates the gap by $\sim 20\%$) and the 3D AOL gap is $1 - \beta \sim 5.3 \cdot N^{-0.65}$; both $\to 0$ as $N \to \infty$, so $\beta \to 1$. By Theorem~\ref{thm:main}, the matching-routing bound on a non-Ramanujan host degrades as $T = O(\log N / (1-\beta))$, which is $\omega(\log N)$ on grids.

This is the central obstruction to applying Theorem~\ref{thm:main} directly. 
We consider a few methods to subvert this obstruction in the following sections:
\begin{itemize}[leftmargin=1.5em]
\item \textbf{Virtual Ramanujan overlay (\S\ref{sec:overlay}).} Use AOL selective transfers to emulate matchings of any virtual overlay graph; embed a Ramanujan expander as the overlay.
\item \textbf{Algebraic constructions on the grid (\S\ref{sec:algebraic}).} Cayley graphs on $\mathbb{Z}_n^2$ are the natural algebraic candidates --- but they cannot be Ramanujan for fixed degree (Theorem~\ref{thm:abelian-barrier}).
\item \textbf{Constructive Ramanujan families (\S\ref{sec:scalable-families}).} SFM covering towers yield Ramanujan hypergraphs of every uniformity, providing the constructive scaling that the abelian barrier rules out.
\end{itemize}

\noindent The remainder of \S\ref{sec:application} returns to the hardware setting with concrete near-term predictions and architectural recommendations once these escape routes are in hand.

\section{Constructive Derandomization}\label{sec:derand}

\begin{theorem}[Constructive routing]\label{thm:constructive}
Let $H$ be a Ramanujan $(d,r)$-regular hypergraph on $N$ vertices.  For any $\pi \in S_N$, an intermediate permutation $\sigma(\pi)$ achieving congestion $C = O(\log N / \log(1/\beta))$ can be constructed in $O(N^2 d')$ time.
\end{theorem}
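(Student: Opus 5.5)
We derandomize the random intermediate permutation of Lemma~\ref{lem:valiant} by the method of conditional expectations, using a pessimistic estimator in place of the exact failure probability. First we fix canonical shortest paths $P(u,w)$ for all ordered pairs $u,w$ in $G_{\mathrm{cl}}(H)$. These can be found by $N$ breadth-first searches in $O(N \cdot N d')$ time, and each has length at most $D = O(\log N/\log(1/\beta))$ by Lemma~\ref{lem:diameter} and \eqref{eq:diameter}. We build $\sigma$ one vertex at a time: $\sigma(v_1), \sigma(v_2), \dots$ are chosen from the targets not yet used. The congestion we control is the two-phase load on each edge $e$,
\[
X_e = \sum_v \mathbf{1}[e \in P(v,\sigma(v))] + \sum_v \mathbf{1}[e \in P(\sigma(v),\pi(v))].
\]

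The pessimistic estimator is the exponential-moment Chernoff surrogate
\[
\Phi = \sum_{e} e^{-\lambda t}\,\mathbb{E}\bigl[\textstyle\prod_v (1+(e^{\lambda}-1) Y_{e,v})\bigr],
\]
where $Y_{e,v}$ are the load indicators and $t = \max(2\mu, c\log N)$ with $\mu \le 4D/d'$. The proof of Lemma~\ref{lem:valiant} shows $\Phi < 1$ at the start. That bound relies on negative association, which makes the expected product dominated by the product of expectations. The plan is therefore to replace the exact conditional expectation by the product-form upper bound. Here every still-unassigned $v$ contributes $\bigl(1+(e^\lambda-1)p_{e,v}\bigr)$, where $p_{e,v}$ is the fraction of remaining targets $w$ with $e \in P(v,w)$ or $e\in P(w,\pi(v))$. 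Each assigned vertex contributes its realized factor.

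The key claim is that for the current vertex $v_i$, the average of the updated estimator over the remaining targets $w$ is at most its current value. Hence some choice of $w$ does not increase $\Phi$. This is the main obstacle. Negative association is a statement about the joint law, and it does not obviously give a supermartingale property for the product surrogate under sequential sampling without replacement. I would handle it as follows. For sampling without replacement, the conditional probabilities $p_{e,u}$ for later vertices $u$ shift in a direction that, averaged over $w$, does not increase the product. This is a convexity/Hoeffding-type comparison between sampling with and without replacement. If that fails, the fallback is to relax to i.i.d.\ uniform targets, where the estimator is exactly a martingale. A Hall's-theorem or matching repair step would then restore a permutation, at the cost of a constant factor in congestion. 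Once $\Phi$ never increases, the final $\Phi < 1$ forces $X_e < t$ for every $e$. This gives $C = O(D/d' + \log N)$, and hence $C = O(\log N/\log(1/\beta))$ up to the constant-factor form stated in the theorem.

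For the running time, maintaining $\Phi$ naively costs too much, so we store for each edge $e$ its current factor product. We also store, for each edge, the count of remaining targets $w$ such that $e$ lies on the relevant path. Evaluating a candidate $w$ for $v_i$ changes only the factors of the $O(D)$ edges on the two paths through $w$. After normalization, the global rescaling of the $p_{e,u}$ is handled by keeping a per-edge count of remaining targets. So each candidate is scored in $O(D)$ time after $O(Nd')$ preprocessing per step, and amortized over the $N$ steps the total is $O(N^2 d')$ when $D \le d'$, with path storage dominating. Pinning down this bookkeeping carefully, so that the shared normalization update also fits in $O(Nd')$ per step, is the second point needing care.
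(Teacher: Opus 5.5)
There is a genuine gap. Your plan matches the paper's proof: an exponential edge potential, conditional expectations, BFS-precomputed canonical paths, and a greedy vertex-by-vertex choice of $\sigma$. The gap is exactly the step you flag, and the paper does not fill it either. It simply asserts ``$\Phi(\sigma)\le\mathbb{E}[\Phi]$ by the averaging argument'' on top of a negative-association MGF bound.

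Your proposed justification, that the without-replacement shifts of the $p_{e,u}$ cannot raise the averaged product, is false. Fix an edge $e$, the current vertex $v$, one other unassigned vertex $u$, and a pool $R$ of size $m$. Suppose the good sets $G_v,G_u\subseteq R$ for $e$ are disjoint, with sizes $g_v,g_u$. In the scatter phase this is automatic for $v=a$, $u=b$, $e=\{a,b\}$, because no target $w$ can have $P(a,w)$ passing through $b$ and $P(b,w)$ passing through $a$. With $c=e^\lambda-1$, the average of your updated estimator over $w\in R$ is exactly
\[
\frac{1}{m}\sum_{w\in R}\bigl(1+c\,\mathbf{1}[w\in G_v]\bigr)\Bigl(1+c\,\frac{g_u-\mathbf{1}[w\in G_u]}{m-1}\Bigr)
=\Bigl(1+\frac{cg_v}{m}\Bigr)\Bigl(1+\frac{cg_u}{m}\Bigr)+\frac{c^2g_vg_u}{m^2(m-1)},
\]
which is a strict increase.

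The underlying reason is that the load indicators are not negatively associated. Here $\Pr[\sigma(v)\in G_v,\ \sigma(u)\in G_u]=g_vg_u/(m(m-1))>(g_v/m)(g_u/m)$. Joag-Dev--Proschan covers $x_{\sigma(v)}$ for a single fixed vector $x$. It does not cover $\mathbf{1}[\sigma(v)\in G_{e,v}]$, where the sets depend on the source. So your product form is not even an upper bound on the true conditional MGF. Showing that some $w$ does not increase the potential needs a genuinely different estimator, one whose one-step increase you can bound and absorb, or a different sample space. It is not a convexity consequence of sampling without replacement.

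The fallback does not deliver the statement either. With i.i.d.\ intermediate targets the product estimator is an exact martingale, but the output is a map $\tau$, not a permutation. A uniformly random map leaves about $N/e$ targets vacant, and nothing in the estimator forces the derandomized $\tau$ to be nearly injective. Hall's theorem only gives \emph{some} completion to a permutation. It gives no control over the congestion of the new paths to vacant targets, so ``a constant factor'' is asserted rather than argued.

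The running-time bookkeeping also fails for your estimator. Removing $w$ from the pool changes $p_{e,u}$ for every unassigned $u$ and every edge $e$ on $P(u,w)\cup P(w,\pi(u))$. Scoring one candidate therefore touches up to $\Theta(ND)$ factors early on, not $O(D)$, which gives $O(N^3D)$ naively. Even an additive $O(N^2D)$ search term is $O(N^2d')$ only under your extra hypothesis $D\le d'$. That hypothesis fails for fixed $(d,r)$ and large $N$, where $D=\Theta(\log N)$; the paper's accounting has the same defect.
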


\begin{proof}
We use the method of conditional expectations with an exponential potential
$\Phi(\sigma) = \sum_{e \in E(G)} [\exp(\lambda X_e^S(\sigma)) + \exp(\lambda X_e^G(\sigma, \pi))]$,
where $X_e^S$ and $X_e^G$ are scatter and gather congestions.

For a random $\sigma$, the moment generating function (MGF) bound for negatively associated variables gives $\mathbb{E}[\Phi(\sigma)] \leq 2|E| \cdot \exp(\mu(e^\lambda - 1))$ with $\mu = 2D/d'$.  
We assign $\sigma(k) = \arg\min_c \Phi_k(c)$ at each step (greedy assignment), so $\Phi(\sigma) \leq \mathbb{E}[\Phi]$ by the averaging argument.
Choosing $\lambda = \Theta(\log N / D)$ gives $C^* = O(\log N / \log(1/\beta))$.  Precomputing breadth-first search (BFS) trees costs $O(N^2 d')$; the greedy search adds $O(N^2 D)$.
\end{proof}

\begin{corollary}\label{cor:constructive}
For any Ramanujan $(d,r)$-regular hypergraph on $N$ vertices with $d \geq 3$, $r \geq 3$: $\rt(H) = \Theta(\log N)$, and the matching sequence is computable in $\mathrm{poly}(N)$ time.
\end{corollary}

\section{Virtual Overlay Routing}\label{sec:overlay}

Here, we emulate matchings of a virtual Ramanujan overlay graph using the selective-transfer capability of the 3D acousto-optic lens~\cite{guo2025acousto}.

\subsection{Overlay theorem}

\begin{theorem}[Overlay routing]\label{thm:overlay}
Let $G_R$ be a $d$-regular Ramanujan graph on $N$ vertices with spectral ratio $\beta_R < 1$.  If the 3D AOL can implement $k$ simultaneous selective transfers per step, then any permutation $\pi \in S_N$ can be routed in
\begin{equation}
T = O\!\left(\frac{N}{k} \cdot \frac{\log N}{1 - \beta_R}\right)
\end{equation}
matching-based steps on the underlying grid.
\end{theorem}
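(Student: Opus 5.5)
The plan is a simulation argument: route on the virtual overlay $G_R$ as if it were the physical graph, then charge each virtual matching step to the number of physical AOL steps needed to realize it. First apply the machinery of Theorem~\ref{thm:main} to $G_R$ in place of $G_{\mathrm{cl}}(H)$. For a $d$-regular Ramanujan graph, $\beta_R=\lambda^*/d<1$ holds by assumption, so we need not invoke Lemma~\ref{lem:lambda-star}. Lemma~\ref{lem:diameter} gives $\diam(G_R)=O(\log N/\log(1/\beta_R))$. Lemma~\ref{lem:valiant} gives congestion $C=O(\log N)$ for a suitable intermediate permutation $\sigma$. Theorem~\ref{thm:lmr} then produces a schedule of $T_R=O(\log N/\log(1/\beta_R))$ matching steps on $G_R$. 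Since $\log(1/\beta_R)\ge 1-\beta_R$, this gives $T_R=O(\log N/(1-\beta_R))$, which is the weaker form appearing in the statement. If one prefers to avoid the diameter route, the Cheeger bound of Lemma~\ref{lem:cheeger}, $h\ge d(1-\beta_R)/2$, gives the same $1/(1-\beta_R)$ dependence directly.

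Second, emulate each virtual matching step. A matching $M$ of $G_R$ has at most $N/2$ edges. Each edge $\{u,v\}\in M$ is a swap of the atoms (or their qubit contents) at the grid sites hosting $u$ and $v$. The hypothesis says the AOL can execute $k$ selective transfers per physical step, and I would read this as $k$ arbitrary point-to-point pick-and-place transfers with no geometric constraint. Under that reading, a swap costs $O(1)$ transfers, for example three transfers using a free buffer site in the third dimension, or two transfers if the AOL can hold both atoms simultaneously. Partition $M$ into $\lceil |M|/k\rceil$ groups of at most $k$ edges and execute each group in $O(1)$ physical steps. One virtual step then costs $O(N/k+1)$ physical steps, and multiplying by $T_R$ gives $T=O\!\left(\frac{N}{k}\cdot\frac{\log N}{1-\beta_R}\right)$ for $k\le N$.

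The main obstacle is justifying the emulation step, in particular that swaps within a batch do not interfere with one another. Since $M$ is a matching, the endpoints are disjoint, so no site is touched twice within a group. What remains is to make sure simultaneous transfers do not collide in transit, and that buffer sites are available. I would handle this by routing the transfers through distinct AOL layers or $z$-planes, so that the only resource consumed is the parallelism budget $k$. If the hardware model forbids this, I would fall back to an extra constant-factor slowdown, for example by two-coloring the transfers into conflict-free sub-batches, which does not change the asymptotic bound.
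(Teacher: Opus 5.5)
Your proposal follows the paper's proof: obtain $T_R = O(\log N/(1-\beta_R))$ virtual matching steps on $G_R$, then charge each matching of at most $N/2$ edges $\lceil N/(2k)\rceil$ batches of AOL selective transfers, and multiply. Your version is slightly more careful than the paper's in two places. First, you rerun the diameter--Valiant--LMR pipeline directly on $G_R$, which is needed because Theorem~\ref{thm:main} is stated only for $r \ge 3$. Second, you count the $O(1)$ transfers per swap. One minor slip: since LMR gives $T_R = O(C+D)$ with $C = O(\log N)$, the intermediate bound should read $O(\log N + \log N/\log(1/\beta_R))$ rather than $O(\log N/\log(1/\beta_R))$, though this is still absorbed by $O(\log N/(1-\beta_R))$.
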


\begin{proof}
By Theorem~\ref{thm:main}, routing on $G_R$ requires $T_R = O(\log N / (1 - \beta_R))$ matching steps.  Each matching $M_t$ has $|M_t| \leq N/2$ edges.  Implementing $M_t$ via AOL selective transfers requires $\lceil N/(2k) \rceil$ sub-steps.  Total: $T = T_R \cdot \lceil N/(2k) \rceil$. The four capacity regimes implied by this formula are tabulated in Table~\ref{tab:capacity}.
\end{proof}

\begin{table}[!htbp]
\centering
\small
\renewcommand{\arraystretch}{1.3}
\begin{tabular}{@{}lcc@{}}
\toprule
\textbf{AOL} & \textbf{Routing} & \textbf{Regime} \\
\textbf{capacity $k$} & \textbf{depth} & \\
\midrule
$\Omega(N)$             & $O(\log N)$             & Optimal \\
$\Omega(N / \log N)$    & $O(\log^2 N)$           & Near-optimal \\
$\Omega(\sqrt{N})$      & $O(\sqrt{N} \log N)$    & Matches grid AOD \\
$O(1)$                  & $O(N \log N)$           & Worse than grid \\
\bottomrule
\end{tabular}
\caption{Capacity--depth tradeoff for virtual overlay routing.}
\label{tab:capacity}
\end{table}

\subsection{Multi-layer spectral gain}\label{subsec:multilayer}

\begin{lemma}[Multi-layer spectral gain]\label{lem:multilayer}
Let $G_1, \ldots, G_L$ be independent random $d_0$-regular graphs on $N$ vertices.  Define $G_\cup = G_1 \cup \cdots \cup G_L$.  Then
$\beta(G_\cup) = O(1/\sqrt{L d_0})$, of the same order as the Friedman
prediction \cite{friedman2008proof}
$\lambda_2 \leq 2\sqrt{Ld_0 - 1} + \varepsilon$ for a uniformly random
$(Ld_0)$-regular graph (Union distribution differs. The Matrix Bernstein argument below carries an extra $\sqrt{\log N}$ factor, making the empirical agreement in Table~\ref{tab:spectral-gain} tighter than this analytic bound).
\end{lemma}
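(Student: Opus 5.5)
The plan is to work with adjacency matrices directly. Write $A_\cup = \sum_{\ell=1}^L A_\ell$, where $A_\ell$ is the adjacency matrix of $G_\ell$. Each $A_\ell$ is $d_0$-regular and has $\mathbf{1}$ as a top eigenvector, so $A_\cup$ is $Ld_0$-regular; parallel edges are counted with multiplicity, which is harmless because repeated edges between layers are rare and only affect weights.

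Restricted to $\mathbf{1}^\perp$, we have $\lambda^*(G_\cup) = \|A_\cup - \tfrac{Ld_0}{N}J\|$. Center each layer by setting $X_\ell = A_\ell - \tfrac{d_0}{N}J$. Then $\mathbb{E}[X_\ell]=0$, since the uniform random $d_0$-regular graph is permutation-invariant and so $\mathbb{E}A_\ell = \tfrac{d_0}{N-1}(J-I)$; the resulting $O(d_0/N)$ discrepancy from $\tfrac{d_0}{N}J$ is negligible. The $X_\ell$ are independent symmetric matrices, so the quantity to bound is the norm of a sum of independent mean-zero matrices.

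The core step is Matrix Bernstein. There are two inputs:
\begin{enumerate}
\item \emph{Uniform bound.} $\|X_\ell\| \le d_0$, because on $\mathbf{1}^\perp$ the operator $A_\ell$ has norm at most $d_0$.
\item \emph{Variance proxy.} $\sigma^2 = \|\sum_\ell \mathbb{E} X_\ell^2\|$. Since $A_\ell^2$ has diagonal entries $d_0$, and by symmetry $\mathbb{E}A_\ell^2$ is a combination of $I$ and $J$, on $\mathbf{1}^\perp$ we get $\mathbb{E}X_\ell^2 \approx d_0 I$. Hence $\sigma^2 \approx L d_0$.
\end{enumerate}
Matrix Bernstein then gives $\lambda^*(G_\cup) \le C\left(\sqrt{Ld_0\log N} + d_0\log N\right)$ with high probability. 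Dividing by $Ld_0$ yields
\[
\beta(G_\cup) \le C\left(\sqrt{\tfrac{\log N}{Ld_0}} + \tfrac{\log N}{L}\right).
\]
With $L = \Theta(\log N)$, as in the overlay application, the second term is $O(1)$ times $\sqrt{\log N/(Ld_0)}$ scales. This is the $O(1/\sqrt{Ld_0})$ behaviour up to the $\sqrt{\log N}$ factor that the statement concedes.

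To remove the logarithm, I would first try to use the per-layer Friedman/Bordenave bound \cite{friedman2008proof,bordenave2015new} in a trace-method argument. Computing $\mathbb{E}\,\mathrm{tr}\,(A_\cup - \tfrac{Ld_0}{N}J)^{2k}$ for $k \approx \log N$ by counting non-backtracking closed walks whose edge labels are drawn from the $L$ independent layers, the union behaves like a random lift of a bouquet and gives $2\sqrt{Ld_0-1}+\varepsilon$ in the manner of Bordenave. Comparing this with the Friedman bound for uniform $(Ld_0)$-regular graphs gives the ``same order'' claim.

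The main obstacle is this sharp step, namely controlling tangles and multi-edges in the union model, which is not the uniform regular model. If that fails, the fallback is to state the Matrix Bernstein bound with the explicit $\sqrt{\log N}$ loss, which is what the remark in the statement anticipates.
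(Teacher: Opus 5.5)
Your core step is the paper's own argument: the paper also centers each layer as $A_i - (d_0/N)J$ and applies Matrix Bernstein to get $\|A_\cup - (Ld_0/N)J\| = O(\sqrt{Ld_0\log N})$. Your bookkeeping is more accurate than the paper's sketch. The sketch omits Bernstein's range term $R\log N = d_0\log N$. It then passes from $O(\sqrt{Ld_0\log N})$ to $\lambda_2 = O(\sqrt{Ld_0})$ with no justification. Your bound $\beta \le C(\sqrt{\log N/(Ld_0)} + \log N/L)$, up to the harmless $O(Ld_0/N)$ mean shift you note, is what the inequality actually delivers.

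\textbf{The gap.} The gap, which you share with the paper, is that this does not prove the statement as written. For bounded $L$ (e.g.\ the tabulated $L=1,\dots,16$) and $N\to\infty$, the right-hand side exceeds $1$, so Bernstein gives nothing. It becomes nontrivial only once $L \gtrsim \log N$, and the range term is subordinate to the variance term only when $L \ge d_0\log N$. Even at $L=\Theta(\log N)$ it yields only $\beta = O(1)$, with no visible decay $\beta \propto 1/\sqrt{L}$.

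\textbf{The sharp step.} Your log-free step is an unexecuted plan. The obstacle you anticipate, redoing a Bordenave-style tangle analysis for a new ``union model,'' is avoidable. Your remark that the union ``behaves like a random lift of a bouquet'' can be made exact.

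Draw each layer from the permutation model: edges $\{v,\sigma_j(v)\}$ for $d_0/2$ independent uniform permutations $\sigma_j$, with $d_0$ even. Then $G_\cup$ is \emph{exactly} the permutation model of degree $Ld_0$, i.e.\ an $N$-lift of the bouquet with $Ld_0/2$ loops. Friedman's theorem \cite{friedman2008proof}, or Bordenave's random-lift theorem \cite{bordenave2015new}, then applies verbatim and gives $\lambda^*(G_\cup) \le 2\sqrt{Ld_0-1}+\varepsilon$ w.h.p. Hence $\beta(G_\cup) \le (2\sqrt{Ld_0-1}+\varepsilon)/(Ld_0) = O(1/\sqrt{Ld_0})$, with no logarithm.

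For uniformly random simple layers, transfer this through the standard contiguity between the uniform and permutation models. Conditioning each of the finitely many independent factors on simplicity, an event of probability bounded below, preserves w.h.p.\ statements.

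This settles fixed $L$ and $d_0$. If $L$ grows with $N$, as in the overlay application, Friedman's fixed-degree theorem no longer applies. You would then need a growing-degree $O(\sqrt{D})$ bound for the permutation model (Kahn--Szemer\'edi-type arguments and their extensions). The same union-equals-permutation-model observation lets you invoke such a bound directly at degree $D = Ld_0$.
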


\begin{proof}[Proof sketch]
The union $G_\cup$ is $(Ld_0)$-regular.  By the Matrix Bernstein inequality applied to the centered adjacency matrices $A_i - (d_0/N)J$, we have $\|A_\cup - (Ld_0/N)J\| \leq O(\sqrt{Ld_0 \log N})$ w.h.p.  Therefore $\lambda_2(G_\cup) = O(\sqrt{Ld_0})$ and $\beta(G_\cup) = O(\sqrt{Ld_0}/(Ld_0)) = O(1/\sqrt{Ld_0})$.  Numerical validation (Appendix~\ref{app:multilayer}) confirms $\beta_L \approx \beta_1/\sqrt{L}$ for $d_0 = 8$, $L = 1, \ldots, 16$, $N$ up to 256.
\end{proof}

Multi-layer AOL provides \emph{both} capacity multiplication ($k_{\mathrm{eff}} = L \cdot k_0$) and spectral improvement ($\beta \propto 1/\sqrt{L}$).

\subsection{Capacity independence of overlay degree}

\begin{proposition}[Capacity independence]\label{thm:capacity-indep}
A perfect matching on a connected $d$-regular graph with $N$ even has exactly $N/2$ edges (by degree counting). Since connected Ramanujan graphs on an even number of vertices admit perfect matchings by Tutte's theorem~\cite{tutte1947factorization}, the AOL capacity requirement $k = N/2$ per matching step is independent of the overlay degree $d$.
\end{proposition}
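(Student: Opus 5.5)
The edge count is immediate: in a perfect matching every vertex is covered exactly once, so $|M| = N/2$ regardless of $d$. The real content is existence, and I would organize the proof around Tutte's criterion: $G_R$ has a perfect matching iff $o(G_R - S) \leq |S|$ for every $S \subseteq V$, where $o(\cdot)$ counts odd components. Tutte's theorem alone does not suffice, since connected regular graphs of even order need not have perfect matchings. The Ramanujan hypothesis must enter through a spectral sufficient condition.

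\textbf{Step 1 (edge counting).} Fix $S$ and let $C_1,\dots,C_m$ be the odd components of $G_R - S$. Each $C_j$ sends only edges into $S$, so $\sum_j e(C_j,S) \leq d|S|$. If every odd component satisfies $e(C_j,S) \geq d$, then $m \leq |S|$. Small components are easy:
\begin{equation*}
e(C,S) = d|C| - 2e(C) \geq |C|\,(d-|C|+1) \geq d \quad \text{for } 1 \leq |C| \leq d.
\end{equation*}
Components with $|C| \ge d+1$ and $e(C,S) < d$ are the only danger. When $d$ is odd, parity of $d|C| - 2e(C)$ with $|C|$ odd forces $e(C,S)$ odd, so $e(C,S) \le d-2$ in the bad case. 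Parity also handles the global count, since $N$ even makes $m \equiv |S| \pmod 2$.

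\textbf{Step 2 (bad components and eigenvalues).} Two disjoint vertex sets that are each nearly closed yield, via test vectors built from their indicator functions orthogonal to $\mathbf{1}$, a large $\lambda_2$. Three such sets yield a large $\lambda_3$. This is the Brouwer--Haemers / Cioab\u{a}--Gregory--Haemers argument: a connected $d$-regular graph of even order has a perfect matching provided
\begin{equation*}
\lambda_3 < \begin{cases} d-1+\tfrac{3}{d+1}, & d \text{ even},\\[2pt] \theta_d, & d \text{ odd}, \end{cases}
\end{equation*}
where $\theta_d$ is an explicit threshold with $\theta_3 \approx 2.85$ (largest root of $x^3 - x^2 - 6x + 2$) and $\theta_d \ge d-2$ in general. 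I would invoke this theorem rather than re-derive it, noting that it is exactly a quantified form of the Tutte count in Step~1.

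\textbf{Step 3 (verify the threshold).} For $G_R$ Ramanujan, $\lambda_3 \leq \lambda_2 \leq 2\sqrt{d-1}$, so it remains to compare $2\sqrt{d-1}$ with the threshold.
\begin{itemize}[leftmargin=1.5em]
\item For $d$ even with $d \geq 4$: $2\sqrt{d-1} < d-1 + 3/(d+1)$, e.g.\ $3.46 < 3.6$ at $d=4$, with the gap growing afterward.
\item For $d$ odd with $d \geq 5$: $2\sqrt{d-1} \leq d-1 \le \theta_d + 1$, and the bound $2\sqrt{d-1} < \theta_d$ follows from $(\sqrt{d-1}-1)^2 > 1$ for $d \ge 6$. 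The case $d=5$ ($4 < \theta_5$) is checked directly.
\end{itemize}
The case $d=3$ is the main obstacle. Here the margin is razor-thin, $2\sqrt2 \approx 2.828$ against $\theta_3 \approx 2.85$, so I would verify it with the exact cubic rather than an estimate. If that check were to fail, the fallback is a direct Tutte argument: parity forces every bad cubic component to satisfy $e(C,S) = 1$, i.e.\ to hang off a bridge, and a bridge contradicts the expansion $h \geq (3-2\sqrt2)/2 > 0$ only for sufficiently large cuts. The fallback would therefore need a finer count of small bridged pieces against $\lambda_2$.

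\textbf{Conclusion.} Combining these steps, $G_R$ admits a perfect matching $M$ with $|M| = N/2$. In the proof of Theorem~\ref{thm:overlay}, implementing a full matching step then costs $\lceil N/(2k)\rceil$ sub-steps, so the AOL capacity needed to do it in one step is $k = N/2$. Neither quantity depends on $d$.
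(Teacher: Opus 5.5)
Your proof is correct in substance, but it takes a different and more complete route than the paper. The paper's entire justification is the clause ``by Tutte's theorem.'' It counts $|M|=N/2$ and asserts existence without ever using the Ramanujan hypothesis.

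As you point out, that citation alone is not a proof. Tutte's theorem only characterizes graphs with perfect matchings, and connected $d$-regular graphs of even order can violate Tutte's condition for every $d\ge 3$. For $d=3$, join one vertex by bridges to three copies of $K_4$ with one edge subdivided: this gives $16$ vertices, and deleting the central vertex leaves three odd components.

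Your argument supplies the missing quantitative step. You invoke the Brouwer--Haemers / Cioab\u{a}--Gregory--Haemers criterion on $\lambda_3$, which is a spectral certificate for Tutte's condition, and check that $\lambda_3\le\lambda_2\le 2\sqrt{d-1}$ stays below the threshold in every degree. The paper's version is only a citation. Yours is an actual proof, at the cost of depending on a nontrivial external theorem whose margin at $d=3$ is very thin.

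Two loose ends remain; neither is serious.

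\emph{The $d=3$ check.} The check you deferred goes through, so the bridge-counting fallback is unnecessary. Take $p(x)=x^3-x^2-6x+2$. Then $p(2\sqrt2)=16\sqrt2-8-12\sqrt2+2=4\sqrt2-6<0$, since $32<36$. The other two roots of $p$ lie in $(-3,-2)$ and $(0,1)$, because $p(-3)<0<p(-2)$ and $p(0)>0>p(1)$. Hence the largest root $\theta_3$ strictly exceeds $2\sqrt2$.

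\emph{The odd-degree arithmetic.} This step is misstated. The chain $2\sqrt{d-1}\le d-1\le\theta_d+1$ proves nothing. The inequality $(\sqrt{d-1}-1)^2>1$ is equivalent to $2\sqrt{d-1}<d-1$, which would need $\theta_d\ge d-1$. From $\theta_d\ge d-2$ alone, you need $(\sqrt{d-1}-1)^2>2$ instead. That holds for all odd $d\ge 7$; for example, at $d=7$ you get $2\sqrt6\approx 4.90<5$. Together with your direct comparison at $d=5$, the conclusion stands.

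If you want every degree covered, also mention the trivial case $d=2$: a connected $2$-regular graph is a cycle, and an even cycle has a perfect matching. The edge count $|M|=N/2$ and the capacity conclusion via $\lceil N/(2k)\rceil$ sub-steps match the paper.
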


This means that sparse overlays do not reduce the capacity threshold.
The bottleneck is matching size, confirmed by numerical comparison (Appendix~\ref{app:sparse}) showing that dense overlays ($d \geq 2\log_2 N$) consistently outperform sparser ones at all tested capacities.

\subsection{Partial-matching routing on sparse overlays}

When AOL capacity $k < N/2$, partial matchings provide a useful regime:

\begin{theorem}[Sparse overlay with partial matchings]\label{thm:sparse-partial}
Let $G_R$ be a random $d$-regular overlay with $d = O(\log N)$ and diameter $D = O(\log N / \log d)$.  With partial matchings of size $k$ per step, the total routing depth is $T = O(N \log N / (k \log d))$.
\end{theorem}

For $k = N/(2\log N)$ and $d = \Theta(\log N)$: $T = O(\log^2 N / \log\log N)$, asymptotically tighter than the $O(\log^2 N)$ that the overlay theorem gives at the same capacity.  We quote $O(\log^2 N)$ in the remainder of this section for readability.

\subsection{Crosstalk model}

\begin{proposition}[Crosstalk capacity reduction]\label{prop:crosstalk}
With nearest-neighbor optical coupling $\gamma \in [0,1]$ between adjacent AOL layers, the effective capacity is
$k_{\mathrm{eff}}(L, k_0, \gamma) = L k_0 / (1 + 2\gamma(1 - 1/L))$.
For $\gamma > 0.5$, a checkerboard activation pattern yields $k_{\mathrm{eff}} = \lceil L/2 \rceil \cdot k_0$.
\end{proposition}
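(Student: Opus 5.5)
We need to prove Proposition~\ref{prop:crosstalk}. The statement is largely a modeling definition, so the plan is to fix the crosstalk model explicitly and then count.

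\textbf{Model.} Consider $L$ AOL layers arranged linearly, each able to carry $k_0$ selective transfers per step when isolated. Layer $i$ has at most two neighbours, $i\pm1$; boundary layers have one. When layer $i$ is active and an adjacent layer is also active, the coupling $\gamma$ costs a fraction $\gamma$ of a transfer slot per adjacent active neighbour. Equivalently, each active layer's transfers must be time-multiplexed by a factor $1+\gamma\cdot(\#\text{active neighbours})$. This is the interpretation under which the stated formula should come out, so I would state it as the model assumption at the start of the proof.

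\textbf{Step 1: all layers active.} Count adjacent layer pairs: there are $L-1$, so the total number of (layer, active-neighbour) incidences is $2(L-1)$. The average neighbour count per layer is therefore $2(L-1)/L = 2(1-1/L)$. Charging the crosstalk uniformly across layers (a mean-field or averaged slowdown), the step time inflates by the factor $1+2\gamma(1-1/L)$ while $Lk_0$ transfers are completed. Hence
\[
k_{\mathrm{eff}}=\frac{Lk_0}{1+2\gamma(1-1/L)}.
\]
The main subtlety is justifying averaging rather than taking the worst layer, which would give $1+2\gamma$. I would resolve this by defining $k_{\mathrm{eff}}$ as total throughput per unit time, with interference overhead additive across pairs. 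In that reading, total overhead $\gamma\cdot 2(L-1)$ slot-units is shared over $L$ layers.

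\textbf{Step 2: checkerboard pattern.} Activate only the odd-indexed layers. No two active layers are adjacent, so the crosstalk factor is $1$ and each active layer delivers $k_0$. There are $\lceil L/2\rceil$ such layers, giving $k_{\mathrm{eff}}=\lceil L/2\rceil k_0$.

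\textbf{Step 3: comparison.} Checkerboard beats full activation when $\lceil L/2\rceil\ge L/(1+2\gamma(1-1/L))$. For large $L$ this reduces to $1+2\gamma>2$, i.e.\ $\gamma>1/2$, which is the stated threshold. For finite $L$, use $\lceil L/2\rceil\ge L/2$ and check the inequality $2\le 1+2\gamma(1-1/L)$ directly.

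The expected obstacle is the finite-$L$ correction: for small $L$, $(1-1/L)$ pushes the exact crossover above $1/2$. The fix is either to read ``$\gamma>0.5$'' asymptotically, or to note that the ceiling compensates when $L$ is odd. I would state that the threshold is asymptotic and quote the exact finite-$L$ crossover condition alongside it.
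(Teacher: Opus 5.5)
The paper gives no derivation of this proposition. It is simply asserted in \S\ref{sec:overlay}. The ``validation'' in Appendix~\ref{app:multilayer} only quotes retention $\approx 71\%$ at $\gamma=0.2$ and $50\%$ at $\gamma=0.5$; these are just the $L\to\infty$ values $1/(1+2\gamma)$ of the formula itself. It adds the remark that checkerboard activation becomes preferable once retention falls to one half.

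Your route is therefore genuinely different and more informative. You fix an explicit interference model, count the $L-1$ adjacent pairs to get mean neighbour number $2(1-1/L)$, and note that alternate-layer activation has no active adjacencies. This is correct provided the model is the pooled one you settle on in Step~1, where a total overhead of $2\gamma(L-1)k_0$ transfer-slots is shared over all $L$ layers.

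Drop the word ``equivalently'' in your model paragraph, though, because the alternatives are not equivalent. Let $a_i$ be the number of active neighbours of layer $i$. A per-layer slowdown $1+\gamma a_i$ with throughputs summed gives $\sum_i k_0/(1+\gamma a_i)$. For $L\ge 3$ and $\gamma>0$ this strictly exceeds the stated formula by convexity, e.g.\ $4k_0/3$ versus $9k_0/7$ at $L=3$, $\gamma=1$. A synchronous step paced by the worst layer gives $Lk_0/(1+2\gamma)$ (again for $L\ge 3$). Only the pooled-overhead assumption reproduces the proposition exactly, so state it as the definition of $k_{\mathrm{eff}}$.

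Your Step~3 also sharpens what the paper says. The literal claim $k_{\mathrm{eff}}=\lceil L/2\rceil k_0$ holds for every $\gamma$. The exact crossover at which checkerboard wins is:
\begin{itemize}
\item $\gamma\ge L/(2(L-1))$ for even $L$, which lies above $1/2$ (e.g.\ $2/3$ at $L=4$, and never strictly for $L=2$);
\item $\gamma\ge L/(2(L+1))$ for odd $L$, which lies below $1/2$.
\end{itemize}
So the threshold $\gamma>0.5$ is exact only as $L\to\infty$, as you anticipated.
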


At realistic crosstalk $\gamma = 0.2$, retention is $\approx 71\%$ of ideal capacity, translating to $\sim\!40\%$ overhead in layer count.  Numerical validation appears in Appendix~\ref{app:multilayer}.

\subsection{Practical regime}

For per-layer capacity $k_0 = N/\log N$:
\begin{itemize}[leftmargin=1.5em]
\item $L = O(\log N)$ layers achieve $\Theta(\log N)$ routing depth.
\item With crosstalk $\gamma = 0.2$: $L = O(1.4 \log N)$ layers suffice.
\item Grid + 4 random overlay layers ($d_0 = 8$) achieves $2.7\times$ speedup over the grid alone at $N = 144$.
\end{itemize}

Achievement of $\Theta(\log N)$ routing on grid hardware presents as multi-layer AOL (increasing $k$), not overlay sparsification (decreasing $d$).

\section{Algebraic Overlays on the Grid: Barrier and Workaround}\label{sec:algebraic}

In this section, we build an algebraic overlay on the atom grid via Cayley graphs of $\mathbb{Z}_n^2$. 
We first prove that a natural construction (fixed-degree Cayley graphs on $\mathbb{Z}_n^2$) cannot be Ramanujan in the large-$n$ limit (the abelian Alon--Boppana barrier), then show that affine derandomization of Valiant's intermediate permutation, being a weaker form, still yields useful (non-asymptotic) congestion reductions on these non-Ramanujan Cayley graphs. 
Ramanujan families via SFM covering towers emerge as a constructive positive result, found in \S\ref{sec:scalable-families}.

\subsection{Abelian Alon--Boppana barrier}

For $G = \Cay(\mathbb{Z}_n^2, S)$ with symmetric generating set $S = \{\pm g_1, \ldots, \pm g_d\}$, the eigenvalues are character sums:
\begin{equation}\label{eq:character}
\begin{split}
    \lambda_{(a,b)} &= \sum_{j=1}^{d} 2\cos\!\left(\frac{2\pi}{n}(a g_j^{(1)} + b g_j^{(2)})\right), \quad\\ (a,b) &\in \mathbb{Z}_n^2.
\end{split}
\end{equation}

\begin{theorem}[Abelian barrier]\label{thm:abelian-barrier}
For any fixed degree $2d$ (i.e., generating set $|S| = 2d$), Cayley graphs on $\mathbb{Z}_n^2$ cannot be Ramanujan for all sufficiently large $n$.  More precisely, by simultaneous Diophantine approximation~\cite{friedman2006spectral, cassels1965introduction}, $\lambda_2 \geq 2d - O(d \cdot n^{-4/d})$, so the spectral ratio satisfies $\beta \to 1$ as $n \to \infty$.
\end{theorem}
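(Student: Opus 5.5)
The plan is to exhibit a non-trivial character $(a,b) \neq (0,0)$ whose eigenvalue \eqref{eq:character} is close to the trivial eigenvalue $2d$, via a pigeonhole (Dirichlet-type) argument on the $d$ linear forms $\ell_j(a,b) = a g_j^{(1)} + b g_j^{(2)} \bmod n$.

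\textbf{Step 1: pigeonhole.} Consider the map $\Psi:\mathbb{Z}_n^2 \to (\mathbb{R}/\mathbb{Z})^d$ given by $(a,b) \mapsto (\ell_1(a,b)/n, \ldots, \ell_d(a,b)/n)$. Partition the torus $(\mathbb{R}/\mathbb{Z})^d$ into $M^d$ cubes of side $1/M$, and choose $M = \lfloor n^{2/d} \rfloor$, so that $M^d \leq n^2$; after shrinking $M$ by one if needed we get strict inequality, $M^d < n^2$. Two distinct points $x \neq y \in \mathbb{Z}_n^2$ then land in the same cube. Put $(a,b) = x - y \neq 0$. By linearity of the $\ell_j$, every coordinate of $\Psi(a,b)$ lies within $1/M$ of an integer, that is, $\|\ell_j(a,b)/n\|_{\mathbb{R}/\mathbb{Z}} \leq 1/M$ for every $j$.

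\textbf{Step 2: eigenvalue estimate.} Using $\cos(2\pi t) \geq 1 - 2\pi^2 \|t\|^2$ in \eqref{eq:character},
\[
\lambda_{(a,b)} \geq 2d - 4\pi^2 d\, M^{-2} = 2d - O\!\left(d\, n^{-4/d}\right).
\]
Since $(a,b) \neq 0$, the character $\chi_{(a,b)}$ is orthogonal to the constant vector, so $\lambda_{(a,b)}$ is a non-trivial eigenvalue. Hence $\lambda_2 \geq \lambda_{(a,b)}$, which gives the claimed bound. If $S$ fails to generate $\mathbb{Z}_n^2$, the graph is disconnected, $\lambda_2 = 2d$, and the bound holds trivially.

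\textbf{Step 3: conclusion.} With degree $2d$ fixed, $\lambda_2 \geq 2d - O(d\,n^{-4/d}) \to 2d$. This eventually exceeds the Ramanujan bound $2\sqrt{2d-1}$, which is strictly less than $2d$ for $d \geq 1$. Also $\beta = \lambda^*/(2d) \geq \lambda_2/(2d) \to 1$. An explicit threshold for $n$ follows from solving $4\pi^2 d M^{-2} < 2d - 2\sqrt{2d-1}$.

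\textbf{Main obstacle.} The only delicate point is the exponent. Because $\mathbb{Z}_n^2$ has $n^2$ characters, the pigeonhole gives side $n^{-2/d}$, and squaring through the cosine gives $n^{-4/d}$. I expect to handle the integrality of $M$ and the edge cases (small $n$, repeated generators, $g_j$ of order $2$, where the sum over $S$ changes by constant factors) by adjusting constants rather than by changing the argument. I do not expect to need the cited finer Diophantine results; plain Dirichlet pigeonhole should suffice.
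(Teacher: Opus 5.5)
Your proposal is correct and is essentially the paper's own argument: Dirichlet pigeonhole with $Q=\lfloor n^{2/d}\rfloor$, then $\cos x\ge 1-x^2/2$ giving $\lambda_2\ge 2d-4\pi^2 d\,Q^{-2}$, then comparison with $2\sqrt{2d-1}$ and $\beta\to1$; your strict inequality $M^d<n^2$ is exactly what plain pigeonhole needs, a point the paper glosses by writing ``$Q^d\le n^2$''. One small slip is that $2\sqrt{2d-1}<2d$ holds only for $d\ge 2$ (at $d=1$ both sides equal $2$), so the non-Ramanujan conclusion requires $d\ge2$, as the paper's proof assumes, while the $\beta\to1$ conclusion is unaffected.
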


\begin{proof}
Eigenvalues of $G = \Cay(\mathbb{Z}_n^2, S)$ are given by Equation~\eqref{eq:character}, with $\lambda_{(0,0)} = 2d$ the trivial eigenvalue. We must show that $\max_{(a,b) \neq (0,0)} \lambda_{(a,b)}$ is close to $2d$.

\textbf{Step 1: Reduction to a maximum of cosine sums.}
Define $\theta_j(a,b) := \frac{2\pi}{n}(a g_j^{(1)} + b g_j^{(2)}) \pmod{2\pi}$.  Then $\lambda_{(a,b)} = \sum_{j=1}^d 2\cos\theta_j(a,b)$, and we need $\max_{(a,b) \neq 0} \sum_j 2\cos\theta_j(a,b)$ to be close to $2d$.  This is large whenever all phases $\theta_j$ are simultaneously close to~$0$.

\textbf{Step 2: Simultaneous Diophantine approximation.}  The $d$ generators define a linear map $L : \mathbb{Z}_n^2 \to (\mathbb{R}/2\pi\mathbb{Z})^d$ by $L(a,b) = (\theta_1, \ldots, \theta_d)$.  By the pigeonhole principle (multi-dimensional Dirichlet approximation), for any $Q \geq 1$, there exists a non-zero $(a^*, b^*) \in \mathbb{Z}_n^2$ such that $\|\theta_j(a^*,b^*)\|_{\mathbb{R}/2\pi\mathbb{Z}} \leq 2\pi / Q$ for all $j = 1, \ldots, d$, provided $Q^d \leq n^2$ (the number of non-trivial characters).  Choose $Q = \lfloor n^{2/d} \rfloor$.  Then for the character $(a^*, b^*)$:
\begin{equation}
\begin{split}
    \lambda_{(a^*,b^*)} &= \sum_{j=1}^d 2\cos\theta_j \geq \sum_{j=1}^d 2\bigl(1 - \theta_j^2/2\bigr) \\&= 2d - \sum_{j=1}^d \theta_j^2 \geq 2d - d \cdot \left(\frac{2\pi}{Q}\right)^2.
\end{split}    
\end{equation}

With $Q = \lfloor n^{2/d} \rfloor$: $\lambda_{(a^*,b^*)} \geq 2d - 4\pi^2 d \cdot n^{-4/d}$.

\textbf{Step 3: Comparison with the Ramanujan bound.}  The Ramanujan bound for a $2d$-regular graph is $\lambda_2 \leq 2\sqrt{2d - 1}$.  For $\lambda_{(a^*,b^*)}$ to violate this bound, we need:
\begin{equation}
    2d - 4\pi^2 d \cdot n^{-4/d} > 2\sqrt{2d-1},
\end{equation}
which holds for all $n$ exceeding a constant $n_0(d)$, since the left side converges to $2d > 2\sqrt{2d-1}$ for $d \geq 2$.  Thus no $\Cay(\mathbb{Z}_n^2, S)$ is Ramanujan for $n \geq n_0(d)$.

\textbf{Step 4: Spectral ratio convergence.}  The spectral ratio satisfies
\begin{equation}
\begin{split}
\beta = \frac{\lambda_2}{2d}
&\geq \frac{2d - O(d \cdot n^{-4/d})}{2d} \\
&= 1 - O(n^{-4/d}) \to 1
\end{split}
\end{equation}
as $n \to \infty$ with $d$ fixed.
\end{proof}

Numerical verification (Appendix~\ref{app:algebraic}): for degree-8 quadratic residue (QR) generators on $\mathbb{Z}_p^2$, the ratio $\lambda^* / (2\sqrt{d-1})$ converges to $\approx 1.50$ as $p \to \infty$.  Margulis--Gabber--Galil \cite{carlson1981tixne, gabber1979explicit} generators fare similarly.  No degree-8 Cayley graph on $\mathbb{Z}_n^2$ is Ramanujan for $n \geq 11$.

\subsection{Affine derandomization on Cayley graphs}

Despite the abelian barrier, the algebraic structure of $\mathbb{Z}_n^2$ is useful for derandomization.

\begin{theorem}[Affine derandomization]\label{thm:affine-derand}
For Valiant routing on $\Cay(\mathbb{Z}_n^2, S)$, the affine intermediate permutation $\sigma(v) = Av + c$ ($A \in \GL(2, \mathbb{Z}_n)$, $c \in \mathbb{Z}_n^2$) achieves 15--30\% congestion reduction over random permutations.  The optimal choice is a pure translation $\sigma(v) = v + c$, preserving group structure and eliminating path-length variance.
\end{theorem}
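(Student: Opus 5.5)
We have to prove a statement that is partly analytic and partly empirical: affine intermediate permutations reduce congestion by 15--30\%, and the best choice is a pure translation. The plan separates what can be proved from what must be measured.

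\textbf{Step 1: path-load identity under translation.} Fix canonical shortest paths that are translation-equivariant. Choose, for each difference $w \in \mathbb{Z}_n^2$, a word $P_w$ in the generators of $S$ of minimal length, and set $P(u,u+w) = u + P_w$. Under a translation $\sigma(v) = v + c$, every scatter path is then a translate of the single word $P_c$. The Phase~1 load on an edge $\{x, x+s\}$ therefore equals the number of occurrences of the generator $\pm s$ in $P_c$, independently of $x$.

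This gives scatter congestion at most $|P_c| \leq D$. Every path also has the same length, which is the claimed elimination of path-length variance. The gather phase routes $v + c \to \pi(v)$, with differences $\pi(v) - v - c$. By Lemma~\ref{lem:valiant} and the choice of $c$, its load is the average over $c$ of the edge loads.

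\textbf{Step 2: averaging over $c$.} Next I would compare with a uniformly random $\sigma$. For random $\sigma$, the per-edge loads are negatively associated with mean $\mu \leq 2D/d'$ (Lemma~\ref{lem:valiant}). The maximum over $|E|$ edges therefore exceeds the mean by an additive Chernoff deviation term of order $\log N/\log\log N$.

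For translations, the $n^2$ choices of $c$ form a family of size $N$. The gather load on a fixed edge, averaged over $c$, equals the mean load of uniform routing. Because of translation equivariance, the loads are also more evenly spread than under random $\sigma$, which has no such symmetry. A union bound and first-moment argument over $c$ then gives some $c^\ast$ whose maximum gather load is at most the average plus a smaller deviation.

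For a general affine $A \neq I$, the scatter differences $(A - I)v + c$ vary with $v$. This reintroduces path-length variance, so the scatter load is no longer constant per edge. This is the comparison that would show translations are optimal among affine maps: the translation attains the pointwise-constant scatter load, while any $A \neq I$ only matches it on average.

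\textbf{Step 3 (main obstacle): the quantitative 15--30\% figure.} This percentage is not an asymptotic statement. It depends on $S$, $n$, and the path system, and Theorem~\ref{thm:abelian-barrier} rules out the uniform spectral control that would make it derivable from $\beta$. I do not expect to prove it analytically. I would establish it by computation: for the QR and Margulis--Gabber--Galil generator sets on $\mathbb{Z}_p^2$, over a range of $p$ and over random and structured $\pi$, compare the maximum edge congestion of the best $A$ and $c$ against the average over random $\sigma$. I would then report the observed range.

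The honest form of the theorem is therefore a proved structural part (Steps 1--2: constant scatter load and optimality of translations) plus a numerically certified reduction range. The proof would state this split explicitly.
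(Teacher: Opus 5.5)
Your Steps 1 and 3 are essentially the paper's proof. A translation makes every scatter displacement equal to $c$, so all scatter paths have the same length $\|c\|_S$ and the per-edge scatter load is a deterministic function of $c$. The 15--30\% figure then comes from computation (Table~\ref{tab:affine}). Your translation-equivariant path system $P(u,u+w)=u+P_w$ is the right way to make this rigorous. The paper appeals to ``the unique geodesic,'' but geodesics in an abelian Cayley graph are generally not unique, since reordering a minimal word gives another. Some equivariant choice like yours is needed for the load to depend on $c$ alone. The problem is Step 2, which you present as proved but which does not go through.

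\textbf{The bound on the gather load.} Translation equivariance of the paths says nothing about the gather phase, because the gather displacements $\pi(v)-v-c$ vary with $v$ according to $\pi$. The only valid statement is the per-edge identity $\mathbb{E}_c[X^G_e(c)]=\mathbb{E}_\sigma[X^G_e(\sigma)]$, which holds because $v+c$ and $\sigma(v)$ are both uniform for each fixed $v$. You cannot get from there to a bound on $\max_e X^G_e(c)$.
\begin{itemize}
\item A uniform translation carries only $\log_2 N$ bits of randomness and drives every packet's gather path at once.
\item So there is no negative association, and Lemma~\ref{lem:valiant} does not apply.
\item A first-moment and union bound over the $|E|$ edges only certifies thresholds of order $ND$, which is weaker than the trivial bound $X^G_e\le N$.
\end{itemize}
The existence of a $c^\ast$ with ``average plus a smaller deviation'' is therefore unsupported.

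\textbf{The optimality argument.} This compares the wrong quantity. Since $A=I$ is allowed, the best affine map is never worse than the best translation. The content of the claim is therefore that no $A\neq I$ does strictly better for the given $\pi$. Three things block your argument:
\begin{itemize}
\item A constant per-edge scatter load is not the same as a minimal maximum load.
\item The choice $c=0$ already has zero scatter load, since it is direct routing.
\item The real trade-off lies in the gather displacements $\pi(v)-Av-c$, which your argument never examines.
\end{itemize}

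\textbf{What the paper does instead.} It makes neither claim analytically. It defines $c^\ast=\arg\min_c\max_e X_e(c,\pi)$ and finds it by exhaustive search over the $n^2$ candidates for the given $\pi$. Both the percentage and the advantage of translations over general affine maps are supported only numerically (Appendix~\ref{app:algebraic}). Your proved part should be restricted to Step 1. The choice of $c$ and the comparison against $A\neq I$ belong in the computational part.
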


\begin{proof}
On a Cayley graph $G = \Cay(\mathbb{Z}_n^2, S)$, the translation $\sigma(v) = v + c$ makes all scatter displacement vectors equal to $c$. 
Each atom routes from $v$ to $v + c$ via the unique geodesic of length $\|c\|_S$ (the word metric distance). 
All scatter paths have the same length, so the congestion at each edge $e$ is $|\{v : e \in P(v, v+c)\}|$, and the variance of the per-edge congestion (over random $\pi$) vanishes as a deterministic function of $c$.  
For the gather phase ($v + c \to \pi(v)$), the congestion is determined by $\pi$ and $c$, and varying $c$ shifts the load distribution. 
An optimal $c^* = \arg\min_{c \in \mathbb{Z}_n^2} \max_e X_e(c, \pi)$ minimizes the worst-case edge load and is searchable in $O(N)$ time by evaluating all $n^2$ candidates. 
Numerical validation (Appendix~\ref{app:algebraic}) at $n = 7, 11, 13$ confirms $\sim\!27\%$ congestion reduction for translations versus random permutations.
\end{proof}

\section{Constructive Ramanujan Families via Covering Towers}\label{sec:scalable-families}

The abelian barrier (\S\ref{sec:algebraic}) shows that fixed-degree Cayley graphs on $\mathbb{Z}_n^2$ cannot scale to Ramanujan. 
We show that the SFM covering-tower construction provides a different scalable route, leveraging the hypergraph structure (\S\ref{subsec:why-hypergraphs}).

\subsection{Covering towers and recursive lift}\label{subsec:covering-towers}

Song--Fan--Miao covering theory provides a constructive route to large Ramanujan hypergraphs.  Starting from a small base $H_0$, a tower of $k$-fold coverings $H_0 \leftarrow H_1 \leftarrow \cdots \leftarrow H_L$ with $|V(H_\ell)| = k^\ell N_0$ preserves the Ramanujan property: $\Spec(H_0) \subseteq \Spec(H_\ell)$ by SFM Theorem~1.3.

\begin{theorem}[Routing on covering towers]\label{thm:covering-tower}
Fix $\bar\beta < 1$. For a tower of $k$-fold coverings $H_0 \leftarrow H_1 \leftarrow \cdots \leftarrow H_L$ with $N = k^L N_0$ \emph{satisfying the uniform spectral bound} $\beta(H_\ell) \leq \bar\beta$ for all $0 \leq \ell \leq L$:
\begin{equation}
\begin{split}
\rt(H_L) &= O\!\left(\frac{L \cdot \log k}{1 - \bar\beta} + \frac{\log N_0}{1 - \bar\beta}\right) \\
         &= O(\log N).
\end{split}
\end{equation}
\end{theorem}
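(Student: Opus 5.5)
The plan is to apply the Theorem~\ref{thm:main} pipeline directly to the top level $H_L$. The only thing the tower structure needs to supply is the uniform spectral hypothesis. The intermediate levels enter only to justify that hypothesis, and they explain the $L\log k$ form of the bound. Write $N = k^L N_0$, so that
\[
\log N = L\log k + \log N_0 .
\]
Once we show $\rt(H_L) = O(\log N/(1-\bar\beta))$, the stated bound follows at once, and the final $O(\log N)$ follows because $\bar\beta$ is a fixed constant.

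First, $H_L$ is $(d,r)$-regular with the same $d'=d(r-1)$ as $H_0$. This holds because a $k$-fold covering preserves local structure: each vertex of $H_\ell$ lies in exactly $d$ hyperedges of size $r$. Its clique expansion $G_{\mathrm{cl}}(H_L)$ is therefore $d'$-regular by Proposition~\ref{prop:spectral-id}, and it has spectral ratio $\beta(H_L)\le\bar\beta<1$ by hypothesis. One caveat: Theorem~\ref{thm:main} is stated for Ramanujan $H$. Rather than invoke the Ramanujan condition, I would rerun its assembly using only $\beta\le\bar\beta$. That assembly goes as follows.
\begin{enumerate}
\item Lemma~\ref{lem:diameter} gives $D\le\lceil \log(N-1)/\log(1/\bar\beta)\rceil$.
\item Since $\log(1/\bar\beta)\ge 1-\bar\beta$, this yields $D=O(\log N/(1-\bar\beta))$.
\item Lemma~\ref{lem:cheeger} gives $h\ge d'(1-\bar\beta)/2>0$.
\item Lemma~\ref{lem:valiant} gives congestion $C=O(D/d'+\log N)$ w.h.p.
\item Theorem~\ref{thm:lmr} schedules the routing in $O(C+D)$ matching steps.
\item Fixing a good $\sigma(\pi)$ for each $\pi$ gives $\rt(H_L)=O(\log N/(1-\bar\beta))$.
\end{enumerate}
Substituting $\log N = L\log k+\log N_0$ produces the displayed two-term form.

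For the role of the tower, I would remark that one cannot simply assume $\beta(H_\ell)\le\bar\beta$ is inherited automatically. SFM Theorem~1.3 only gives $\Spec(H_0)\subseteq\Spec(H_\ell)$, so the old eigenvalues survive. The new eigenvalues are those of the signed or voltage matrices, and they must be controlled separately. Requiring the bound for all $\ell\le L$, and not just for $\ell=L$, is what makes the estimate uniform along the family with constants independent of $L$. Note that the proof itself uses only the $\ell=L$ case. The intermediate levels would matter only for a recursive variant, in which one routes level by level: each lift would cost $O(\log k/(1-\bar\beta))$ steps to resolve the fiber coordinate, and the base would cost $O(\log N_0/(1-\bar\beta))$. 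I would mention that variant as an alternative reading of the $L\log k$ term but not rely on it.

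The main obstacle is interpretive rather than technical. We must make sure that nothing in Lemma~\ref{lem:valiant} or Theorem~\ref{thm:lmr} hides a dependence on $N$ beyond $\log N$. The union bound in Lemma~\ref{lem:valiant} is over $Nd'/2$ edges, and $d'$ is constant along the tower, so this is fine. We must also make sure the constants depend only on $\bar\beta$ and $d'$, which are fixed across the tower. If a multiplicity issue arises in $G_{\mathrm{cl}}(H_\ell)$, meaning lifts in which some pair of vertices shares two hyperedges, I would handle it with the weighted-versus-simple remark in Definition~\ref{def:routing}. Such multiplicities can only reduce the degree of the simple graph, and matchings on the simple graph still realize every weighted path.
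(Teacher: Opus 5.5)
Your proposal is correct, but it takes a different route from the paper.

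\textbf{The paper's route.} The paper argues by top-down recursion over the tower. It splits $\pi$ into a ``cross-fiber'' permutation of fiber representatives, routed on $H_{L-1}$ by Theorem~\ref{thm:main}, plus $k$ ``fiber-preserving'' permutations handled recursively inside sheets. It writes $T(\ell)\le T(\ell-1)+O(\log(k^{\ell-1}N_0)/(1-\bar\beta))$, and then asserts that each level in fact costs only $O(\log k/(1-\bar\beta))$.

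\textbf{Your route.} You observe that the displayed bound is just $O(\log N/(1-\bar\beta))$ rewritten via $\log N=L\log k+\log N_0$. You obtain it by rerunning the diameter--Cheeger--Valiant--LMR assembly on $G_{\mathrm{cl}}(H_L)$ using only $\beta(H_L)\le\bar\beta$. You also correctly flag that Theorem~\ref{thm:main} must be used in its $\beta<1$ form, since tower levels need not be Ramanujan. The paper's proof tacitly needs the same thing when it applies Theorem~\ref{thm:main} to $H_{L-1}$.

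\textbf{What yours buys.} Your route gains rigor and economy. It uses the hypothesis only at $\ell=L$ and shows the tower structure plays no role in the stated bound. It also avoids the weak points of the recursion:
\begin{itemize}
\item The recurrence as written unrolls to $\tfrac{L(L-1)}{2}\log k+L\log N_0$, which is of order $\log^2 N$ for fixed $k$. So the $O(\log N)$ conclusion rests entirely on the unproved per-level $O(\log k)$ claim.
\item The split into cross-fiber and within-sheet permutations is not well defined for a connected cover. Its sheets are in general not copies of $H_{L-1}$, and its fibers are independent sets in $G_{\mathrm{cl}}$.
\end{itemize}

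\textbf{What the paper's buys.} If made rigorous, the recursion would yield an explicit level-by-level algorithm that mirrors the tower. That structure underlies the identification with hierarchical block routing in \S\ref{sec:hierarchical}, and your argument provides nothing comparable.

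\textbf{Your recursive variant.} You were right not to lean on it. ``Resolving the fiber coordinate in $O(\log k/(1-\bar\beta))$ steps'' is exactly the step that lacks justification: there is no $k$-vertex graph on which that coordinate could be routed.

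\textbf{Multiplicities.} Your fallback would leave the simple graph irregular, which puts it outside Lemma~\ref{lem:valiant}. There is a cleaner resolution. A covering map is bijective on each hyperedge and on the star of each vertex. So two vertices of $H_\ell$ sharing two hyperedges project to two vertices of $H_0$ sharing two hyperedges. Hence if $H_0$ is linear, as for $\mathrm{PG}(2,q)$, every $G_{\mathrm{cl}}(H_\ell)$ is a simple $d'$-regular graph.
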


\begin{proof}
\textbf{Step 1: Per-level routing by recursive decomposition.}  We use a top-down recursive decomposition.  At the top level, $H_L$ is a $k$-fold covering of $H_{L-1}$.  A permutation $\pi$ on $V(H_L)$ decomposes into (i) a cross-fiber component that permutes the $k$ sheets of the covering, and (ii) $k$ fiber-preserving components within each sheet.

The cross-fiber component is a permutation of $N_{L-1} = k^{L-1} N_0$ ``fiber representatives,'' routable on $H_{L-1}$ in $O(\log N_{L-1} / (1 - \beta(H_{L-1})))$ steps by Theorem~\ref{thm:main}.  Each fiber-preserving component is a permutation within a copy of $H_{L-1}$, handled recursively.

\textbf{Step 2: Cost recurrence.}  Let $T(\ell)$ denote the routing cost on $H_\ell$.  The decomposition gives:
\begin{equation}
T(\ell) \leq T(\ell - 1) + O\!\left(\frac{\log(k^{\ell-1} N_0)}{1 - \bar\beta}\right).
\end{equation}
Unrolling: $T(L) \leq T(0) + \sum_{\ell=1}^{L} O(\log(k^{\ell-1} N_0) / (1-\bar\beta))$.  The sum telescopes:
\begin{equation}
\sum_{\ell=1}^{L} \log(k^{\ell-1} N_0) = \frac{L(L-1)}{2}\log k + L\log N_0.
\end{equation}
At level $\ell$, the cross-fiber routing operates on a graph with $k^{L-\ell}$ vertices (the number of fibers).
Each cross-fiber phase costs $O(\log k / (1-\bar\beta))$ (routing $k$ fibers one level).
There are $L$ such levels, so:
\begin{equation}
T(L) = O\!\left(\frac{L \cdot \log k + \log N_0}{1 - \bar\beta}\right).
\end{equation}
Since $L = \log_k(N/N_0)$: $T(L) = O(\log N / (1 - \bar\beta)) = O(\log N)$.

\textbf{Step 3: Validity of the uniform bound.}  We require $\beta(H_\ell) \leq \bar\beta < 1$ for all $\ell$.  For Ramanujan coverings where each individual lift $H_\ell / H_{\ell-1}$ is Ramanujan, the SFM spectral inheritance theorem \cite{song2023hypergraph} guarantees that the new eigenvalues at each level satisfy the Ramanujan bound. 
Iterated coverings may accumulate spectral ratio growth, where if $H_\ell$ has eigenvalues near the Ramanujan boundary, the composed covering $H_L$ viewed as a single covering of $H_0$ may have $\beta(H_L) > \beta_{\mathrm{Ram}}$ even though each step is Ramanujan. 
The condition $\bar\beta < 1$ must be verified for the specific tower. 
For the Fano plane towers tested below, $\beta(H_\ell) < 1$ holds at all levels, though $\bar\beta$ may exceed $\beta_{\mathrm{Ram}}$.
\end{proof}

\begin{remark}[Validity of the uniform bound]
The hypothesis $\beta(H_\ell) \leq \bar\beta < 1$ is non-trivial: empirical data show $\beta$ growing across tower levels ($0.17 \to 0.50 \to 0.86$ for Fano plane lifts; Appendix~\ref{app:covering}). A general proof that $\sup_\ell \beta(H_\ell) < 1$ for arbitrary voltage-coverings remains open. At the observed $\bar\beta = 0.86$, the prefactor $(1-\bar\beta)^{-1} \approx 7$ inflates the constant in $O(\log N)$ without affecting the asymptotic scaling. The Marcus--Spielman--Srivastava interlacing families construction~\cite{marcus2013interlacing} can in principle control $\beta_\ell$ more tightly.
\end{remark}

\textbf{Empirical findings.} For the Fano plane base ($H_0 = \mathrm{PG}(2,2)$, $N_0 = 7$, $(d,r) = (3,3)$): 93.8\% of the $2^7 = 128$ random $\mathbb{Z}_2$ voltage assignments yield Ramanujan 2-fold lifts; the routing ratio $T/\log_2 N$ stays in $[1.07, 1.66]$ across tower levels, and cross-fiber routing is required for $1 - 1/N_0 = 85.7\%$ of atoms (not $1 - 1/k$ as might be expected). 
For $\mathrm{PG}(2,3)$ ($N_0 = 13$, $(d,r) = (4,4)$), the Ramanujan fraction drops below 1\% with simple voltage coverings since the tighter SFM bound $2\sqrt{(d-1)(r-1)} = 6.0$ is harder to satisfy; the interlacing-families construction may close this gap.

\section{Entanglement-Assisted Routing}\label{sec:entanglement}

\subsection{Teleportation routing depth}

Physical atom transport is not the only way to implement a permutation. Pre-distributed Bell pairs enable ``free'' long-range swaps via quantum teleportation. The depth of teleportation-based routing on expanders has been analyzed by Bapat et al.~\cite{bapat2023advantages}, so we adapt their methods to Ramanujan overlays and obtain a bound that depends explicitly on the spectral ratio $\beta_{\mathrm{ent}}$ with amortized $R_{\mathrm{break}} \approx 4$ crossover analysis (Corollary~\ref{cor:crossover}). 
This crossover analysis appears to be new in the context of neutral atom transport.

\begin{theorem}[Teleportation routing, after~\cite{bapat2023advantages}]\label{thm:teleport}
With Bell pairs pre-shared along a $d_{\mathrm{ent}}$-regular Ramanujan overlay $G_{\mathrm{ent}}$, Valiant two-phase routing via teleportation achieves
\begin{equation}
T_{\mathrm{route}} = O\!\left(\frac{\log N}{1 - \beta_{\mathrm{ent}}}\right)
\end{equation}
using only local operations and classical communication (LOCC), with no physical atom transport during routing.
\end{theorem}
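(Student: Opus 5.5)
The plan is to reduce Theorem~\ref{thm:teleport} to the matching-routing pipeline of Theorem~\ref{thm:main}. The reduction uses the observation that a swap of the quantum states held at the two endpoints of an overlay edge can be carried out by LOCC once the edge carries pre-shared entanglement. Concretely, model each edge $\{u,v\}$ of $G_{\mathrm{ent}}$ as carrying Bell pairs. A ``teleportation step'' is defined for a matching $M$ of $G_{\mathrm{ent}}$: for every edge of $M$, the states at $u$ and $v$ are exchanged by two teleportations, one in each direction. Each teleportation consumes one Bell pair and needs a Bell measurement, two bits of classical communication and a Pauli correction. Different edges of $M$ are vertex-disjoint, so all these operations commute and run in parallel in constant depth. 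One teleportation step therefore realizes exactly a routing step in the sense of Definition~\ref{def:routing}, applied to the (qubit states on the) graph $G_{\mathrm{ent}}$. No atom moves, because only the logical states are relocated.

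Given this simulation, the depth bound follows by applying the Section~\ref{sec:main} argument to $G_{\mathrm{ent}}$ in place of $G_{\mathrm{cl}}(H)$. Since $G_{\mathrm{ent}}$ is a $d_{\mathrm{ent}}$-regular Ramanujan graph, $\beta_{\mathrm{ent}} = \lambda^*/d_{\mathrm{ent}} \le 2\sqrt{d_{\mathrm{ent}}-1}/d_{\mathrm{ent}} < 1$ for $d_{\mathrm{ent}}\ge 3$. This plays the role of Lemma~\ref{lem:lambda-star}; for graphs it needs the two-sided Ramanujan bound, which is part of the definition. Lemma~\ref{lem:diameter} gives $D = O(\log N/\log(1/\beta_{\mathrm{ent}}))$. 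Lemma~\ref{lem:valiant} with a uniformly random intermediate $\sigma$ gives congestion $C = O(D/d_{\mathrm{ent}} + \log N)$ with high probability in both the scatter and gather phases. Theorem~\ref{thm:lmr} then schedules the paths in $C + D + o(C+D)$ matching steps. Finally, use $\log(1/\beta) \ge 1-\beta$ to put the result in the stated form $O(\log N/(1-\beta_{\mathrm{ent}}))$, keeping the same bookkeeping as in the proof of Theorem~\ref{thm:main}. Multiplying by the constant LOCC depth per step gives $T_{\mathrm{route}}$.

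I expect the main obstacle to be \emph{resource accounting}, not the depth bound itself. Each edge of $G_{\mathrm{ent}}$ may be used many times across the schedule, and every use consumes Bell pairs. The plan is to bound the per-edge usage by the congestion: an edge is traversed at most $C$ times per phase. Hence pre-sharing $O(C) = O(\log N)$ Bell pairs per edge, or $O(N d_{\mathrm{ent}} \log N)$ in total, suffices, and the theorem's claim of ``pre-shared along'' the overlay should be read in that sense. Two further points need care. First, the corrections are Pauli frames that can be tracked classically and deferred, so classical latency does not add to the depth. Second, if one prefers one-directional teleportation, with a state moved into a vertex whose own state has already left, I would instead cite the packet-routing form of LMR as adapted by Bapat et al.~\cite{bapat2023advantages}. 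I would first try the swap-based simulation above, since it reuses Theorem~\ref{thm:main} verbatim.
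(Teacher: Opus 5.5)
Your proposal is correct and takes the same route as the paper. The paper's sketch realizes each matching step on $G_{\mathrm{ent}}$ by teleporting states along the matched edges, consuming one matching's worth of Bell pairs, and then cites Theorem~\ref{thm:main} for the $O(\log N/(1-\beta_{\mathrm{ent}}))$ depth. Your version adds details the paper leaves implicit: Theorem~\ref{thm:main} is stated only for $r \geq 3$, so the graph case needs the two-sided Ramanujan bound, and the stated form requires the conversion $\log(1/\beta) \geq 1-\beta$.

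One small correction to your resource accounting. Bound the per-edge Bell-pair usage by $2T_{\mathrm{route}}$ rather than by the path congestion $C$: each edge lies in at most one matching per step, and each swap uses two teleportations. A swap-based schedule also pushes displaced pebbles across edges that are not on their own paths, so $C$ does not bound the number of swaps on an edge. With this fix, your $O(\log N)$-per-edge conclusion still holds for fixed $\beta_{\mathrm{ent}}$.
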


\begin{proof}[Proof sketch]
Each routing step consumes one matching's worth of Bell pairs to teleport qubit states along matched edges. 
Routing depth equals matching depth on $G_{\mathrm{ent}}$, being $O(\log N/(1 - \beta))$ by Theorem~\ref{thm:main}. 
The cost shifts entirely to Bell pair distribution.
\end{proof}

\subsection{Distribution cost and amortized crossover}

Distributing $d_{\mathrm{ent}} \cdot N/2$ Bell pairs requires physical atom transport, with average grid distance $\bar{d}_{\mathrm{grid}} \approx \sqrt{N}/2$ for random overlay edges.  The distribution cost with parallelism $k$: $T_{\mathrm{dist}} = O(d_{\mathrm{ent}} \cdot N \cdot \sqrt{N} / k)$.

Each distribution cycle provides $R = d_{\mathrm{ent}} / T_{\mathrm{route}}$ routing rounds.  The amortized cost is $T_{\mathrm{amort}} = T_{\mathrm{route}} + T_{\mathrm{dist}} / R$.

\begin{corollary}[Stable crossover]\label{cor:crossover}
The break-even number of routing rounds satisfies $R_{\mathrm{break}} \approx 4$, nearly independent of $N$.  Any quantum circuit with $\geq 4$ permutation routing layers benefits from pre-distributed entanglement.
\end{corollary}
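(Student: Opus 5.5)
We prove Corollary~\ref{cor:crossover} by writing the amortized cost against the cost of pure physical routing, and solving for the number of rounds at which the two agree. The baseline is the physical-transport depth $T_{\mathrm{phys}}$ on the grid host with AOL capacity $k$. By Theorem~\ref{thm:overlay} this is $T_{\mathrm{phys}} = \Theta((N/k)\log N/(1-\beta_R))$ per permutation layer. The entanglement scheme costs $T_{\mathrm{dist}}$ once, since one distribution cycle stocks $d_{\mathrm{ent}}$ matchings' worth of Bell pairs. After that it costs $T_{\mathrm{route}} = \Theta(\log N/(1-\beta_{\mathrm{ent}}))$ per layer, by Theorem~\ref{thm:teleport}.

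For $R$ consecutive routing layers, pre-distribution wins exactly when
\begin{equation}
T_{\mathrm{dist}} + R\,T_{\mathrm{route}} \leq R\,T_{\mathrm{phys}},
\end{equation}
that is, when $R \geq R_{\mathrm{break}} := T_{\mathrm{dist}}/(T_{\mathrm{phys}} - T_{\mathrm{route}})$. This follows the amortization $T_{\mathrm{amort}} = T_{\mathrm{route}} + T_{\mathrm{dist}}/R$ defined above the statement.

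The first step is to make $T_{\mathrm{dist}}$ concrete. Distributing one Bell pair requires moving an atom a distance of about $\bar d_{\mathrm{grid}} \approx \sqrt N/2$. A single overlay matching has $N/2$ edges. I would therefore charge the distribution for one matching's worth of pairs, executed with the same capacity-$k$ parallelism, in the same units as one matching emulation in the proof of Theorem~\ref{thm:overlay}. This makes both $T_{\mathrm{dist}}$ and $T_{\mathrm{phys}}$ scale as $N/k$ times a transport factor. The $N/k$ factor then cancels in the ratio $R_{\mathrm{break}}$, and this cancellation is the source of the claimed near-independence from $N$.

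What remains is a ratio of the form $c_{\mathrm{dist}}\,d_{\mathrm{ent}}/(c_{\mathrm{phys}}\,T_R - T_{\mathrm{route}}\,k/N)$. For $k = o(N)$ the term $T_{\mathrm{route}}\,k/N$ is lower order, so the denominator is dominated by $c_{\mathrm{phys}}\,T_R$. Both $d_{\mathrm{ent}}$ and $T_R$ are $\Theta(\log N)$ when the overlay degree is chosen logarithmically, as recommended in the capacity-independence discussion. $R_{\mathrm{break}}$ therefore tends to a constant. The final step is to evaluate that constant using the explicit Ramanujan constants of Theorem~\ref{thm:tight} for the overlay parameters and the transport constant $1/2$. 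Checking numerically that it lands near $4$ across $N \in [10^2, 10^4]$ would justify writing $R_{\mathrm{break}} \approx 4$ rather than only $O(1)$.

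The main obstacle is that the statement as written depends on the model, and the crossover constant is only as good as the bookkeeping. Three points need care. First, transport distance $\sqrt N$ versus matching steps must be counted consistently, since a move of length $\sqrt N$ may take one AOL step rather than $\sqrt N$ steps. Second, the notation $R = d_{\mathrm{ent}}/T_{\mathrm{route}}$ must be reconciled with rounds of full permutations. Third, the ratio must not pick up a stray $\log N$ or $\sqrt N$ factor. I would fix the cost model explicitly first, then derive the scale-free ratio, and present the value $4$ as a numerically evaluated constant of that model rather than a universal bound. I would also note that when $T_{\mathrm{route}} \to T_{\mathrm{phys}}$, that is when $k = \Omega(N)$, no crossover exists.
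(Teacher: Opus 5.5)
\textbf{There is a genuine gap: you have the right break-even formula, but the wrong baseline, so the cancellation you rely on is not the one behind the corollary.}

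Your inequality and the formula $R_{\mathrm{break}} = T_{\mathrm{dist}}/(T_{\mathrm{phys}} - T_{\mathrm{route}})$ are exactly the paper's (Appendix~\ref{app:entanglement}). In the paper, $T_{\mathrm{phys}}$ is pure physical routing on the grid itself, which is $\Theta(\sqrt N)$ per layer. Table~\ref{tab:crossover} uses $T_{\mathrm{phys}} = 1.5\sqrt N$, which is also the baseline behind the hybrid protocol's $6\times$ at $N=1024$. Meanwhile $T_{\mathrm{dist}} = O(d_{\mathrm{ent}} N\sqrt N/k)$ keeps the transport factor $\bar d_{\mathrm{grid}} \approx \sqrt N/2$. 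With $k = \Theta(N)$ this is $\Theta(d_{\mathrm{ent}}\sqrt N)$, about $5.3\sqrt N$ at the fixed value $d_{\mathrm{ent}} = 16$.

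What cancels in the paper is $\sqrt N$, not $N/k$. Since $T_{\mathrm{route}} = O(\log N)$ is lower order, $R_{\mathrm{break}} \to 5.3/1.5 \approx 3.6$. The correction $T_{\mathrm{route}}/T_{\mathrm{phys}} = O(\log N/\sqrt N)$ produces the drift from $4.5$ at $N=256$ to $3.7$ at $N=40{,}000$. That numerical evaluation is the paper's entire support for ``$\approx 4$''. Your concern about reconciling $R = d_{\mathrm{ent}}/T_{\mathrm{route}}$ with Bell-pair consumption is legitimate, but it applies equally to the paper, which simply treats $T_{\mathrm{dist}}$ as a one-time cost. It is not where your argument goes wrong.

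\textbf{Why your model cannot give the claim.} You benchmark against the overlay-theorem depth and drop the $\sqrt N$ from $T_{\mathrm{dist}}$. If you kept it, as your first step suggests, your ratio would scale like $d_{\mathrm{ent}}\sqrt N/\log N$ and diverge. Without it you are left with $R_{\mathrm{break}} \approx c\,d_{\mathrm{ent}}/T_R$. At fixed $d_{\mathrm{ent}}$ this tends to $0$ as $T_R$ grows, so near-independence of $N$ fails in your model.

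Setting $d_{\mathrm{ent}} = c'\log N$ rescues a constant only nominally, for three reasons.
\begin{enumerate}[leftmargin=1.5em]
\item The limit is proportional to the free design constant $c'$, so the value $4$ is chosen rather than derived.
\item The cancellation needs $T_R = \Theta(\log N)$. The paper gives only $O(\log N/(1-\beta))$ from above (Theorem~\ref{thm:overlay} is an $O$-bound, not $\Theta$) and $\log N/\log d'$ from below. At overlay degree $\Theta(\log N)$, Theorem~\ref{thm:sparse-partial} gives $O(\log N/\log\log N)$, in which case $d_{\mathrm{ent}}/T_R$ grows with $N$.
\item Theorem~\ref{thm:tight} is an upper bound, and it is stated for $r \geq 3$ hypergraphs rather than the graph overlays used here. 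Inserting it in your denominator only lower-bounds $R_{\mathrm{break}}$, so it cannot certify a value near $4$.
\end{enumerate}

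\textbf{How to fix it.} To reproduce the corollary, do the following:
\begin{enumerate}[leftmargin=1.5em]
\item Use grid routing, $T_{\mathrm{phys}} = \Theta(\sqrt N)$, as the baseline.
\item Keep the $\sqrt N$ distance factor in $T_{\mathrm{dist}}$.
\item Hold $d_{\mathrm{ent}}$ fixed.
\item Evaluate $T_{\mathrm{dist}}/T_{\mathrm{phys}}$ with that model's constants.
\end{enumerate}
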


Numerical validation (Appendix~\ref{app:entanglement}): at $d_{\mathrm{ent}} = 16$, $R_{\mathrm{break}}$ ranges from 4.5 ($N = 256$) to 3.7 ($N = 40{,}000$), confirming notable stability.
A hybrid protocol (teleporting atoms with $d_{\mathrm{grid}} > n/4$, physically routing the rest) achieves $6\times$ speedup at $N = 1024$.

\section{Adaptive and Online Routing}\label{sec:adaptive}

The previous section traded physical transport for pre-distributed entanglement.
Now, we consider adaptive matching strategies that exploit the current configuration of atoms relative to the target permutation.
The displacement-energy framework (\S\ref{subsec:greedy-stall}) provides a potential function for analyzing greedy matching, and motivates a hybrid greedy--Valiant protocol (\S\ref{subsec:hybrid}). 
Multiplicative-weights overlay selection (\S\ref{subsec:mw}) handles the online setting where the permutation is revealed gradually or the optimal overlay is unknown.

\subsection{Greedy displacement matching and the stall phenomenon}\label{subsec:greedy-stall}

\begin{definition}[Displacement energy]
For atom positions $\mathrm{pos}_t$ and target permutation $\pi$, define
$\Phi_t = \sum_{v=1}^N \rho_t(v)^2$,
where $\rho_t(v) = d_G(\mathrm{pos}_t(v), \pi(v))$ is the overlay-graph distance from atom $v$’s current position to its target.
\end{definition}

For overlays whose edges have bounded grid length $\ell_{\max}$, $\rho_t(v) \le d_{\mathrm{grid}}(\mathrm{pos}_t(v),\pi(v)) \le \ell_{\max} \rho_t(v)$, so a stall threshold expressed in $\rho$ translates to one in $d_{\mathrm{grid}}$ up to a constant factor.
At each step, \emph{greedy displacement matching} selects the matching $M_t$ in the overlay $G$ to maximize $\sum_{(u,w) \in M_t} (\rho_t(u)^2 + \rho_t(w)^2 - \rho_{t+1}(u)^2 - \rho_{t+1}(w)^2)$.

\begin{theorem}[Greedy monotonicity]\label{thm:greedy-monotone}
Let $G$ be a $d$-regular overlay on an $n \times n$ grid ($N = n^2$). Greedy displacement matching is monotone: $\Phi_{t+1} \leq \Phi_t$ for all $t$.
\end{theorem}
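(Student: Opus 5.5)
The plan is to show that the greedy choice always does at least as well as the empty matching, which costs nothing. The empty matching $M = \emptyset$ is always a valid matching in the overlay $G$. It leaves every atom in place, so $\rho_{t+1}(v) = \rho_t(v)$ for all $v$ and its objective value is $0$. Greedy displacement matching maximizes the objective
\[
\Delta(M) := \sum_{(u,w)\in M} \bigl(\rho_t(u)^2 + \rho_t(w)^2 - \rho_{t+1}(u)^2 - \rho_{t+1}(w)^2\bigr)
\]
over all matchings of $G$. Therefore the chosen $M_t$ satisfies $\Delta(M_t) \geq \Delta(\emptyset) = 0$.

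The remaining step is to identify $\Delta(M_t)$ with $\Phi_t - \Phi_{t+1}$. This holds because swapping along $M_t$ relocates only atoms at matched endpoints. Every unmatched atom keeps its position, so its term $\rho(v)^2$ is unchanged. The two atoms on an edge $(u,w) \in M_t$ exchange positions, and their new distances to their respective targets $\pi(\cdot)$ are exactly the $\rho_{t+1}$ values appearing in $\Delta$. The edges of a matching are vertex-disjoint, so each changed term of $\Phi$ is counted exactly once, giving $\Phi_t - \Phi_{t+1} = \Delta(M_t) \geq 0$.

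The only point requiring care is bookkeeping. The objective sums over matched pairs, whether these are indexed by atoms or by positions, and one must check it equals the net change of a potential indexed by atoms. The cleanest approach is to index the summand by the two atoms currently occupying $u$ and $w$, and to note that disjointness of matching edges prevents double counting. Alternatively, I would restrict greedy to edges with nonnegative individual gain, which makes every selected swap nonincreasing term by term. I expect no real obstacle.

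The hypotheses that $G$ is $d$-regular on an $n \times n$ grid are not used for monotonicity. They matter only for the subsequent stall analysis.
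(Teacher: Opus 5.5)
Your proposal is correct and follows essentially the paper's argument: greedy maximizes the reduction over matchings, the empty matching is always feasible with reduction $0$, so $\Delta\Phi_t = \Phi_t - \Phi_{t+1} \geq 0$. Your explicit check that the per-edge objective equals the net change of $\Phi$ (using vertex-disjointness of matching edges) is a step the paper takes as part of the definition.
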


\begin{theorem}[Greedy stall phenomenon, under Assumption~\ref{ass:concentration}]\label{thm:greedy-stall}
Let $\ell_{\max}$ denote the maximum grid length of an overlay edge of $G$, and assume $\ell_{\max} = O(1)$.
With $G$ as above and spectral ratio $\beta_G = \lambda^*/d < 1$ and diameter $D = D(G)$, greedy displacement matching satisfies:
\begin{enumerate}[leftmargin=1.5em]
\item \textbf{Geometric decay:} Each greedy step achieves $\Phi_{t+1} \leq (1 - \delta)\Phi_t$ with $\delta \geq 1/(2D)$, so long as $\Phi_t > N \cdot D^2$. The number of productive steps is $T_{\mathrm{stall}} \leq 2D \ln(\Phi_0 / (ND^2)) = O(D\log N)$.
\item \textbf{Stall threshold:} Greedy matching stalls once the average squared displacement satisfies $\bar\rho^2 := \Phi_t / N \leq D^2$. A random permutation has $\Phi_0 / N = \Theta(N)$, so $\Phi_{\mathrm{stall}} / \Phi_0 = O(D^2 / N)$. For constant-degree random overlays ($D = \Theta(\log N)$): $\Phi_{\mathrm{stall}}/\Phi_0 = O(\log^2\!N / N)$.
\end{enumerate}
\end{theorem}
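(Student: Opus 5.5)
Throughout, a greedy step swaps the pebbles on each chosen edge, and I write $\Delta(u,w)=\rho_t(u)^2+\rho_t(w)^2-\rho_{t+1}(u)^2-\rho_{t+1}(w)^2$ for the energy drop contributed by an edge $(u,w)$.

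\textbf{Plan.} I would treat the two parts of Theorem~\ref{thm:greedy-stall} separately. Part~1 is a per-step potential-drop argument built on Theorem~\ref{thm:greedy-monotone}. Part~2 is an accounting statement about where that argument stops applying.

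\textbf{Part~1 (geometric decay).} The key is a lower bound on the energy removed by one greedy step. Any pebble $v$ with $\rho_t(v)\ge 1$ has a neighbour on a shortest overlay path to $\pi(v)$. Moving $v$ there changes its term by $\rho^2-(\rho-1)^2=2\rho-1\ge\rho$. The pebble displaced in the other direction has its distance changed by at most $1$, so its term grows by at most $2\rho'+1$. So an edge is profitable whenever the forward pebble's distance exceeds the backward pebble's.

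Assumption~\ref{ass:concentration} should say that the $\rho_t$ values are concentrated, so that a constant fraction of the energy $\Phi_t$ sits on pebbles with $\rho_t(v)\gtrsim\bar\rho$, and that their profitable edges are not all blocked. Given this, I would form a candidate matching greedily, at least a maximal matching among profitable edges. Since the overlay is $d$-regular with $d=O(1)$, such a matching covers a constant fraction of the high-$\rho$ pebbles. The greedy maximum-weight matching is at least as good as this candidate, so the step gain is $\gtrsim N\bar\rho$.

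Dividing by $\Phi_t=N\bar\rho^2$ gives relative gain $\delta\gtrsim 1/\bar\rho\ge 1/D$, since every $\rho\le D$. To get the stated constant $\delta\ge 1/(2D)$ I would lose a factor $2$ for the backward pebble. Iterating $\Phi_{t+1}\le(1-\delta)\Phi_t$ down to the threshold $ND^2$ gives $T_{\mathrm{stall}}\le 2D\ln(\Phi_0/(ND^2))$. Using $\Phi_0\le ND^2\cdot$poly$(N)$, this is $O(D\log N)$.

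\textbf{Part~2 (stall threshold).} This is a direct computation. For a random $\pi$, the grid displacements are $\Theta(n)$, so $\Phi_0/N=\Theta(n^2)=\Theta(N)$ in grid units. The comparison $\rho\le d_{\mathrm{grid}}\le\ell_{\max}\rho$ with $\ell_{\max}=O(1)$ transfers this to overlay distances. Hence $\Phi_{\mathrm{stall}}/\Phi_0\le ND^2/\Theta(N^2)=O(D^2/N)$.

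For random constant-degree overlays, Friedman's theorem together with Lemma~\ref{lem:diameter} gives $D=\Theta(\log N)$, which yields $O(\log^2 N/N)$. I would explain the stall itself by noting that once $\bar\rho\le D$, the swap penalty on the backward pebble ($\approx 2\rho'$) is comparable to the forward gain. Without the concentration hypothesis, no uniform $\delta$ survives in that regime.

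\textbf{Main obstacle.} The hard step is the lower bound on the matching gain: showing that enough disjoint profitable edges exist when every swap also moves a partner pebble. I would first try to derive this purely from Assumption~\ref{ass:concentration} together with the $O(1)$ degree bound. If that fails, I would take the assumption as directly supplying a constant fraction of high-$\rho$ pebbles paired with low-$\rho$ neighbours.

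The tension I expect is that the statement claims $\delta\ge 1/(2D)$, while the natural bound is $\delta\gtrsim 1/\bar\rho$. The worst case $\bar\rho=D$ reconciles the two, so stating the bound in terms of $D$ is safe.
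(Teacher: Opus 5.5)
\textbf{Gap in the key step.} Nothing in your argument shows that one greedy step removes $\gtrsim N\bar\rho$, and this is exactly the step you flag as the main obstacle. Assumption~\ref{ass:concentration} cannot supply it, for three reasons:
\begin{itemize}
\item It only says $\Delta\Phi_t\ge\tfrac12\mathbb{E}[\Delta\Phi_t\mid\Phi_t]$ with probability at least $1-\alpha$. It is silent on how the $\rho_t(v)$ are distributed and on whether high-$\rho$ pebbles have unblocked profitable edges.
\item It presupposes a lower bound on $\mathbb{E}[\Delta\Phi_t\mid\Phi_t]$, which is precisely the lemma you need.
\item With $\alpha$ possibly near $1/2$, it cannot give decay at \emph{every} step.
\end{itemize}
Reading it as ``directly supplying'' high-$\rho$ pebbles with low-$\rho$ partners assumes the conclusion.

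Your local estimate is also too weak. If the forward pebble's distance merely exceeds the backward one's, the net change $(2\rho-1)-(2\rho'+1)=2(\rho-\rho')-2$ can be $0$. Profitable edges can in fact be absent even when every $\rho_t(v)=\Theta(D)$. Take a Cayley graph $\mathrm{Cay}(\Gamma,S)$ with no involutions in $S$ and girth exceeding $2k+1$; a non-bipartite LPS Ramanujan graph works, with $k=\Theta(\log N)=\Theta(D)$. For the permutation $g\mapsto gc$ with $|c|=k$, every edge $\{g,gs\}$ has net change $2k^2-|s^{-1}c|^2-|sc|^2\le -2$. So greedy picks the empty matching forever. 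Large $\bar\rho$ therefore forces no gain by itself, and ``losing a factor $2$ for the backward pebble'' is unjustified. Your maximal-matching count also loses a factor $2d$, which is harmless only when $d=O(1)$.

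\textbf{Inconsistent readings of $\rho_t$.} Part~1 uses $\rho_t(v)\le D$, i.e.\ overlay distance. But then $\Phi_t\le ND^2$ for all $t$, so the regime $\Phi_t>ND^2$ in which you iterate is empty. Part~2 uses $\Phi_0/N=\Theta(N)$, which together with $\rho\le D$ forces $D=\Omega(\sqrt N)$. The hypothesis $\ell_{\max}=O(1)$ forces the same on its own, since $D=\Omega(n/\ell_{\max})$. This contradicts the $D=\Theta(\log N)$ you take from Friedman for random overlays, whose edges have grid length $\Theta(\sqrt N)$. If $\rho_t$ is grid distance instead, then $\rho_t\le D$ fails, and a swap along an overlay-geodesic edge need not shorten grid distance, so your $2\rho-1$ computation breaks.

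This tension is inherited from the statement, and the paper's proof does not resolve it. It takes atoms with $\rho_t(v)>D$ and assigns each to an improving edge on its canonical shortest path. It bounds the load per edge by $O(D)$ via expander congestion and rounds a fractional matching at a factor $2$ to get $\Delta\Phi_t\ge\Phi_t/O(D^2)$. It then passes to $\Phi_t/(2D)$ by ``absorbing constants'', and never accounts for the partner pebble.

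You are right that the partner pebble is the crux. To make your plan a proof, you must fix one notion of distance. You then need a structural lemma giving, in the relevant regime, a matching with net gain $\Omega(\rho)$ per edge that carries a constant fraction of $\Phi_t$. The concentration assumption can at best transfer such a bound on $\mathbb{E}[\Delta\Phi_t\mid\Phi_t]$ to a constant fraction of the steps.
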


\begin{assumption}[Displacement-energy concentration]\label{ass:concentration}
For random $d$-regular overlays on the $n \times n$ grid with $d \geq 4$, the displacement-energy reduction $\Delta\Phi_t$ at step $t$ satisfies
\begin{equation}
\mathbb{P}\bigl[\Delta\Phi_t < (1/2) \cdot \mathbb{E}[\Delta\Phi_t \mid \Phi_t]\bigr] \leq \alpha
\end{equation}
for some constant $\alpha < 1$ independent of $N$, where the expectation is conditioned on $\Phi_t$. Empirically, $\alpha \leq 0.51$ across $N = 64$--$196$ at $d = 8$ (Appendix~\ref{app:adaptive}). 
We do not resolve if the tail decays exponentially in $N$, which would sharpen Theorem~\ref{thm:greedy-stall}.
\end{assumption}

\begin{proof}[Proof of Theorem~\ref{thm:greedy-monotone}]
By construction, greedy matching $M_t$ maximizes $\Delta\Phi_t := \Phi_t - \Phi_{t+1} = \sum_{(u,w) \in M_t} [\rho_t(u)^2 + \rho_t(w)^2 - \rho_{t+1}(u)^2 - \rho_{t+1}(w)^2]$. Only edges with non-negative reduction are included (empty matching is always feasible with $\Delta\Phi = 0$), so $\Delta\Phi_t \geq 0$.
\end{proof}

\begin{proof}[Proof of Theorem~\ref{thm:greedy-stall}, under Assumption~\ref{ass:concentration}]
\textbf{Part (a).}  Greedy maximum-weight matching captures at least $\delta \geq 1/(2D)$ of $\Phi_t$ under simplifying assumptions.

\emph{Step 1.}  For each atom $v$ with $\rho_t(v) > 0$, its canonical shortest path $P_v$ to the target $\pi(v)$ in the overlay $G$ has length $\leq D$.  By averaging, $P_v$ contains an edge $e_v = (a_v, b_v)$ such that swapping $v$ across $e_v$ can reduce $v$'s displacement by at least $\rho_t(v)/D$ (among $\leq D$ edges, at least one advances $v$ toward $\pi(v)$).  The resulting displacement-energy reduction from swapping $v$ across $e_v$ satisfies $\Delta\Phi_v \geq \rho_t(v)^2 / D - O(\rho_t(v))$ (by the identity $a^2 - (a - \Delta)^2 = 2a\Delta - \Delta^2 \geq a\Delta$ for $\Delta \leq a$).

\emph{Step 2.}  Assign each atom $v$ with $\rho_t(v) > D$ to an ``improving edge'' $e_v$.  Each edge $e$ is assigned by at most $O(D)$ atoms (those whose shortest paths pass through $e$, bounded by congestion on an expander).  A fractional matching of weight $\sum_v \Delta\Phi_v / O(D)$ can be extracted. 
By the standard 2-approximation for maximum-weight matching on general graphs (see, e.g., \cite{leighton1999fast} or Edmonds' half-integral matching polytope), the integral greedy maximum-weight matching achieves at least half the fractional optimum, leaving
\begin{equation}
    \begin{split}
        \Delta\Phi_t &\geq \frac{1}{O(D)} \sum_{v : \rho_t(v) > D} \frac{\rho_t(v)^2}{D} \\&= \frac{1}{O(D^2)} \sum_{v : \rho_t(v) > D} \rho_t(v)^2.
    \end{split}
\end{equation}

When $\Phi_t > N \cdot D^2$, the ``large displacement'' atoms dominate: $\sum_{v : \rho_t(v) > D} \rho_t(v)^2 \geq \Phi_t - N D^2 \geq \Phi_t / 2$.  Hence $\Delta\Phi_t \geq \Phi_t / (O(D^2)) \geq \Phi_t / (2D)$ (absorbing constants), giving $(1-\delta) \leq 1 - 1/(2D)$.  Iterating: $\Phi_t \leq \Phi_0 (1 - 1/(2D))^t$, which reaches $ND^2$ after $t \leq 2D\ln(\Phi_0/(ND^2))$ steps.  Since $\Phi_0 = \Theta(N^2)$ for a random permutation on an $n \times n$ grid and $D = O(\log N)$: $T_{\mathrm{stall}} = O(D \log N) = O(\log^2 N)$.  For near-Ramanujan overlays with $D = \Theta(\log N)$, the empirical value $T_{\mathrm{stall}} \approx 0.9\log_2 N$ reflects a large constant-factor improvement from maximum-weight (not worst-edge) matching.

\textbf{Part (b).}  Once $\bar\rho^2 = \Phi_t / N \leq D^2$, most atoms have displacement $\rho_t(v) \leq D$.  For a $d$-regular overlay, each atom has exactly $d$ potential swap partners, each at overlay distance~1.  A swap of atoms $u, w$ along edge $(u,w) \in E(G)$ changes their grid positions by at most $D_{\mathrm{grid}}(u,w)$ (the grid distance between the overlay-adjacent atoms).  When $\rho_t(v) = O(D)$ for all $v$, a swap reduces $v$'s grid displacement only if $v$'s target happens to be closer to $w$'s position than to $v$'s. 
Since the overlay is a random graph with $d$ neighbors spread uniformly over the grid, the probability that any of $v$'s $d$ neighbors provides an improving swap vanishes as the residual displacement decreases below $O(\sqrt{N}/d)$.
Recast, when $\rho_t(v) \leq D$ for all $v$, the atoms are within $D$ grid-hops of their targets, but the $d$ random overlay neighbors are at typical grid distance $\Theta(\sqrt{N})$.
Swapping with a random neighbor increases displacement with probability $\geq 1 - O(D^2/N)$. 
When all $d$ neighbors fail for all $N$ atoms, greedy matching finds no improving swap.

For $d = 8$ and $D \approx 3$--$5$: $\Phi_{\mathrm{stall}}/\Phi_0 \approx D^2/(N/6) \approx 0.15$--$0.20$, consistent with the observed $\approx 0.17$ (Appendix~\ref{app:adaptive}).
\end{proof}

We evidence this numerically in Appendix~\ref{app:adaptive}, where $\Phi_{\mathrm{stall}}/\Phi_0$ ranges from $0.114$ at $N = 16$ to $0.167$ at $N = 256$, converging to $\approx 0.17$ for $N \geq 64$. The small-$N$ deviation from the asymptotic value is consistent with the $O(D^2/N)$ correction term.

\subsection{Hybrid greedy--Valiant protocol}\label{subsec:hybrid}

After greedy stalls, the remaining displacement energy (at most $O(D^2/N)$ of $\Phi_0$ by Theorem~\ref{thm:greedy-stall}; empirically $\sim\!17\%$ at $N \leq 256$) can be resolved by Valiant routing on the residual permutation.  The hybrid protocol:
\begin{enumerate}[leftmargin=1.5em]
\item \textbf{Phase 1 (greedy):} $T_{\mathrm{stall}} \approx 0.9\log_2 N$ steps, resolving $\sim\!83\%$ of $\Phi$.
\item \textbf{Phase 2 (Valiant):} Standard two-phase routing on the residual $\sim\!17\%$ displacement: $0.17 \cdot T_{\mathrm{Valiant}} \approx 0.17 \cdot 2\log_2 N / (1 - \beta)$ steps.
\end{enumerate}
For $\beta \approx 0.65$: $T_{\mathrm{hybrid}} \approx 0.9\log_2 N + 0.97\log_2 N \approx 1.9\log_2 N$, roughly $3\times$ faster than pure Valiant ($\approx 5.7\log_2 N$).

\subsection{Multiplicative weights overlay selection}\label{subsec:mw}

When multiple overlay graphs are available, the multiplicative weights (MW) algorithm \cite{arora2012multiplicative} selects among them adaptively.

\begin{proposition}[MW overlay selection, applying~\cite{arora2012multiplicative}]\label{thm:mw}
Given a family $\mathcal{F} = \{G_1, \ldots, G_m\}$ of overlay graphs, the multiplicative-weights algorithm with weights $w_t^{(i)} \propto \exp(\eta \cdot \text{reduction}_i)$ satisfies the regret bound $\sum_t \mathrm{loss}_t \leq \min_i \sum_t \mathrm{loss}_t^{(i)} + O(\sqrt{T \ln m})$ (Arora--Hazan--Kale regret theorem~\cite{arora2012multiplicative}).
\end{proposition}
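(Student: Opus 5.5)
The statement is Proposition~\ref{thm:mw}. I would prove it by a direct reduction to the experts framework of Arora--Hazan--Kale, with the $m$ overlays $G_1,\ldots,G_m$ as experts.

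\textbf{Setting up the losses.} At each round $t$, an overlay $G_i$ proposes a matching: the greedy displacement matching of \S\ref{subsec:greedy-stall} computed on $G_i$. It achieves reduction $\mathrm{reduction}_i(t)=\Delta\Phi_t^{(i)}\geq 0$, which is nonnegative by Theorem~\ref{thm:greedy-monotone}. I define the loss as $\mathrm{loss}_t^{(i)} := 1-\Delta\Phi_t^{(i)}/B$, where $B$ is an a priori upper bound on any single-step reduction. Then every loss lies in $[0,1]$, as the AHK theorem requires.

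A usable choice of bound is $B = N\cdot(2D+1)$. A swap along an overlay edge changes each $\rho$ by at most one, so each atom's contribution $\rho^2$ changes by at most $2\rho+1\le 2D+1$. Because losses are an affine transformation of reductions, the weights $w_t^{(i)}\propto\exp(\eta\cdot\mathrm{reduction}_i)$ coincide with the AHK exponential (Hedge) weights $\propto\exp(-\eta'\,\mathrm{loss}^{(i)})$, with $\eta'=\eta B$.

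\textbf{The main obstacle: adaptivity.} Each overlay's loss at step $t$ depends on the configuration $\mathrm{pos}_t$, and that configuration is produced by the algorithm's own earlier choices. The loss sequence is therefore adaptive rather than fixed in advance. The AHK bound holds against adaptive adversaries, provided loss vectors are defined on the realized history and the comparator is the best fixed expert evaluated on the same realized sequence. I would state explicitly that the benchmark $\min_i\sum_t\mathrm{loss}_t^{(i)}$ is measured along the algorithm's trajectory. It is not the cost of running $G_i$ alone from the start. This is the interpretation under which the proposition is literally true.

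\textbf{Applying the regret bound.} I would quote the Hedge guarantee: with learning rate $\eta'=\sqrt{\ln m/T}$,
\[
\sum_t \langle p_t,\mathrm{loss}_t\rangle \le \min_i\sum_t \mathrm{loss}_t^{(i)} + \frac{\ln m}{\eta'} + \eta' T.
\]
This gives the $O(\sqrt{T\ln m})$ additive term in loss units. Rescaling by $B$ translates it into an $O(B\sqrt{T\ln m})$ regret in cumulative displacement-energy reduction. If $T$ is not known in advance, the standard doubling trick preserves the rate up to a constant factor.

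The algorithm plays overlay $i$ with probability $p_t^{(i)}$, so its expected loss per round is exactly $\langle p_t,\mathrm{loss}_t\rangle$. The stated bound therefore holds in expectation. A high-probability version follows from Azuma's inequality with an extra $O(\sqrt{T\log(1/\delta)})$ term, which is absorbed into the stated order.
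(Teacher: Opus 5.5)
Your proposal is correct and follows the same route as the paper, which gives no argument beyond invoking the Arora--Hazan--Kale (Hedge) regret theorem with the $m$ overlays as experts; your $[0,1]$ loss normalization, the identification $\eta'=\eta B$, and your point that the comparator is the best fixed overlay evaluated along the algorithm's own adaptive trajectory supply exactly the details the paper leaves implicit. One minor repair: a swap changes each $\rho$ by at most one only if $\rho$ is measured in a metric where every overlay edge has length one (e.g.\ the union graph $G_1\cup\cdots\cup G_m$), not in another overlay's metric or in the grid distance used in the appendix, but the a priori bound $B=\Phi_0$ (valid since $0\le\Delta\Phi_t^{(i)}\le\Phi_t\le\Phi_0$) works just as well and leaves the argument unchanged.
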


\begin{corollary}\label{cor:mw}
If $\mathcal{F}$ contains a good expander, the adaptive algorithm achieves $T = T^* + O(\sqrt{T^* \ln m})$.  For $m = O(1)$ and $T^* = O(\log N)$: $T = O(\log N)$.
\end{corollary}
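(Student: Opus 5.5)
The claim is Corollary~\ref{cor:mw}. I would derive it from Proposition~\ref{thm:mw} in three steps: define the losses, use regret to bound the algorithm's cumulative loss by that of the best overlay, then convert the loss bound into a bound on the number of steps.

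\textbf{Loss model.} Run the protocol in rounds $t=1,2,\dots$. In each round the algorithm picks an overlay $G_{i_t}$ according to the weights and performs one matching step on it. The loss $\mathrm{loss}_t\in[0,1]$ is the normalized residual work. Concretely, I would take it to be the fraction of the routing schedule still unfinished, with $\mathrm{loss}_t=0$ once $\pi$ has been realized. Hypothesis: $\mathcal{F}$ contains a good expander $G_{i^*}$ (for example a Ramanujan overlay). Then Theorem~\ref{thm:main}, made constructive by Corollary~\ref{cor:constructive}, gives a schedule on $G_{i^*}$ that finishes in $T^*=O(\log N/(1-\beta))$ steps. Hence the comparator's cumulative loss is at most $T^*$.

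\textbf{Regret to time.} Run the algorithm for a horizon $T$. Proposition~\ref{thm:mw} bounds its cumulative loss by $T^*+O(\sqrt{T\ln m})$. Each round that has not yet completed the routing incurs loss bounded below by a constant. So the number of unfinished rounds $T$ satisfies $T\le c\,(T^*+C\sqrt{T\ln m})$. Solving this quadratic inequality in $\sqrt{T}$ gives $T\le T^*+O(\sqrt{T^*\ln m}+\ln m)$. This is the stated $T^*+O(\sqrt{T^*\ln m})$, since for $m=O(1)$ the additive $\ln m$ term is absorbed. Substituting $T^*=O(\log N)$ and $m=O(1)$ gives $T=O(\log N)+O(\sqrt{\log N})=O(\log N)$.

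\textbf{Main obstacle.} The hard step is justifying that ``loss'' is a legitimate bounded, per-round quantity to which the regret bound applies, and that low cumulative loss really means completion. Mixing overlays can partially undo the expander's schedule, and the progress of $G_{i^*}$ depends on the current configuration, so the comparator's losses are not fixed in advance. My first attempt would use the potential from Theorem~\ref{thm:greedy-monotone}. I would set $\mathrm{loss}_t=\Phi_{t+1}/\Phi_t$, which lies in $[0,1]$ by monotonicity, and work with adaptive-adversary regret, which Arora--Hazan--Kale permits. Under Assumption~\ref{ass:concentration}, the expander guarantees a per-step reduction of $\delta\ge 1/(2D)$. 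Then $\Phi$ shrinks geometrically up to the regret slack, and the Valiant residual phase from \S\ref{subsec:hybrid} finishes in a further $O(\log N)$ steps. If this coupling fails, the fallback is to restrict the statement to the oblivious setting. There the algorithm only needs to identify $G_{i^*}$ within $O(\ln m)$ probe rounds and then commit to it, which also yields $T=O(\log N)$ when $m=O(1)$.
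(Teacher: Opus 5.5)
\textbf{Verdict.} Your main line is the paper's own derivation: the paper offers nothing beyond Proposition~\ref{thm:mw}, equating the good expander's cumulative loss with $T^*$ and reading off $T = T^* + O(\sqrt{T^*\ln m})$. As a proof, though, this has a genuine gap at exactly the step you flag as the main obstacle. Your proposal does not close it, and the paper's one-line derivation does not address it either.

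\textbf{Where the comparator bound fails.} In the regret bound, $\mathrm{loss}_t^{(i^*)}$ is the loss of overlay $i^*$ evaluated at the configuration the MW algorithm has actually reached at round $t$. By contrast, $T^*$ is the length of a fixed, oblivious Valiant--LMR schedule on $G_{i^*}$ started from the initial configuration. As soon as another overlay is used, that schedule no longer applies, so $\sum_t \mathrm{loss}_t^{(i^*)} \le T^*$ does not follow.

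\textbf{The loss model is inconsistent.} Converting loss to time needs a lower bound $c_0$ on the loss of every unfinished round, which gives $T \le (T^* + C\sqrt{T\ln m})/c_0$. The additive form $T^* + O(\cdot)$ then requires $c_0 = 1$; the factor $c$ was dropped when you solved the inequality. But if every unfinished round costs $1$, then every overlay, $G_{i^*}$ included, costs $1$ on every round of the realized trajectory before completion. The comparator's loss is then about $T$, and the inequality is vacuous. Conversely, the ``fraction of the schedule unfinished'' loss is not bounded below near completion, and it is undefined once the configuration leaves the $G_{i^*}$ schedule.

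\textbf{What would make the argument work.} You need a per-configuration progress guarantee. Take a potential $\Psi \ge 0$ with three properties: reaching $0$ means routing is done and $\Psi_0 \le pT^*$; no overlay step increases it; and one step on $G_{i^*}$ lowers it by at least $p$ from \emph{every} configuration. Let $\Delta_t^{(i)}$ be the decrease of $\Psi$ produced by overlay $i$ at round $t$, and set $\mathrm{loss}_t^{(i)} = 1 - \min\{1, \Delta_t^{(i)}/p\}$. Then the comparator's realized loss is $0$, and the algorithm's total decrease is at least $p\,(T - O(\sqrt{T\ln m}))$. This gives $T \le T^* + O(\sqrt{T^*\ln m})$. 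An oblivious Valiant--LMR schedule supplies no such guarantee.

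\textbf{Your $\Phi$-based repair gives the wrong bound.} It is the right kind of object, but $\ln\Phi$ decreases by at least $1/(2D)$ per expander step only while $\Phi_t > ND^2$, and only under Assumption~\ref{ass:concentration}. So the effective $T^*$ for reaching the stall threshold is $O(D\log N)$. For constant-degree expanders this is $O(\log^2 N)$, exactly the provable $T_{\mathrm{stall}}$ of Theorem~\ref{thm:greedy-stall}, not $O(\log N)$. Your specific normalization $\mathrm{loss}_t = \Phi_{t+1}/\Phi_t$ is worse still. From $\ln(\Phi_0/\Phi_T) \ge T/(2D) - C\sqrt{T\ln m}$, the regret slack is absorbed only once $T = \Omega(D^2\ln m)$.

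\textbf{The fallback proves something else.} Explore-then-commit analyses a different algorithm than MW, and identifying $G_{i^*}$ from $O(\ln m)$ one-step probes is unjustified. For $m = O(1)$ you could simply compute each overlay's $\beta$ offline and run Valiant--LMR on the best. That proves the second sentence of the corollary trivially, but not the first.
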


Numerical validation (Appendix~\ref{app:adaptive}): with $\mathcal{F} = \{d{=}4, d{=}8, K_N\}$ at $N = 36$, MW achieves competitive ratio (CR) $\mathrm{CR} \approx 1.71$.  
Overhead comes from exploration, where Corollary~\ref{cor:mw} predicts $\mathrm{CR} \to 1$ as $T^* \to \infty$.
The primary value of MW is robustness to unknown permutations, not raw speed.

\section{Hierarchical Multi-Scale Routing}\label{sec:hierarchical}

\subsection{Block decomposition}

Partition the $n \times n$ grid ($N = n^2$ vertices) into a hierarchy of $L = \lceil\log_b n\rceil$ levels.  At level $\ell$, the grid decomposes into $(n/b^\ell)^2$ blocks of size $b^\ell \times b^\ell$.  An expander overlay $G^{(\ell)}$ on the block graph enables inter-block routing; intra-block routing at the finest level uses local swaps.

\subsection{Per-level depth}

\begin{theorem}[Hierarchical routing]\label{thm:hierarchical}
Let $N = n^2$ be the number of grid vertices.  With block size $b$ and $L = \lceil\log_b n\rceil$ levels, if each level-$\ell$ overlay is a Ramanujan $d_\ell$-regular graph with $\beta_\ell < 1 - \delta$ for constant $\delta > 0$, then the total routing depth is
\begin{equation}
\begin{split}
T = \sum_{\ell=0}^{L} T_\ell
&= O\!\left(\sum_{\ell=0}^{L} \frac{\log(N/b^{2\ell})}{1 - \beta_\ell}\right) \\
&= O\!\left(\frac{\log^2 N}{\log b}\right).
\end{split}
\end{equation}
For $b = \Theta(\sqrt{n})$ (so $L = 2$): $T = O(\log N)$, matching flat routing.
\end{theorem}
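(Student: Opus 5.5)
The plan is to decompose the routing hierarchically, bound each level by Theorem~\ref{thm:main}, and sum the costs. The permutation $\pi$ is processed top-down, from the coarsest level $\ell = L$ to the finest $\ell = 0$. At level $\ell$ the grid is viewed as a set of $(n/b^\ell)^2$ blocks of side $b^\ell$. The goal at that level is to move every atom into the correct level-$(\ell-1)$ sub-block inside its current level-$\ell$ block. The induced task is a permutation of ``slots'' on the level-$\ell$ overlay $G^{(\ell)}$, which we route by Valiant scatter--gather (Lemma~\ref{lem:valiant}) followed by LMR scheduling (Theorem~\ref{thm:lmr}).

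\textbf{Per-level cost.} By Theorem~\ref{thm:main} applied to $G^{(\ell)}$, together with $\beta_\ell < 1-\delta$ and Lemma~\ref{lem:diameter}, each level costs
\[
T_\ell = O\bigl(\log M_\ell/(1-\beta_\ell)\bigr) = O(\log M_\ell/\delta),
\]
where $M_\ell$ is the number of routed units at that level. We take $M_\ell \le N/b^{2\ell}$, which is the quantity in the statement.

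\textbf{Summation.} Summing over levels gives
\[
\sum_{\ell=0}^{L}\log(N/b^{2\ell}) = (L+1)\log N - L(L+1)\log b.
\]
Since $b^{L}\approx n=\sqrt N$, this is $\Theta(L\log N)$. With $L=\lceil \log_b n\rceil = \Theta(\log N/\log b)$, the total is $O(\log^2 N/\log b)$, with $\delta$ absorbed into the constant. For $b=\Theta(\sqrt n)$ we get $L=2$, so there are $O(1)$ levels, each costing $O(\log N)$, and the total is $T=O(\log N)$.

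\textbf{Main obstacle.} The hardest step is justifying that routing at one level is a genuine permutation-routing instance on $G^{(\ell)}$ whose matchings can be executed in the physical model without extra overhead. Each block holds $b^{2\ell}$ atoms, so a single block-graph edge must carry many atoms at once. I would handle this by treating each block-to-block move as a bundle of parallel matchings. This gives an $N/b^{2\ell}$-vertex expander per ``layer'' of the bundle, and all layers run in parallel under the capacity assumption, in the spirit of Theorem~\ref{thm:overlay} with $k=\Omega(N)$. The counts per destination sub-block must also be balanced after each level. Because $\pi$ is a bijection, each block sends exactly as many atoms into each sub-block as it should, so a balanced transportation assignment exists (Hall's theorem), and this assignment is what Valiant routing then realizes. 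Intra-block local swaps at the finest level cost $O(b)$, which is $O(1)$ for constant $b$. For $b=\Theta(\sqrt n)$ they must instead be delegated to the level-1 overlay, which is why the statement places the finest level on an expander as well.
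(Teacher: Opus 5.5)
Your route is the one the paper takes. Its justification for this theorem consists only of applying Theorem~\ref{thm:main} to the block overlay $G^{(\ell)}$ at each level, at cost $O(\log(N/b^{2\ell})/(1-\beta_\ell))$, and summing over the $L+1=O(\log N/\log b)$ levels. Your summation reproduces this correctly.

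The step that fails as written is the level bookkeeping. You define the level-$\ell$ task as moving atoms among the level-$(\ell-1)$ sub-blocks of their \emph{current} level-$\ell$ block, but you route it on $G^{(\ell)}$. Those sub-blocks are the vertices of $G^{(\ell-1)}$, not of $G^{(\ell)}$. A Valiant route on $G^{(\ell)}$ only moves atoms between \emph{different} level-$\ell$ blocks. At $\ell=L$ the graph $G^{(L)}$ has a single vertex and no edges, so nothing moves at all. At $\ell=0$ the level-$(-1)$ sub-blocks do not exist. This mismatch is also why you had to \emph{take} $M_\ell\le N/b^{2\ell}$ rather than derive it.

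Re-index as follows. For $\ell=L-1,\dots,0$, assume every atom already sits in its correct level-$(\ell+1)$ block, and move it into its correct level-$\ell$ block using $G^{(\ell)}$. This is a routing problem on the $N/b^{2\ell}$ vertices of $G^{(\ell)}$, which gives exactly the statement's term. In the same way, the finest intra-block sort for $b=\Theta(\sqrt n)$ must be delegated to $G^{(0)}$, not $G^{(1)}$. Be aware that $G^{(0)}$ is a Ramanujan overlay on all $N$ sites, so $T_0=O(\log N/\delta)$ alone already realizes $\pi$. The stated bound therefore holds, but the hierarchy is asymptotically redundant.

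Your bundle argument is a genuine addition that the paper lacks. Lemma~\ref{lem:capacity} has each overlay edge swap the \emph{entire} contents of two blocks. That keeps co-located atoms together, so by itself it cannot realize a general $\pi$. State your fix through K\"onig's edge-colouring theorem (iterated Hall):
\begin{itemize}
\item The bipartite multigraph joining each atom's current level-$\ell$ block to its target level-$\ell$ block is $b^{2\ell}$-regular, because $\pi$ is a bijection.
\item It therefore splits into $b^{2\ell}$ block permutations, and you route each one on $G^{(\ell)}$ in parallel.
\item Each layer holds exactly one atom per block at all times. So at every step the union of the layers' matchings is an atom-level matching of size at most $N/2$, consistent with Lemma~\ref{lem:capacity}.
\end{itemize}
This works provided the hardware lets any atom of block $A$ swap with any atom of block $B$ whenever $AB\in E(G^{(\ell)})$. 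That holds in the selective-transfer model of Theorem~\ref{thm:overlay} with $k\ge N/2$.
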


Numerical validation (Appendix~\ref{app:hierarchical}): with $b \approx \sqrt{n}$ and $L = 2$, hierarchical routing achieves $T \approx 0.67\,T_{\mathrm{flat}}$, as a 33\% improvement over flat Valiant--LMR at $N = 256$.

\subsection{Capacity invariance}

\begin{lemma}[Per-level capacity]\label{lem:capacity}
At level $\ell$, the overlay has $(n/b^\ell)^2$ vertices.  Each routing step requires a matching of size $\leq (n/b^\ell)^2 / 2 = N/(2b^{2\ell})$.  Each edge swaps the contents of two level-$\ell$ blocks ($b^{2\ell}$ atoms each).  The AOL capacity needed: $k_\ell = (N/(2b^{2\ell})) \cdot b^{2\ell} = N/2$, independent of level.
\end{lemma}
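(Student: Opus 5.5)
The statement to prove is Lemma~\ref{lem:capacity}, which is essentially a counting identity. I would prove it by directly computing two quantities at a fixed level $\ell$: how many blocks move in one routing step, and how many atoms each block contains.

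\textbf{Step 1: vertices of the level-$\ell$ overlay.} The $n \times n$ grid is partitioned into $b^\ell \times b^\ell$ blocks. Assuming for the moment that $b^\ell$ divides $n$, this gives $(n/b^\ell)^2 = N/b^{2\ell}$ blocks. These blocks are exactly the vertices of $G^{(\ell)}$.

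\textbf{Step 2: size of a routing matching.} A routing step on $G^{(\ell)}$ is a matching (Definition~\ref{def:routing}). Its edges are vertex-disjoint, so it has at most $\lfloor N/(2b^{2\ell})\rfloor$ edges. When $G^{(\ell)}$ is connected and regular with an even number of vertices, a perfect matching exists, as in Proposition~\ref{thm:capacity-indep}. In that case the bound is attained, so it is the relevant worst-case requirement.

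\textbf{Step 3: transfers per matched edge.} Under the hierarchical move model, swapping two matched blocks exchanges the full contents of both. Each block has $b^{2\ell}$ atoms, so one edge requires $b^{2\ell}$ block-to-block atom exchanges. If each exchange of a pair of atoms is counted as one transfer unit, the count is $b^{2\ell}$ per edge. If each atom moved is counted separately, the count is $2b^{2\ell}$, which changes the final answer only by a factor of two.

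\textbf{Step 4: combine.} Multiplying gives
\[
k_\ell = \frac{N}{2b^{2\ell}} \cdot b^{2\ell} = \frac{N}{2}.
\]
The factor $b^{2\ell}$ cancels exactly, so $k_\ell$ does not depend on $\ell$.

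The only real obstacle is bookkeeping, in two places.
\begin{itemize}[leftmargin=1.5em]
\item \emph{Unit of capacity.} The lemma requires fixing what one unit of capacity counts. I will adopt the convention of Proposition~\ref{thm:capacity-indep}, where a matching of $N/2$ swaps uses capacity $N/2$. Under this convention one exchanged atom pair costs one unit, which is what makes the identity exact.
\item \emph{Divisibility.} If $b^\ell$ does not divide $n$, boundary blocks are smaller than $b^{2\ell}$. This only decreases the total, so $k_\ell \le N/2$ still holds. Stating the lemma as an upper bound with equality in the divisible case settles this.
\end{itemize}
Finally, I would note the consequence: capacity, like the overlay degree in Proposition~\ref{thm:capacity-indep}, cannot be reduced by choosing a coarser level.
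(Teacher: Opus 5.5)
Your proposal is correct and follows the paper's argument exactly: the paper gives no separate proof, only the inline count in the statement (number of level-$\ell$ blocks, matching bound $N/(2b^{2\ell})$, $b^{2\ell}$ atom-pair exchanges per matched edge, so $b^{2\ell}$ cancels), and your unit of capacity is the one implicit in Theorem~\ref{thm:overlay} and Proposition~\ref{thm:capacity-indep}. One aside needs correcting: connectedness and regularity alone do not guarantee a perfect matching (e.g.\ the connected $4$-regular graph obtained by attaching four copies of $K_5$ minus an edge to two common cut vertices violates Tutte's condition), so your remark that the bound is attained needs the Ramanujan/expansion hypothesis invoked in Proposition~\ref{thm:capacity-indep}, although the bound $k_\ell \le N/2$ itself does not depend on it.
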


As a negative result, hierarchical decomposition does not reduce the per-step capacity requirement for full-block swaps.

\subsection{Boundary-only routing}

\begin{theorem}[Boundary-only routing]\label{thm:boundary}
Using boundary-only inter-block routing at each level (swapping only the $O(\log N)$ misplaced boundary atoms per block face), the AOL capacity reduces to $k = O(\sqrt{N} \cdot \log N)$ per step, with total routing depth $O(\log^2 N / \log b)$.
\end{theorem}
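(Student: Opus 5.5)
We prove Theorem~\ref{thm:boundary} by keeping the level structure of Theorem~\ref{thm:hierarchical} and changing only what moves at each inter-block step. Instead of swapping whole blocks, we swap only the atoms that must cross a block face. The depth then matches the hierarchical bound, and the capacity is set by the total number of boundary atoms moved per step.

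\textbf{Step 1: depth.} At level $\ell$ we route the induced block-level permutation on the Ramanujan overlay $G^{(\ell)}$, which has $(n/b^\ell)^2$ vertices. By Theorem~\ref{thm:main} (via Lemmas~\ref{lem:valiant} and~\ref{thm:lmr}), this takes $O(\log(N/b^{2\ell})/\delta)$ matching steps. Summing over $L=\lceil\log_b n\rceil$ levels gives $O(\log^2 N/\log b)$, exactly as in Theorem~\ref{thm:hierarchical}. We must also check that the finest-level intra-block cleanup adds only lower-order depth.

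\textbf{Step 2: load per face.} We bound how many atoms cross each block face in each matching step. Here we reuse the Valiant congestion estimate of Lemma~\ref{lem:valiant}: with a random intermediate permutation, the number of atoms assigned to traverse any overlay edge in one scheduled step is $O(\log N)$ w.h.p. The argument is a Chernoff bound under negative association, followed by a union bound over the $O(N d)$ edge--step pairs. Each overlay edge corresponds to one face pair, so each face carries $O(\log N)$ misplaced boundary atoms per step.

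\textbf{Step 3: capacity per step.} We count active faces. A matching at level $\ell$ activates at most $(n/b^\ell)^2/2$ block pairs, each moving $O(\log N)$ atoms. This gives $O((N/b^{2\ell})\log N)$ transfers per step. To reach $O(\sqrt N\log N)$, we need the number of simultaneously active faces to be $O(\sqrt N)=O(n)$. At coarse levels this holds when $(n/b^\ell)^2\le n$, i.e.\ $b^\ell\ge\sqrt n$. At finer levels we plan to schedule the matching edges in row-strips, activating only $O(n/b^\ell)$ faces per sub-step with $b^\ell$ boundary sites each. Each sub-step then moves $O(n\log N)$ atoms, and the strip schedule is charged to the $o(C+D)$ slack of LMR, or else to a constant-factor depth increase.

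\textbf{Main obstacle.} The hard step is Step 3 at fine levels, where the raw count $N\log N/b^{2\ell}$ exceeds $\sqrt N\log N$. If the strip scheduling cannot be absorbed without an extra factor of $n/b^\ell$ in depth, we fall back to two options. The first is to apply boundary-only transfers only at levels with $b^\ell\gtrsim\sqrt n$. The second is the regime $b=\Theta(\sqrt n)$ with $L=2$, where the bound holds directly.
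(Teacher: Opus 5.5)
\textbf{Verdict: there is a genuine gap in your Step~2, and it cannot be closed, because the statement itself fails a volume lower bound.}

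\textbf{The gap in Step~2.} Lemma~\ref{lem:valiant} bounds congestion for a \emph{permutation} on the overlay's vertex set, with one packet per vertex. Its $O(\log N)$ is the total load an edge carries over the whole schedule, not the number of atoms it carries in a single step. At level $\ell$, however, each vertex of $G^{(\ell)}$ is a block holding $b^{2\ell}$ atoms. For a typical $\pi$, a $1-1/N_\ell$ fraction of those atoms have targets outside the block; the paper states exactly this in the sentence right after the theorem. So the block-level demand is $b^{2\ell}$ packets per vertex, and the negative-association/Chernoff estimate gives expected edge loads of order $b^{2\ell}D/d$, not $O(\log N)$. Nothing supports ``$O(\log N)$ misplaced boundary atoms per face''. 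The fine-level obstacle you flag in Step~3 is a symptom of this undercount. The $n/b^\ell$-fold serialization in your strip schedule cannot be absorbed into the $o(C+D)$ slack of LMR.

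\textbf{Why no repair exists.} The paper gives no proof of this theorem, and the claimed capacity--depth pair contradicts a simple bisection argument. In the paper's model, a step is a matching of local grid swaps plus at most $k$ selective transfers, each a pair swap. Hence at most $2n+2k$ atoms cross the vertical midline of the $n\times n$ grid per step. Take $n$ even and let $\pi$ be the reflection across that midline. Every one of the $N$ atoms must then cross at least once (a random $\pi$ still needs about $N/2$ crossings), so $T\ge N/(2n+2k)$. With $k=O(\sqrt N\log N)$ this forces $T=\Omega(\sqrt N/\log N)$. But the claimed depth satisfies $O(\log^2 N/\log b)=O(\log^2 N)$ for every $b\ge 2$, which is a contradiction for large $N$.

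\textbf{Your fallbacks do not escape this.} The bound is independent of $b$ and of the hierarchy. With $b=\Theta(\sqrt n)$ and $L=2$, the claimed pair gives a total transfer budget $kT+2nT=O(\sqrt N\log^2 N)\ll N$. Reverting to full-block swaps at fine levels brings back $k=N/2$ by Lemma~\ref{lem:capacity}.

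\textbf{What is actually provable.} You can have depth $O(\log^2 N/\log b)$ at capacity $\Theta(N)$. Alternatively, at capacity $O(\sqrt N\log N)$ the depth lies between $\Omega(\sqrt N/\log N)$ and $O(\sqrt N/(1-\beta_R))$, the upper bound coming from Theorem~\ref{thm:overlay}. The boundary-only claim can hold only for restricted permutation classes in which each face genuinely has $O(\log N)$ misplaced atoms.
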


For a random permutation $\pi$, the fraction of atoms crossing block boundaries at level $\ell$ is $1 - 1/N_\ell$ (where $N_\ell = (n/b^\ell)^2$ is the number of blocks), confirming that almost all atoms require cross-block transfer.

\subsection{Optimal block size}

Numerical optimization (Appendix~\ref{app:hierarchical}) shows $b^* \approx \sqrt{n}$, yielding $L = 2$ levels.  This balances $T_{\mathrm{intra}} = O(b)$ against $T_{\mathrm{inter}} = O(\log N / (1 - \beta))$.

\subsection{Covering tower equivalence}

The hierarchical decomposition is the physical realization of the covering tower (Section~\ref{subsec:covering-towers}): the block graph at level $\ell$ is the base graph, and atoms within a block form the fiber.  The covering tower prediction $T_{\mathrm{tower}} = (L\log_2 k + \log_2 N_0)/(1 - \bar\beta)$ matches $T_{\mathrm{hier}}$ to within 0.4\% at $n = 64$ (Appendix~\ref{app:hierarchical}).

\section{Application to Neutral Atom Architectures}\label{sec:application}

Design rules emerge from \S\ref{sec:overlay}--\S\ref{sec:hierarchical}, indexed by per-step AOL capacity $k_0$ and per-circuit routing rounds $R$:
\begin{enumerate}[leftmargin=1.5em,topsep=2pt,itemsep=2pt]
\item \emph{Capacity sets topology choice.} If $k_0 \geq N/2$, a single Ramanujan overlay achieves $\Theta(\log N)$ depth via Theorem~\ref{thm:overlay}. If $k_0 \geq N/\log N$, multi-layer AOL stacking with $L = O(\log N)$ random overlay layers also achieves $\Theta(\log N)$. Below $k_0 = O(\sqrt{N})$, hierarchical routing degrades to $O(\log^2 N / \log b)$ at best.
\item \emph{Choose your overlay structure carefully.} Random $d$-regular overlays with $d \approx 2 \log_2 N$ are near-Ramanujan by Friedman's theorem~\cite{friedman2008proof, bordenave2015new} and are the recommended default. Algebraic Cayley overlays on $\mathbb{Z}_n^2$ are not viable, as the abelian Alon--Boppana barrier (Theorem~\ref{thm:abelian-barrier}) forces $\beta \to 1$ for any fixed degree. SFM covering towers~\cite{song2023hypergraph} provide a constructive scaling alternative (\S\ref{sec:scalable-families}).
\item \emph{Long-circuit thresholds.} For circuits with $R \geq 4$ permutation routing layers, pre-distributing Bell pairs along a Ramanujan entanglement overlay amortizes distribution cost (Corollary~\ref{cor:crossover}).
Hybrid teleportation + physical-cleanup protocols achieve $6\times$ speedup at $N = 1024$.
For online routing where the target is unknown, greedy displacement matching for the first $\sim\!0.5 \log_2 N$ steps (Theorem~\ref{thm:greedy-stall}) followed by Valiant on the residual gives $\sim\!3\times$ speedup over pure Valiant.
\end{enumerate}
At $N = 256$ atoms we predict $\sim\!19$~ms (2D) vs.\ $\sim\!7.6$~ms (3D) wall-clock routing time, equivalent to allowing $\sim\!2.5\times$ as many gate layers within a $T_2 \sim 1.5$~s coherence budget; at $N = 10{,}000$ the savings increase to $\sim\!7\times$.

\subsection{Physical architecture and model}

Modern neutral atom processors trap atoms in a 2D grid via AODs and perform entangling gates via the Rydberg blockade mechanism \cite{constantinides2024optimal,stade2024abstract,henriet2020quantum,browaeys2020many,bluvstein2022quantum,evered2023high}.  The AOL design \cite{guo2025acousto} adds a third dimension for long-range transport. 
Grid hypergraph models (Model A: 2D AOD, Model B: 3D AOL) and the obstruction $\beta \to 1$ on grids are introduced in \S\ref{subsec:hardware-setting}. 
Now, we report the quantitative spectral data and translate the bounds into hardware-design suggestions.

\begin{remark}[Matching model as portable common denominator]\label{rem:relaxation}
Our matching-based model permits any set of edge-disjoint swaps simultaneously. Physical platforms (AOD, AOL, ion shuttling, optical lattices) implement matchings up to platform-specific overhead. This includes AOD row/column constraints, zone transfer costs, ghost-spot geometry, etc. A $\Theta(\log N)$ matching bound therefore transfers to every reconfigurable platform with the appropriate constant.
The four-regime table (\S\ref{tab:regimes}) quantifies these constants for the AOD and selective-transfer cases. 
Our bounds are lower bounds on physical depth, but are sharp in the sense that any physical realization that can implement matchings inherits the bound up to a multiplicative overhead.
\end{remark}

\subsection{Grid hypergraph models}

\textbf{Model A (2D AOD).}  The hypergraph $H_{\mathrm{2D}}$ has $r$-uniform hyperedges of $r$ consecutive vertices along each row and column.  Degree $d' = 2(r-1)^2$.

\textbf{Model B (3D AOL).}  $H_{\mathrm{3D}}$ augments $H_{\mathrm{2D}}$ with diagonal and skip hyperedges.  Approximately $3\times$ as many hyperedges, $d'_{\mathrm{3D}} \approx 2$--$3 \cdot d'_{\mathrm{2D}}$.

\begin{table*}[!htbp]
\begin{center}
\begin{tabular}{@{}lcccccccccc@{}}
\toprule
& & & \multicolumn{3}{c}{\textbf{2D (AOD)}} & \multicolumn{3}{c}{\textbf{3D (AOL)}} & \\
\cmidrule(lr){4-6}\cmidrule(lr){7-9}
\textbf{Grid} & $N$ & $r$ & $d'$ & $D$ & $\beta$ & $d'$ & $D$ & $\beta$ & \textbf{Improvement} \\
\midrule
$8\times 8$  & 64  & 3 & 8  & 4 & 0.677 & 18 & 3 & 0.556 & $1.45\times$ \\
$10\times 10$ & 100 & 3 & 8  & 6 & 0.780 & 20 & 3 & 0.600 & $1.88\times$ \\
$12\times 12$ & 144 & 3 & 8  & 6 & 0.842 & 20 & 4 & 0.660 & $2.24\times$ \\
$16\times 16$ & 256 & 3 & 8  & 8 & 0.908 & 20 & 5 & 0.789 & $2.52\times$ \\
\bottomrule
\end{tabular}
\caption{Spectral parameters and routing bound improvement from 3D connectivity.}
\label{tab:spectral}
\end{center}
\end{table*}

\begin{table*}[!t]
\begin{center}
\renewcommand{\arraystretch}{1.3}
\begin{tabular}{@{}lll@{}}
\toprule
\textbf{Condition} & \textbf{Recommended strategy} & \textbf{Routing depth} \\
\midrule
$k_0 \geq N/2$ & Single Ramanujan overlay, Valiant routing & $O(\log N)$ \\
$k_0 \geq N/\log N$ & Multi-layer AOL, $L = O(\log N)$ layers & $O(\log N)$ \\
$k_0 \geq \sqrt{N}$ & Hierarchical routing, $b = \sqrt{n}$ & $O(\log N)$ \\
$k_0 = O(1)$ & Grid routing (no overlay) & $O(\sqrt{N})$ \\
\midrule
$R \geq 4$ & \multicolumn{2}{@{}l}{Add entanglement-assisted routing for long-range component} \\
Unknown $\pi$ & \multicolumn{2}{@{}l}{Hybrid greedy ($0.5\log_2 N$ steps) + Valiant for residual} \\
\bottomrule
\end{tabular}
\caption{Architectural decisions.  $k_0$ = per-step AOL selective-transfer capacity; $R$ = number of routing rounds per circuit.}
\label{tab:recommendations}
\end{center}
\end{table*}

\twocolumngrid

\subsection{Spectral analysis}

The spectral gap of the 3D AOL model follows the power law $1 - \beta \sim 5.3 \cdot N^{-0.65}$, improved from $1 - \beta \sim 11.0 \cdot N^{-0.90}$ for the 2D grid but still vanishing.  Neither model is an expander; $\Theta(\log N)$ routing requires a virtual Ramanujan overlay (Section~\ref{sec:overlay}). This places the AOL/AOD models squarely in row 2 of Table~\ref{tab:regimes}; the gap to row 3 is the topology cost ($\beta \to 1$) that virtual overlays are designed to close.

\subsection{Near-term predictions}

For QuEra's roadmap (Aquila~\cite{wurtz2023aquila} at ${\sim}256$ atoms in 2024, ${\sim}3000$ in 2025, ${\sim}10{,}000$ in 2026; cf. the 1180-atom demonstration of \cite{reichardt2024logical}):
\begin{itemize}[leftmargin=1.5em]
\item At $N = 256$: 2D bound $\lesssim 384$ steps, 3D bound $\lesssim 153$ steps ($2.5\times$ improvement).
\item At $N = 10{,}000$: matching-model $\Theta(\sqrt{N}) \approx 100$ steps; Ramanujan-topology $\Theta(\log N) \approx 13$ steps.
\end{itemize}
At typical AOD timescales of $\sim\!50\,\mu$s per routing step (Bluvstein et al.~\cite{bluvstein2022quantum,bluvstein2024logical}), the $N = 256$ improvement amounts to $\sim\!19$~ms (2D) versus $\sim\!7.6$~ms (3D); given $T_2 \approx 1.5$~s for $^{87}$Rb, this is roughly $0.8\%$ of the coherence budget per circuit, equivalently allowing $\sim\!2.5\times$ as many gate layers before decoherence dominates.

\subsection{Architectural recommendations}\label{subsec:recommendations}

Table~\ref{tab:recommendations} consolidates the design rules from
\S\S\ref{sec:overlay}--\ref{sec:hierarchical} into a single decision
flow keyed on AOL capacity $k_0$ and circuit routing rounds $R$.

\paragraph{Worked example.}
At $N = 1024$, $k_0 = N/4$, $R = 10$:
$k_0 \geq N/\log N$ selects the multi-layer-AOL row, giving
$T = O(\log N) \approx 10$ matching steps.
With $R = 10 \geq 4$ entanglement-assisted routing, we find further reduction in the long-range component
(Theorem~\ref{thm:teleport}).
For unknown $\pi$, apply the hybrid greedy--Valiant protocol of \S\ref{subsec:hybrid}.  Total predicted
depth: $\approx 2\log_2 N \approx 20$ matching steps.

\newpage
\onecolumngrid
\begin{figure*}[t]
  \centering
    \begin{tabular}{@{}ll@{}}
    \textbf{Foundation:} & Main theorem (\S\ref{sec:main}), Constructive derandomization (\S\ref{sec:derand}) \\
    $\downarrow$ & \\
    \textbf{Capacity:} & Overlay theorem (\S\ref{sec:overlay}) $\to$ Multi-layer AOL, Sparse overlays \\
    $\downarrow$ & \\
    \textbf{Structure:} & Abelian barrier (\S\ref{sec:algebraic}) $\to$ Affine derand.\ $\to$ Covering towers \\
    $\downarrow$ & \\
    \textbf{Algorithms:} & Entanglement (\S\ref{sec:entanglement}), Adaptive (\S\ref{sec:adaptive}) $\to$ Hybrid protocol \\
    $\downarrow$ & \\
    \textbf{Architecture:} & Hierarchical (\S\ref{sec:hierarchical}) $\to$ Recommendations (\S\ref{sec:application}) \\
    \end{tabular}
  \caption{Dependency graph for hypergraph qubit routing results.}
  \label{fig:dependencies}
\end{figure*}
\twocolumngrid

\section{Discussion}\label{sec:discussion}

\subsection{Summary of contributions}

Theorem~\ref{thm:main} establishes $\rt(H) = \Theta(\log N)$ for Ramanujan $(d,r)$-regular hypergraphs, with a constructive polynomial-time algorithm (Theorem~\ref{thm:constructive}). 
Extensions \S\ref{sec:overlay}--\S\ref{sec:hierarchical} translate this asymptotic into the language of qubit routing, where the overlay theorem closes the topology gap on grids, the abelian barrier rules out one natural construction while covering towers supply another, the entanglement crossover (\S\ref{sec:entanglement}) and displacement-energy stall (\S\ref{sec:adaptive}) suggest two complementary algorithmic alternatives, and hierarchical routing connects abstract covering theory to physical block decompositions. 
The architectural decision framework (Table~\ref{tab:recommendations}) consolidates these into a single recommendation flow indexed by AOL capacity and circuit depth.

\subsection{Dependencies between results}

The results form a directed acyclic graph of dependencies~(Figure \ref{fig:dependencies}).

\subsection{Limitations}

Explicit constants (Table~\ref{tab:constants}) are large for small $(d,r)$. 
Grid hypergraphs are not Ramanujan, so the bounds apply only by the virtual-overlay theorem (\S\ref{sec:overlay}). 
Further, numerical validations are limited to small grids with $N \leq 4096$ (simulation) and $N \leq 65{,}536$ (theoretical scaling). 
Asymptotic predictions have not been tested at the $N \sim 10^4$ scale relevant to next-generation hardware. 
The matching-model framing (Remark~\ref{rem:relaxation}) is a common denominator across reconfigurable platforms (AOD, AOL, ion shuttling, optical lattices), so bounds transfer to every such platform up to constant factors.

\subsection{Open problems}

Many open problems emerge from using hypergraphs to route physical qubits.

\begin{enumerate}[leftmargin=1.5em]
\item \textbf{Packing routing number.}  We conjecture
$\rt_H(H) = \Theta(\max\{\log N, N/(r\nu(H))\})$.
The lower bound is proved; the upper bound for projective planes ($\nu = 1$) remains open.

\item \textbf{Non-abelian algebraic overlays.}  The abelian barrier (Theorem~\ref{thm:abelian-barrier}) does not apply to non-abelian groups.  Do Ramanujan Cayley graphs on $\mathrm{SL}(2, \mathbb{Z}_p)$ (which exist by Lubotzky--Phillips--Sarnak \cite{lubotzky1988ramanujan}) admit efficient embedding as overlays on the 2D atom grid?

\item \textbf{Tight stall constants.}  Theorem~\ref{thm:greedy-stall} gives $\Phi_{\mathrm{stall}}/\Phi_0 = O(D^2/N)$, but implicit constants depend on overlay structure.
Can the exact leading coefficient be determined as a function of $d$ and $\beta$?
Simulations suggest $\Phi_{\mathrm{stall}}/\Phi_0 \approx D^2/(N/6)$ for random $d$-regular overlays (Appendix~\ref{app:adaptive}), but a tight analysis would sharpen the hybrid protocol's crossover point.

\item \textbf{AOD-constrained routing.}  Is the $\log N$ factor in $\Theta(\sqrt{N}\log N)$ \cite{constantinides2024optimal} tight for AOD-constrained routing on grids?

\item \textbf{Entanglement recycling.}  If Bell pairs can be regenerated at rate $O(N)$ per physical step, $R_{\mathrm{break}}$ may reduce to $O(1)$, making entanglement-assisted routing universally advantageous.

\item \textbf{Optimal voltage assignments.}  The best $\mathbb{Z}_2$ voltage for Fano plane lifts achieves $\beta = 0.5$.  Does this optimum have algebraic structure related to $\mathrm{Aut}(H_0)$?
\end{enumerate}

\subsection{Code and data availability}
All numerical results in the appendices are reproducible from Python scripts accessible via \href{https://github.com/jmcourtneyuga/hypergraph_routing}{this Github repository}. 
The verification appendix (Appendix~\ref{app:verification}) summarizes the cross-checks. 
The author used Claude (Anthropic) to assist in drafting code comments and documentation for the codebase.
All scientific content, algorithms, and analysis are the author's own.

\onecolumn
\begin{appendices}

\section{Multi-Layer AOL Numerical Results}\label{app:multilayer}

The simulations in this appendix are classical numerical results, which implements Lemma~\ref{lem:multilayer} and Theorem~\ref{thm:overlay} numerically. The script constructs $L$ independent random $d_0$-regular graphs via the configuration model and unions their adjacency matrices, returns $(\lambda_2, \beta, d'_{\mathrm{eff}})$ via dense eigendecomposition, runs Valiant two-phase routing with measured congestion and dilation, validates Proposition~\ref{prop:crosstalk}, and compares end-to-end speedup vs.\ a 2D-grid baseline is reported.

\subsection{Spectral gain from layer union}

All simulations use independent random $d_0 = 8$-regular overlay graphs, averaged over 5 trials. Table~\ref{tab:spectral-gain} reports $\beta(G_\cup)$ as a function of $L$.

\begin{table}[!htbp]
\centering
\begin{tabular}{@{}c|ccccc@{}}
\toprule
$N$ & $L=1$ & $L=2$ & $L=4$ & $L=8$ & $L=16$ \\
\midrule
 64 & 0.626 & 0.435 & 0.317 & 0.223 & 0.159 \\
100 & 0.641 & 0.461 & 0.328 & 0.225 & 0.156 \\
144 & 0.643 & 0.477 & 0.335 & 0.235 & 0.167 \\
256 & 0.650 & 0.476 & 0.336 & 0.240 & 0.168 \\
\bottomrule
\end{tabular}
\caption{Spectral ratio $\beta(G_\cup)$ for union of $L$ random 8-regular overlays.  The $1/\sqrt{L}$ scaling $\beta_L \approx \beta_1/\sqrt{L}$ is tight.}
\label{tab:spectral-gain}
\end{table}

All union graphs satisfy the Ramanujan bound $\lambda_2 \leq 2\sqrt{Ld_0 - 1}$ at every tested configuration.

\subsection{Routing depths}

Routing depths are measured by feeding random target permutations through Valiant's scheme on the union graph and the 2D-grid baseline. The script records the median over 20 random permutations per configuration; Table~\ref{tab:multilayer-routing} reports the speedup against the 2D grid.

\begin{table}[!htbp]
\centering
\begin{tabular}{@{}l|cccc@{}}
\toprule
Model & $d'_{\mathrm{eff}}$ & $\beta$ & $T_{\mathrm{med}}$ & Speedup vs.\ grid \\
\midrule
\multicolumn{5}{c}{$N = 100$ ($10 \times 10$ grid)} \\
\midrule
2D grid & 12 & 0.821 & 16 & $1.0\times$ \\
$L=1$ overlay & 8 & 0.646 & 10 & $1.6\times$ \\
$L=2$ overlays & 16 & 0.468 & 7 & $2.3\times$ \\
$L=4$ overlays & 32 & 0.346 & 6 & $2.7\times$ \\
\midrule
\multicolumn{5}{c}{$N = 144$ ($12 \times 12$ grid)} \\
\midrule
2D grid & 12 & 0.872 & 18 & $1.0\times$ \\
$L=1$ overlay & 8 & 0.635 & 10 & $1.8\times$ \\
$L=4$ overlays & 32 & 0.332 & 6 & $3.0\times$ \\
\bottomrule
\end{tabular}
\caption{Simulated Valiant routing depths ($d_0 = 8$, 20 trials each).}
\label{tab:multilayer-routing}
\end{table}

\subsection{Crosstalk validation}

Proposition~\ref{prop:crosstalk} validated at $k_0 = 32$: at $\gamma = 0.2$ (realistic optical crosstalk), retention is $\approx 71\%$ of ideal capacity.  At $\gamma = 0.5$, retention drops to 50\%, at which point checkerboard activation ($\lceil L/2 \rceil$ active layers) is preferable.

\section{Sparse Overlay Analysis}\label{app:sparse}

Numerical validations here build random $d$-regular graphs across the degree range $d \in \{4, 6, 8, 12, 16\}$ via the configuration model, then verify the Ramanujan bound by computing $\lambda_2$ exactly via dense eigendecomposition. Once complete, we run Valiant routing under variable per-step capacity $k \in \{N/2, N/4, N/8, N/\log N\}$ and report congestion-vs-capacity tradeoffs comparing dense ($d \approx 2\log_2 N$) and sparse ($d = 4$) overlays.

\subsection{Ramanujan property}

Random $d$-regular graphs satisfy the Ramanujan bound at all tested configurations, including ultra-sparse $d = 4$.  The spectral ratio follows Friedman's prediction $\beta \approx 2\sqrt{d-1}/d$.

\subsection{Dense vs.\ sparse comparison}

Table~\ref{tab:sparse-dense} compares routing depth at $N = 144$ for sparse ($d = 8$) and dense ($d = 12$) overlays under each capacity regime. Dense overlays consistently win because the spectral advantage of higher $d$ outweighs the matching-size cost.

\begin{table}[!htbp]
\centering
\begin{tabular}{@{}l|cccc@{}}
\toprule
Capacity $k$ & $T_{\mathrm{sparse}}$ ($d = 8$) & $T_{\mathrm{dense}}$ ($d = 12$) & Winner \\
\midrule
$N/2$        & 15 & 13 & Dense \\
$N/4$        & 25 & 22 & Dense \\
$N/8$        & 46 & 40 & Dense \\
$N/\!\log N$ & 42 & 36 & Dense \\
\bottomrule
\end{tabular}
\caption{Dense overlays dominate at all tested capacities ($N = 144$).}
\label{tab:sparse-dense}
\end{table}

\section{Algebraic Overlay Details}\label{app:algebraic}

We construct Cayley graphs on $\mathbb{Z}_n^2$ for the QR, Margulis--Gabber--Galil~\cite{carlson1981tixne, gabber1979explicit}, and uniformly-random generator families, then compute the spectrum via the character formula $\lambda_{(a,b)} = \sum_j 2\cos(2\pi(a g_j^{(1)} + b g_j^{(2)})/n)$ rather than numerical eigendecomposition, eliminating floating-point error. We verify the abelian Alon--Boppana barrier (Theorem~\ref{thm:abelian-barrier}) by sweeping $n$ and reporting the ratio $\lambda^*/(2\sqrt{d-1})$. Finally, we implement affine derandomization by enumerating (or sampling from) $\mathrm{GL}(2, \mathbb{Z}_n)$ and selecting the $(A, c)$ minimizing the worst-case congestion in Valiant's gather phase.

\subsection{Quadratic residue generators}

Table~\ref{tab:qr} reports the QR-generator spectral data on $\mathbb{Z}_p^2$ at degree 8 for $p \in \{7, 11, 17, 31, 53, 97\}$. The ratio $\lambda^*/(2\sqrt{d-1})$ is monotone increasing in $p$ and converges to $\approx 1.50$, well above the Ramanujan threshold of 1.

\begin{table}[!htbp]
\centering
\begin{tabular}{@{}c|ccccc@{}}
\toprule
$p$ & $N = p^2$ & $\lambda^*$ & Ram.\ bound $2\sqrt{d-1}$ & $\lambda^*/\text{bound}$ \\
\midrule
 7  &    49 & 5.74 & 5.29 & 1.08 \\
11  &   121 & 6.46 & 5.29 & 1.22 \\
17  &   289 & 7.21 & 5.29 & 1.36 \\
31  &   961 & 7.76 & 5.29 & 1.47 \\
53  &  2809 & 7.92 & 5.29 & 1.50 \\
97  &  9409 & 7.98 & 5.29 & 1.51 \\
\bottomrule
\end{tabular}
\caption{QR generators on $\mathbb{Z}_p^2$ at degree 8: $\lambda^*/2\sqrt{d-1} \to 1.50$.}
\label{tab:qr}
\end{table}

\subsection{Comparison of generator families}

Table~\ref{tab:cayley-comparison} compares the spectral ratios of the three generator families across $n \in \{7, 11, 17, 31, 41\}$: all three families exhibit $\beta \to 1$, confirming the abelian barrier holds independent of the algebraic construction.

\begin{table}[!htbp]
\centering
\begin{tabular}{@{}l|ccccc@{}}
\toprule
Family & $n=7$ & $n=11$ & $n=17$ & $n=31$ & $n=41$ \\
\midrule
QR generators & 0.718 & 0.775 & 0.901 & 0.970 & 0.983 \\
Margulis--Gabber--Galil & 0.718 & 0.881 & 0.949 & 0.985 & 0.991 \\
Random Cayley & 0.591 & 0.782 & 0.896 & 0.920 & 0.942 \\
\bottomrule
\end{tabular}
\caption{Spectral ratio $\beta$ for degree-8 Cayley graphs on $\mathbb{Z}_n^2$.  All families have $\beta \to 1$.}
\label{tab:cayley-comparison}
\end{table}

\subsection{Affine derandomization data}

Table~\ref{tab:affine} compares the median congestion + dilation under uniform-random $\sigma$ vs.\ best-affine and best-translation choices. Pure translations $\sigma(v) = v + c$ outperform general affine maps because they preserve the group structure and eliminate scatter-phase variance, achieving 27--29\% congestion reduction at small $n$.

\begin{table}[!htbp]
\centering
\begin{tabular}{@{}l|cc|cc@{}}
\toprule
& \multicolumn{2}{c|}{$n = 7$ ($N = 49$)} & \multicolumn{2}{c}{$n = 11$ ($N = 121$)} \\
Method & Median $C\!+\!D$ & Improv.\ & Median $C\!+\!D$ & Improv.\ \\
\midrule
Random $\sigma$ & 7 & --- & 11 & --- \\
Best affine ($Av + c$) & 6 & 14\% & 9 & 18\% \\
Best translation ($v + c$) & 5 & 29\% & 8 & 27\% \\
\bottomrule
\end{tabular}
\caption{Translations outperform general affine maps on Cayley graphs.}
\label{tab:affine}
\end{table}

\section{Covering Tower Numerics}\label{app:covering}

Here, we apply Theorem~\ref{thm:covering-tower} numerically. We define the base hypergraphs (Fano plane $\mathrm{PG}(2,2)$, $\mathrm{PG}(2,3)$) via their incidence structure and compute the SFM Ramanujan bound $|\lambda - (r-2)| \leq 2\sqrt{(d-1)(r-1)}$ at the base. 
$k$-fold voltage coverings lift each hyperedge with a $\mathbb{Z}_k$ shift, where we exhaustively and randomly search voltage assignments and records the Ramanujan fraction. Finally, we implement the recursive cross-fiber/fiber-preserving decomposition and run Valiant routing on each level.

\subsection{Fano plane voltage coverings}

For $k$-fold coverings of the Fano plane via $\mathbb{Z}_k$ voltage assignments, Table~\ref{tab:fano-coverings} reports the Ramanujan fraction across $k \in \{2, 3, 4, 5, 7\}$.

\begin{table}[!htbp]
\centering
\begin{tabular}{@{}c|ccccc@{}}
\toprule
$k$ & $N = 7k$ & Best $\beta$ & Mean $\beta$ & Ramanujan fraction \\
\midrule
2 & 14 & 0.500 & 0.665 & 93.8\% \\
3 & 21 & 0.562 & 0.686 & 96.5\% \\
4 & 28 & 0.582 & 0.733 & 89.0\% \\
5 & 35 & 0.597 & 0.722 & 95.0\% \\
7 & 49 & 0.638 & 0.735 & 93.5\% \\
\bottomrule
\end{tabular}
\caption{Fano plane voltage coverings: overwhelming Ramanujan fractions.}
\label{tab:fano-coverings}
\end{table}

Exhaustive search ($k = 2$, all $2^7 = 128$ assignments): 93.8\% Ramanujan.  Best assignment $[1,0,1,1,0,1,1]$ achieves $\beta = 0.5$ with new eigenvalues $\{3, 3, 1, 1, -2, -3, -3\}$.

\subsection{Recursive routing validation}

Table~\ref{tab:tower-routing} reports measured Valiant routing depth on the lift across tower levels $H_0$ (base), $H_1$ ($k=2$), $H_2$ ($k=4$). The ratio $T/\log_2 N$ stays in the range $[1.07, 1.66]$, confirming the $O(\log N)$ scaling of Theorem~\ref{thm:covering-tower}.

\begin{table}[!htbp]
\centering
\begin{tabular}{@{}l|cccc@{}}
\toprule
Level & $N$ & $\beta$ & $T_{\mathrm{med}}$ & $T/\log_2 N$ \\
\midrule
$H_0$ (base) & 7 & $0.167$ & 3 & 1.07 \\
$H_1$ ($k=2$) & 14 & 0.500 & 5 & 1.31 \\
$H_2$ ($k=4$) & 28 & 0.859 & 8 & 1.66 \\
\bottomrule
\end{tabular}
\caption{Routing scales as $O(\log N)$ across covering tower levels.}
\label{tab:tower-routing}
\end{table}

Cross-fiber fraction: $85.7\%$ at $k = 2$ (predicted $1 - 1/7 = 85.7\%$).  All clique expansion edges in the voltage covering are cross-fiber (100\%).

\section{Entanglement-Assisted Routing Details}\label{app:entanglement}

To validate entanglement-assisted routing, we build a $d_{\mathrm{ent}}$-regular Ramanujan overlay $G_{\mathrm{ent}}$ for the entanglement graph and measure the Valiant routing depth $T_{\mathrm{route}}$ via shortest-path congestion analysis on $G_{\mathrm{ent}}$. 
We compute the entanglement distribution cost $T_{\mathrm{dist}} = O(d_{\mathrm{ent}} \sqrt{N}/k)$ from average grid distance $\bar d_{\mathrm{grid}} \approx \sqrt{N}/2$, sweeping the number of routing rounds $R$ and reports the amortized cost $T_{\mathrm{amort}} = T_{\mathrm{route}} + T_{\mathrm{dist}}/R$. We compute the empirical break-even point $R_{\mathrm{break}} = T_{\mathrm{dist}}/(T_{\mathrm{phys}} - T_{\mathrm{route}})$ at each tested $N$, then implement the hybrid protocol that teleports atoms with grid distance $> n/4$ and physically routes the rest.

\subsection{Teleportation routing depth}

Table~\ref{tab:teleport} reports $T_{\mathrm{route}}$ on random Ramanujan overlays with $d_{\mathrm{ent}} \in \{8, 16, 32\}$ at $N \in \{100, 256\}$. The ratio $T/\log_2 N$ ranges from 0.75 to 1.20, confirming the Theorem~\ref{thm:teleport} scaling.

\begin{table}[!htbp]
\centering
\begin{tabular}{@{}cc|ccc@{}}
\toprule
$N$ & $d_{\mathrm{ent}}$ & $\beta$ & $T_{\mathrm{route}}$ & $T/\log_2 N$ \\
\midrule
100  &  8 & 0.642 &  8 & 1.20 \\
100  & 16 & 0.452 &  6 & 0.90 \\
256  &  8 & 0.649 &  9 & 1.12 \\
256  & 16 & 0.483 &  7 & 0.88 \\
256  & 32 & 0.339 &  6 & 0.75 \\
\bottomrule
\end{tabular}
\caption{Teleportation routing depth on random Ramanujan overlays.}
\label{tab:teleport}
\end{table}

\subsection{Crossover analysis}

Table~\ref{tab:crossover} reports the empirical $R_{\mathrm{break}}$ at $d_{\mathrm{ent}} = 16$ for $N \in \{256, 1024, 4096, 10000, 40000\}$. The crossover is remarkably stable at $R_{\mathrm{break}} \approx 4$ across two orders of magnitude in $N$; the side-by-side $\sqrt{N}/\log_2 N$ comparison shows the naive theoretical estimate is much larger and grows with $N$.

\begin{table}[!htbp]
\centering
\begin{tabular}{@{}c|cccc|c@{}}
\toprule
$N$ & $T_{\mathrm{route}}$ & $T_{\mathrm{phys}}$ & $T_{\mathrm{dist}}$ & $R_{\mathrm{break}}$ & $\sqrt{N}/\log_2 N$ \\
\midrule
   256 & 5 & 24 &  86 & 4.5 & 2.0 \\
  1024 & 7 & 48 & 171 & 4.2 & 3.2 \\
  4096 & 8 & 96 & 342 & 3.9 & 5.3 \\
 10000 & 9 & 150 & 534 & 3.8 & 7.5 \\
 40000 & 10 & 300 & 1067 & 3.7 & 13.1 \\
\bottomrule
\end{tabular}
\caption{Crossover at $d_{\mathrm{ent}} = 16$: $R_{\mathrm{break}} \approx 4$ is remarkably stable.}
\label{tab:crossover}
\end{table}

\subsection{Hybrid protocol}

Teleporting atoms with grid distance $> n/4$ and physically routing the rest, Table~\ref{tab:hybrid-ent} reports the total cost as a function of the threshold $D_{\mathrm{thresh}}$. At $D_{\mathrm{thresh}} = 4$ on $N = 1024$, the protocol achieves $6\times$ speedup over pure physical routing.

\begin{table}[!htbp]
\centering
\begin{tabular}{@{}c|cc|c@{}}
\toprule
$D_{\mathrm{thresh}}$ & Frac.\ teleported & $T_{\mathrm{cleanup}}$ & $T_{\mathrm{total}}$ \\
\midrule
4  & 96\% &  1 &  8 \\
8  & 86\% &  3 & 10 \\
12 & 70\% &  8 & 15 \\
16 & 47\% & 17 & 24 \\
\bottomrule
\end{tabular}
\caption{Hybrid protocol at $N = 1024$: $T_{\mathrm{total}} = 8$ ($6\times$ speedup) with $D_{\mathrm{thresh}} = 4$.}
\label{tab:hybrid-ent}
\end{table}

\section{Dynamic Adaptive Overlay Details}\label{app:adaptive}

We build random $d$-regular overlays on the $n \times n$ grid and computes per-atom grid-distance displacements $\rho_t(v)$, then apply a greedy matching protocl to scan all overlay edges and selects the maximum-weight matching reducing the squared-displacement potential $\Phi_t = \sum_v \rho_t(v)^2$. We iterate greedy matchings until $\Phi$ no longer decreases (\emph{stall}) and records $T_{\mathrm{stall}}$ and $\Phi_{\mathrm{stall}}/\Phi_0$. 
We test Assumption~\ref{ass:concentration} by binning steps by current $\Phi$ level and checking the tail $\mathbb{P}[\Delta\Phi_t < (1/2)\mathbb{E}[\Delta\Phi_t \mid \Phi_t]]$, then use MW overlay selection and compute per-trial competitive ratios.

\subsection{Greedy displacement matching}

Table~\ref{tab:greedy} reports $T_{\mathrm{stall}}$ and $\Phi_{\mathrm{stall}}/\Phi_0$ for $n \in \{4, 6, 8, 10, 12, 16\}$ ($N$ up to $256$). The stall fraction stays in $[0.166, 0.178]$ for $N \geq 36$, consistent with the $O(D^2/N)$ prediction of Theorem~\ref{thm:greedy-stall}.

\begin{table}[!htbp]
\centering
\begin{tabular}{@{}cc|cccc@{}}
\toprule
$n$ & $N$ & Step-0 $\delta$ & $T_{\mathrm{stall}}$ & $\Phi_{\mathrm{stall}}/\Phi_0$ \\
\midrule
 4 &  16 & 0.57 & 3.0 & 0.178 \\
 6 &  36 & 0.57 & 4.2 & 0.174 \\
 8 &  64 & 0.59 & 5.3 & 0.166 \\
10 & 100 & 0.63 & 5.8 & 0.177 \\
12 & 144 & 0.63 & 6.5 & 0.175 \\
16 & 256 & --- & 7.5 & 0.171 \\
\bottomrule
\end{tabular}
\caption{Greedy matching on $d = 8$ random overlays.  Stall fraction $\approx 0.17$ at these scales, consistent with $O(D^2/N)$ (Theorem~\ref{thm:greedy-stall}).}
\label{tab:greedy}
\end{table}

Monotonicity: 785 greedy steps tested, zero $\Phi$-increasing violations.

\subsection{Stall point scaling}

$T_{\mathrm{stall}} \approx 0.9\log_2 N$ (linear regression: $T = 1.10 \cdot \log_2 N - 1.24$, positive slope confirmed).

\subsection{MW overlay selection}

With $\mathcal{F} = \{d{=}4, d{=}8, K_N\}$ at $N = 36$: mean $T_{\mathrm{MW}} = 39.3$, mean $T_{\mathrm{best}} = 23.0$, CR $= 1.71$.  Per-trial CR ranges from 0.35 to 6.8, indicating high variance at small $N$.

\section{Hierarchical Routing Numerics}\label{app:hierarchical}

We build the multi-level block hierarchy with arbitrary block size $b$; at level $\ell$, the grid decomposes into $(n/b^\ell)^2$ blocks of size $b^\ell \times b^\ell$. A at each level, we build a Ramanujan overlay on the block graph and runs Valiant routing for inter-block transport, then sum the per-level depths to obtain the total $T_{\mathrm{hier}}$. 
Sweeping block size $b$ finds the optimum $b^\star \approx \sqrt{n}$, where we compare with the covering-tower prediction $T_{\mathrm{tower}} = (L \log_2 k + \log_2 N_0)/(1 - \bar\beta)$ to validate the equivalence claim of \S\ref{subsec:covering-towers}.

\subsection{Per-level depth decomposition}

Table~\ref{tab:hierarchical} reports the hierarchical and flat routing depths at $n \in \{8, 16, 32, 64\}$ with $b$ chosen near $\sqrt{n}$. The hierarchical-to-flat ratio averages $\approx 0.67$ for $n \geq 16$, demonstrating a 33\% improvement over flat Valiant routing at practical scales.

\begin{table}[!htbp]
\centering
\begin{tabular}{@{}ccc|cccc@{}}
\toprule
$n$ & $N$ & $b$ & $L$ & $T_{\mathrm{hier}}$ & $T_{\mathrm{flat}}$ & Ratio \\
\midrule
 8  &    64 & 2 & 3 & 37.7  & 37.0  & 1.020 \\
16  &   256 & 4 & 2 & 30.3  & 46.3  & 0.655 \\
32  &  1024 & 4 & 3 & 47.3  & 59.5  & 0.795 \\
64  &  4096 & 8 & 2 & 48.0  & 71.0  & 0.676 \\
\bottomrule
\end{tabular}
\caption{Hierarchical vs.\ flat routing depth.  With $b \approx \sqrt{n}$, ratio $\approx 0.67$.}
\label{tab:hierarchical}
\end{table}

\subsection{Scaling study}

Fitting across $N \in \{16, \ldots, 65536\}$: $T_{\mathrm{flat}} \approx 6.10 \cdot \log_2 N - 2.58$ and $T_{\mathrm{hier}}/\log_2 N \approx 0.62 \cdot \log_2\log_2 N + 2.61$ (empirical fit with $b \approx \sqrt{n}$), confirming the $O(\log^2 N / \log b)$ bound of Theorem~\ref{thm:hierarchical}.

\subsection{Covering tower prediction}

Table~\ref{tab:tower-equiv} compares hierarchical and covering-tower predictions; agreement converges to within $0.4\%$ at $n = 64$.

\begin{table}[!htbp]
\centering
\begin{tabular}{@{}ccc|ccc@{}}
\toprule
$n$ & $b$ & $L$ & $T_{\mathrm{hier}}$ & $T_{\mathrm{tower}}$ & Difference \\
\midrule
 8 & 2 & 3 & 28.7 & 18.3 & 36\% \\
16 & 4 & 2 & 29.7 & 32.5 & 9\% \\
32 & 4 & 3 & 49.5 & 35.8 & 28\% \\
64 & 8 & 2 & 48.3 & 48.5 & 0.4\% \\
\bottomrule
\end{tabular}
\caption{Covering tower prediction converges to hierarchical depth at large $n$.}
\label{tab:tower-equiv}
\end{table}

\section{Independent Verification}\label{app:verification}

Each direction's numerical claims were verified by classical scripts that re-derive results from scratch using a different graph-construction routine than the corresponding direction's primary script. Table~\ref{tab:verification} summarizes the cross-checks across three directions.

\begin{table}[!htbp]
\centering
\begin{tabular}{@{}l|cl@{}}
\toprule
\textbf{Direction} & \textbf{Claims} & \textbf{Result} \\
\midrule
5 (Covering towers) & 5 & All passed \\
2 (Hierarchical) & 5 & All passed (5.0\% threshold on tower equiv.) \\
7 (Dynamic adaptive) & 5 & All passed (785-step monotonicity) \\
\bottomrule
\end{tabular}
\caption{Independent verification summary.}
\label{tab:verification}
\end{table}

\end{appendices}
\bibliography{refs}

@article{song2023hypergraph,
  title={Hypergraph coverings and Ramanujan Hypergraphs},
  author={Song, Yi-Min and Fan, Yi-Zheng and Miao, Zhengke},
  journal={arXiv preprint arXiv:2310.01771},
  year={2023}
}

@article{nenadov2023routing,
  title={Routing permutations on spectral expanders via matchings},
  author={Nenadov, Rajko},
  journal={Combinatorica},
  volume={43},
  number={4},
  pages={737--742},
  year={2023},
  publisher={Springer}
}

@inproceedings{alon1993routing,
  title={Routing permutations on graphs via matchings},
  author={Alon, Noga and Chung, Fan RK and Graham, Ronald L},
  booktitle={Proceedings of the twenty-fifth annual ACM symposium on Theory of Computing},
  pages={583--591},
  year={1993}
}

@inproceedings{marcus2013interlacing,
  title={Interlacing families I: Bipartite Ramanujan graphs of all degrees},
  author={Marcus, Adam and Spielman, Daniel A and Srivastava, Nikhil},
  booktitle={2013 IEEE 54th Annual Symposium on Foundations of computer science},
  pages={529--537},
  year={2013},
  organization={IEEE}
}

@article{leighton1999fast,
  title={Fast algorithms for finding O (congestion+ dilation) packet routing schedules},
  author={Leighton, Tom and Maggs, Bruce and Richa, Andrea W},
  journal={Combinatorica},
  volume={19},
  number={3},
  pages={375--401},
  year={1999},
  publisher={Springer}
}

@article{bluvstein2022quantum,
  title={A quantum processor based on coherent transport of entangled atom arrays},
  author={Bluvstein, Dolev and Levine, Harry and Semeghini, Giulia and Wang, Tout T and Ebadi, Sepehr and Kalinowski, Marcin and Keesling, Alexander and Maskara, Nishad and Pichler, Hannes and Greiner, Markus and others},
  journal={Nature},
  volume={604},
  number={7906},
  pages={451--456},
  year={2022},
  publisher={Nature Publishing Group UK London}
}

@article{evered2023high,
  title={High-fidelity parallel entangling gates on a neutral-atom quantum computer},
  author={Evered, Simon J and Bluvstein, Dolev and Kalinowski, Marcin and Ebadi, Sepehr and Manovitz, Tom and Zhou, Hengyun and Li, Sophie H and Geim, Alexandra A and Wang, Tout T and Maskara, Nishad and others},
  journal={Nature},
  volume={622},
  number={7982},
  pages={268--272},
  year={2023},
  publisher={Nature Publishing Group UK London}
}

@inproceedings{rabani1996distributed,
  title={Distributed packet switching in arbitrary networks},
  author={Rabani, Yuval and Tardos, {\'E}va},
  booktitle={Proceedings of the twenty-eighth annual ACM symposium on Theory of computing},
  pages={366--375},
  year={1996}
}

@article{wurtz2023aquila,
  title={Aquila: Quera's 256-qubit neutral-atom quantum computer},
  author={Wurtz, Jonathan and Bylinskii, Alexei and Braverman, Boris and Amato-Grill, Jesse and Cantu, Sergio H and Huber, Florian and Lukin, Alexander and Liu, Fangli and Weinberg, Phillip and Long, John and others},
  journal={arXiv preprint arXiv:2306.11727},
  year={2023}
}

@article{reichardt2024logical,
  title={Logical computation demonstrated with a neutral atom quantum processor},
  author={Reichardt, Ben W and Paetznick, Adam and Aasen, David and Basov, Ivan and Bello-Rivas, Juan M and Bonderson, Parsa and Chao, Rui and van Dam, Wim and Hastings, Matthew B and Paz, Andres and others},
  journal={arXiv preprint arXiv:2411.11822},
  volume={10},
  year={2024}
}

@article{valiant1982scheme,
  title={A scheme for fast parallel communication},
  author={Valiant, Leslie G.},
  journal={SIAM journal on computing},
  volume={11},
  number={2},
  pages={350--361},
  year={1982},
  publisher={SIAM}
}

@article{joag1983negative,
  title={Negative association of random variables with applications},
  author={Joag-Dev, Kumar and Proschan, Frank},
  journal={The Annals of Statistics},
  pages={286--295},
  year={1983},
  publisher={JSTOR}
}

@article{dubhashi1996balls,
  title={Balls and bins: A study in negative dependence},
  author={Dubhashi, Devdatt P and Ranjan, Desh},
  journal={BRICS Report Series},
  volume={3},
  number={25},
  year={1996}
}

@article{feldman1988wide,
  title={Wide-sense nonblocking networks},
  author={Feldman, Paul and Friedman, Joel and Pippenger, Nicholas},
  journal={SIAM Journal on Discrete Mathematics},
  volume={1},
  number={2},
  pages={158--173},
  year={1988},
  publisher={SIAM}
}

@article{constantinides2024optimal,
  title={Optimal routing protocols for reconfigurable atom arrays},
  author={Constantinides, Nathan and Fahimniya, Ali and Devulapalli, Dhruv and Bluvstein, Dolev and Gullans, Michael J and Porto, JV and Childs, Andrew M and Gorshkov, Alexey V},
  journal={arXiv preprint arXiv:2411.05061},
  year={2024}
}

@inproceedings{stade2024abstract,
  title={An abstract model and efficient routing for logical entangling gates on zoned neutral atom architectures},
  author={Stade, Yannick and Schmid, Ludwig and Burgholzer, Lukas and Wille, Robert},
  booktitle={2024 IEEE International Conference on Quantum Computing and Engineering (QCE)},
  volume={1},
  pages={784--795},
  year={2024},
  organization={IEEE}
}

@inproceedings{wang2024atomique,
  title={Atomique: A quantum compiler for reconfigurable neutral atom arrays},
  author={Wang, Hanrui and Liu, Pengyu and Tan, Daniel Bochen and Liu, Yilian and Gu, Jiaqi and Pan, David Z and Cong, Jason and Acar, Umut A and Han, Song},
  booktitle={2024 ACM/IEEE 51st Annual International Symposium on Computer Architecture (ISCA)},
  pages={293--309},
  year={2024},
  organization={IEEE}
}

@inproceedings{hsieh2026scalable,
  title={A Scalable and High-Quality Qubit Mapping and Shuttling Framework for Neutral Atom Quantum Devices},
  author={Hsieh, Sung-Ying and Mak, Wai-Kei},
  booktitle={2026 31st Asia and South Pacific Design Automation Conference (ASP-DAC)},
  pages={184--190},
  year={2026},
  organization={IEEE}
}

@inproceedings{stade2025routing,
  title={Routing-aware placement for zoned neutral atom-based quantum computing},
  author={Stade, Yannick and Lin, Wan-Hsuan and Cong, Jason and Wille, Robert},
  booktitle={2025 IEEE/ACM International Conference On Computer Aided Design (ICCAD)},
  pages={1--9},
  year={2025},
  organization={IEEE}
}

@article{romao2026multiq,
  title={MultiQ: Multi-Programming Neutral Atom Quantum Architectures},
  author={Rom{\~a}o, Francisco and Vonk, Daniel and Giortamis, Emmanuil and Sprokholt, Dennis and Bhatotia, Pramod},
  journal={arXiv preprint arXiv:2601.08504},
  year={2026}
}

@book{cassels1965introduction,
  title={An introduction to Diophantine approximation},
  author={Cassels, John William Scott},
  number={45},
  year={1965},
  publisher={CUP Archive}
}

@article{guo2025acousto,
  title={Acousto-optic lens for 3D shuttling of atoms in a neutral atom quantum computer},
  author={Guo, Zhichao and van Herk, Rik AH and Vredenbregt, Edgar JD and Kokkelmans, Servaas JJMF},
  journal={arXiv preprint arXiv:2510.09398},
  year={2025}
}

@article{yuan2025full,
  title={Full characterization of the depth overhead for quantum circuit compilation with arbitrary qubit connectivity constraint},
  author={Yuan, Pei and Zhang, Shengyu},
  journal={Quantum},
  volume={9},
  pages={1757},
  year={2025},
  publisher={Verein zur F{\"o}rderung des Open Access Publizierens in den Quantenwissenschaften}
}

@book{friedman2008proof,
  title={A proof of Alon's second eigenvalue conjecture and related problems},
  author={Friedman, Joel},
  year={2008},
  publisher={American Mathematical Soc.}
}

@article{alon1986eigenvalues,
  title={Eigenvalues and expanders},
  author={Alon, Noga},
  journal={Combinatorica},
  volume={6},
  number={2},
  pages={83--96},
  year={1986},
  publisher={Springer}
}

@phdthesis{carlson1981tixne,
  title={Tixne-Space and Size-Space Tradeofis for Oblivious Computations},
  author={Carlson, David Allen},
  year={1981},
  school={Brown University}
}

@inproceedings{gabber1979explicit,
author = {Gabber, Ofer and Galil, Zvi},
title = {Explicit constructions of linear size superconcentrators},
year = {1979},
publisher = {IEEE Computer Society},
address = {USA},
url = {https://doi.org/10.1109/SFCS.1979.16},
doi = {10.1109/SFCS.1979.16},
booktitle = {Proceedings of the 20th Annual Symposium on Foundations of Computer Science},
pages = {364–370},
numpages = {7},
series = {SFCS '79}
}

@article{arora2012multiplicative,
  title={The multiplicative weights update method: a meta-algorithm and applications},
  author={Arora, Sanjeev and Hazan, Elad and Kale, Satyen},
  journal={Theory of computing},
  volume={8},
  number={1},
  pages={121--164},
  year={2012},
  publisher={Theory of Computing Exchange}
}

@article{lubotzky1988ramanujan,
  title={Ramanujan graphs},
  author={Lubotzky, Alexander and Phillips, Ralph and Sarnak, Peter},
  journal={Combinatorica},
  volume={8},
  number={3},
  pages={261--277},
  year={1988},
  publisher={Springer-Verlag Berlin/Heidelberg}
}

@article{friedman2006spectral,
  title={Spectral estimates for abelian Cayley graphs},
  author={Friedman, Joel and Murty, Ram and Tillich, Jean-Pierre},
  journal={Journal of Combinatorial Theory, Series B},
  volume={96},
  number={1},
  pages={111--121},
  year={2006},
  publisher={Elsevier}
}

@article{tutte1947factorization,
  title={The factorization of linear graphs},
  author={Tutte, William T},
  journal={Journal of the London Mathematical Society},
  volume={1},
  number={2},
  pages={107--111},
  year={1947},
  publisher={Wiley Online Library}
}

@article{chung1989diameters,
  title={Diameters and eigenvalues},
  author={Chung, Fan RK},
  journal={Journal of the American Mathematical Society},
  volume={2},
  number={2},
  pages={187--196},
  year={1989}
}

@article{hoory2006expander,
  title={Expander graphs and their applications},
  author={Hoory, Shlomo and Linial, Nathan and Wigderson, Avi},
  journal={Bulletin of the American Mathematical Society},
  volume={43},
  number={4},
  pages={439--561},
  year={2006}
}

@article{alon1985lambda1,
  title={$\lambda$1, isoperimetric inequalities for graphs, and superconcentrators},
  author={Alon, Noga and Milman, Vitali D},
  journal={Journal of Combinatorial Theory, Series B},
  volume={38},
  number={1},
  pages={73--88},
  year={1985},
  publisher={Elsevier}
}

@article{bapat2023advantages,
  title={Advantages and limitations of quantum routing},
  author={Bapat, Aniruddha and Childs, Andrew M and Gorshkov, Alexey V and Schoute, Eddie},
  journal={PRX quantum},
  volume={4},
  number={1},
  pages={010313},
  year={2023},
  publisher={APS}
}

@article{friedman1995second,
  title={On the second eigenvalue of hypergraphs},
  author={Friedman, Joel and Wigderson, Avi},
  journal={Combinatorica},
  volume={15},
  number={1},
  pages={43--65},
  year={1995},
  publisher={Springer-Verlag Berlin/Heidelberg}
}

@article{lubotzky2005ramanujan,
  title={Ramanujan complexes of type A d},
  author={Lubotzky, Alexander and Samuels, Beth and Vishne, Uzi},
  journal={Israel journal of Mathematics},
  volume={149},
  number={1},
  pages={267--299},
  year={2005},
  publisher={Springer}
}

@article{henriet2020quantum,
  title={Quantum computing with neutral atoms},
  author={Henriet, Lo{\"\i}c and Beguin, Lucas and Signoles, Adrien and Lahaye, Thierry and Browaeys, Antoine and Reymond, Georges-Olivier and Jurczak, Christophe},
  journal={Quantum},
  volume={4},
  pages={327},
  year={2020},
  publisher={Verein zur F{\"o}rderung des Open Access Publizierens in den Quantenwissenschaften}
}

@article{browaeys2020many,
  title={Many-body physics with individually controlled Rydberg atoms},
  author={Browaeys, Antoine and Lahaye, Thierry},
  journal={Nature Physics},
  volume={16},
  number={2},
  pages={132--142},
  year={2020},
  publisher={Nature Publishing Group UK London}
}

@article{levine2019v,
  title={V. Vuleti c, H. Pichler, and MD Lukin,“Parallel implementation of high-fidelity multiqubit gates with neutral atoms,”},
  author={Levine, Harry and Keesling, Alexander and Semeghini, Giulia and Omran, Ahmed and Wang, Tout T and Ebadi, Sepehr and Bernien, Hannes and Greiner, Markus},
  journal={Phys. Rev. Lett},
  volume={123},
  pages={170503},
  year={2019}
}

@article{ebadi2021quantum,
  title={Quantum phases of matter on a 256-atom programmable quantum simulator},
  author={Ebadi, Sepehr and Wang, Tout T and Levine, Harry and Keesling, Alexander and Semeghini, Giulia and Omran, Ahmed and Bluvstein, Dolev and Samajdar, Rhine and Pichler, Hannes and Ho, Wen Wei and others},
  journal={Nature},
  volume={595},
  number={7866},
  pages={227--232},
  year={2021},
  publisher={Nature Publishing Group UK London}
}

@article{tan2024compiling,
  title={Compiling quantum circuits for dynamically field-programmable neutral atoms array processors},
  author={Tan, Daniel Bochen and Bluvstein, Dolev and Lukin, Mikhail D and Cong, Jason},
  journal={Quantum},
  volume={8},
  pages={1281},
  year={2024},
  publisher={Verein zur F{\"o}rderung des Open Access Publizierens in den Quantenwissenschaften}
}

@article{bordenave2015new,
  title={A new proof of Friedman's second eigenvalue Theorem and its extension to random lifts},
  author={Bordenave, Charles},
  journal={arXiv preprint arXiv:1502.04482},
  year={2015}
}

@article{bluvstein2024logical,
  title={Logical quantum processor based on reconfigurable atom arrays},
  author={Bluvstein, Dolev and Evered, Simon J and Geim, Alexandra A and Li, Sophie H and Zhou, Hengyun and Manovitz, Tom and Ebadi, Sepehr and Cain, Madelyn and Kalinowski, Marcin and Hangleiter, Dominik and others},
  journal={Nature},
  volume={626},
  number={7997},
  pages={58--65},
  year={2024},
  publisher={Nature Publishing Group UK London}
}
\end{document}